\documentclass[12pt]{article}

\usepackage{amsmath}

\usepackage[sectionbib]{natbib}
\usepackage{bibunits}
\defaultbibliographystyle{chicago} 
\defaultbibliography{bibliography}
\usepackage{algorithm, epigraph, varwidth}
\usepackage[noend]{algpseudocode}

\usepackage[dvipsnames]{xcolor}
\usepackage{amssymb,amsthm}
\usepackage{booktabs}
\usepackage{hyperref}
\usepackage{makecell}

\date{\today}

\hypersetup{
    colorlinks =true,
    allcolors =NavyBlue,
    allbordercolors =white,
}

\usepackage{etoolbox}
\makeatletter
\patchcmd{\hyper@makecurrent}{%
    \ifx\Hy@param\Hy@chapterstring
        \let\Hy@param\Hy@chapapp
    \fi
}{%
    \iftoggle{inappendix}{
        \@checkappendixparam{chapter}%
        \@checkappendixparam{section}%
        \@checkappendixparam{subsection}%
        \@checkappendixparam{subsubsection}%
        \@checkappendixparam{paragraph}%
        \@checkappendixparam{subparagraph}%
    }{}%
}{}{\errmessage{failed to patch}}

\newcommand*{\@checkappendixparam}[1]{%
    \def\@checkappendixparamtmp{#1}%
    \ifx\Hy@param\@checkappendixparamtmp
        \let\Hy@param\Hy@appendixstring
    \fi
}
\makeatletter

\newtoggle{inappendix}
\togglefalse{inappendix}

\apptocmd{\appendix}{\toggletrue{inappendix}}{}{\errmessage{failed to patch}}

\usepackage{graphicx}
\theoremstyle{plain}
 \newtheorem{theorem}{Theorem}
 \newtheorem*{theorem*}{Theorem}
 \newtheorem{lemma}{Lemma}
 \newtheorem{proposition}{Proposition}
 \newtheorem*{proposition*}{Proposition}
 \newtheorem*{lemma*}{Lemma}
 \newtheorem*{corollary*}{Corollary}
 \newtheorem{corollary}{Corollary}
 \newtheorem*{conjecture*}{Conjecture}
 \theoremstyle{definition}
 \newtheorem{example}{Example}
\usepackage[left=1.25in, right=1.25in,top=1.25in,bottom=1.25in]{geometry}

\usepackage{amssymb}
\usepackage{bbm}
 \usepackage{mathrsfs}
 \usepackage{tikz, tkz-euclide}
 \usetikzlibrary{shapes.misc, positioning, arrows,decorations.markings}
 \usetikzlibrary{calc}
 
 \usepackage{graphicx}
\usepackage{caption}
\usepackage{subcaption}

\usepackage{pgfplots}

\pgfdeclarelayer{background}
\pgfsetlayers{background,main}
\tikzset{
    expand bubble/.style={
        preaction={draw,line width=10.4pt},
        white,fill,draw,line width=10pt,
    },
}

\newlength\Radius
\def \1{\ensuremath\mathbbm{1}}

\begin{document}

\title{Interpersonally Comparable Utility\footnote{We are grateful to Chris Chambers, Luciano Pomatto, Antonio Rangel, Axel Niemeyer, and Omer Tamuz for helpful discussions over the course of this project.} }
\author{Peter Caradonna\footnote{Division of Humanities and Social Sciences, Caltech.  Email: \href{mailto:ppc@caltech.edu}{\nolinkurl{ppc@caltech.edu}}.}  $\,\!\!$ and Zachary Raines\footnote{Sustaining Engineering Group, Canonical USA Inc. Email: \href{mailto:dev@zmraines.com}{\nolinkurl{dev@zmraines.com}}.}}
\maketitle

\begin{abstract}
We develop a theory of measurement scales for utility functions, based on a generalization of the notion of numeraire commodity. Every sufficiently well-behaved utility is denominated in some scale of this form, and conversely, the choice of a compatible scale unit uniquely identifies utilities up to an additive constant.  We define a profile of utilities to be interpersonally comparable precisely when they are denominated in a common measurement unit. We study when profiles of comparable utilities exist, as well as how a planner ought to aggregate them, and provide applications to social choice and welfare economics. 
\end{abstract}

\epigraph{\emph{``If we want to compare utilities between two people much the same problem exists $\ldots$ we do not seem to have any `outside thing' which can be measured by both persons to ascertain the relation between the units."}}{---\cite{luce1957games} p.\ 34.}

\section{Introduction}

The question of how to obtain cardinally meaningful, and interpersonally comparable, measures of individual well-being is foundational to much of modern economics.\footnote{For example, cost-benefit analysis weighs the gains of some individuals against the losses of others (\citealt{chipman1976scope,chipman1980compensating, schlee2023money}), optimal taxation considers reform incidence across households (\citealt{guesnerie1977direction, tirole1981tax, saez2016generalized,sher2024generalized}); social choice compares both utility levels, and differences, across individuals (\citealt{sen1970collective, sen1974informational, dAspremont1977, roberts1980interpersonal, roberts1980possibility}).} However, despite the centrality of these notions, they have frustrated nearly all attempts to give them concrete, formal meaning within the standard framework of economic theory.\footnote{For an excellent overview, see \cite{hammond1990}.} This has given rise to a broad consensus that any methodology for systematically considering such comparisons must necessarily make use of normative, external judgments.\footnote{E.g.\ \cite{robbins1938interpersonal, arrow1950difficulty, samuelson1954pure,sen1999possibility}. See also \cite{FleurbaeyHammond2004} for further discussion.}\medskip

The objective of this paper is to develop a theory of comparable utilities, relying only on ordinal preference information.\footnote{That is, on the basis of ordinal preferences defined over pure outcomes. In particular, we do not consider `extended' preferences over outcome-individual pairs, e.g.\ \cite{suzumura1996interpersonal}.} Rather than emphasize any particular choice of representation, we build on work from \cite{caradonna2024least} to introduce a notion of a `measurement unit' for utilities. This allows us to formally speak of which possible units a given utility function is denominated in, without constraining the values it assigns to particular outcomes.\medskip

This, in turn, provides a natural criterion for interpersonal comparability. We define a profile of utilities to be interpersonally comparable precisely when there exists a common unit in which every representation is denominated. Our main results characterize when a profile of utilities admits such a common unit, supply axiomatic foundations for the utilitarian aggregation of comparable utilities, and describe the jointly comparable representations of profiles of ordinal preferences.\medskip

Critically, this is not a positive theory of utility---we make no claim that comparable profiles reflect the literal felicities experienced by each individual. Instead, our results provide a benchmark for when it is justifiable, or \emph{rationalizable}, for a planner to treat utility differences, across both pairs of outcomes and individuals, as quantified on a common scale. We illustrate this below.

\begin{example}\label{introex}
    Suppose society consists of $S$ individuals with preferences over bundles of two commodities, $x = (x^1,x^2) \gg 0$, that can be represented by a quasilinear utility function:
    \begin{equation}\label{introutil}
        u^s(x^1,x^2) = v^s(x^2) + x^1.
    \end{equation}
    Let $\varphi_t(x) = x + (t,0)$ denote the transformations which add $t$ units of the numeraire commodity, whenever feasible.\footnote{Formally, when $t$ is negative, we require that $(x^1+t, x^2)$ remain component-wise positive.} Such transformations are objectively meaningful, as they are defined without reference to any preference. Moreover, every utility \eqref{introutil} satisfies the functional equations:
    \begin{equation}\label{introuu}
        u^s\big(\varphi_t(x)\big) = u^s(x) + t,
    \end{equation}
    for every bundle $x$ and feasible quantity $t$. Thus, for every individual, a utility difference of $t$ has the same interpretation: the value of $t$ additional units of the horizontal-axis commodity. This common scale is not obtained by prescribing particular utility levels to specific allocations. Instead, it is supplied by the exogenously-defined family of transformations $\{\varphi_t\}$. In such circumstances, we regard the utilities $u^s$ as interpersonally comparable.\hfill $\blacksquare$
    \end{example}
 
By relying on general transfer functions, rather than physical commodities, we are able to define numeraire-like structure more broadly. When society's preferences are over a space of social states or allocations $\mathcal{X}$, we define a `virtual' commodity simply as a collection of transfer functions, $\varphi_t$, each mapping $\mathcal{X} \to \mathcal{X}$, satisfying:
\[
    \varphi_0(x) = x \quad \textrm{ and } \quad \varphi_{t'}\big(\varphi_t(x)\big) = \varphi_{t' + t}(x),
\]
whenever the relevant transfers are feasible.\footnote{More generally, we allow $\varphi_t(x)$ to not be defined for all $x,t$ pairs, to reflect that certain transfers may not be feasible given bounds or other constraints on the space of states $\mathcal{X}$.} We interpret each $\varphi_t$ as augmenting a given state, or allocation, with $t$ additional units of an abstract good. Such a commodity defines a utility unit for a representation $u$ whenever:
\[
    u\big(\varphi_t(x)\big) = u(x) + t,
\]
for every feasible $x$ and $t$, and we define a profile $(u^1,\ldots, u^S)$ to be interpersonally comparable when it admits a common utility unit. Because the transforms $\varphi_t$ are defined over social states, without reference to any preference or utility, they provide a common, objective scale of measurement. In the language of \cite{luce1957games}, they supply precisely the ``outside thing," against which each utility can be measured, to ascertain the equivalence of their units.\medskip

Our first main result shows that monotonicity, and a simple `no wealth effects' axiom, characterize the set of virtual commodities that are utility units for some representation of a given preference. Conversely, we show \emph{every} utility representation of a preference admits a unit of this form, which uniquely determines it, up to an additive constant. Thus, the selection of a valid commodity of this form is equivalent to the selection of a privileged measurement unit, or “util,” in which to numerically denominate a representation.\medskip

This also highlights that our notion of comparability imposes material restrictions on profiles of utilities. If a profile contains two utilities for the same preference, that differ by a nonlinear, monotone transform, no common utility unit can exist. More generally, we show a simple geometric condition on the gradients of a utility profile completely characterizes comparability, providing a straightforward test for verifying it in practical applications.\medskip

We then consider how a social planner ought to aggregate such utilities. Restricted to the domain of comparable profiles, we show a planner who uses the cardinal measurements, and interpersonal comparisons, justified on the basis of a common utility unit, but who otherwise satisfies the axioms of \cite{Arrow1951}, is necessarily utilitarian. Crucially, this characterization endogenously justifies the cardinal and interpersonal comparisons it relies on, rather than assuming the informational basis for making them is externally given. Thus, our notion of comparability provides an underlying, theoretical basis for the \emph{unit comparability} assumptions widely used in social choice, e.g.\ \cite{sen1970collective, dAspremont1977, roberts1980interpersonal, roberts1980possibility, BossertWeymark2004}.\medskip

Finally, we consider the problem of whether a tuple of ordinal preferences admits any comparable profile of representations, and if so, how many. When multiple comparable profiles exist, each corresponds to a different choice of utility unit in which to measure well-being. Beyond yielding differing normalizations for individual utilities, the associated utilitarian social preference obtained from each comparable profile will generally differ. In such cases, additional economic or normative considerations are required to select from the set of admissible units.\footnote{However, even in such cases, the requirement of comparability starkly restricts the set of utilitarian social preferences. For any sufficiently rich tuple of preferences, we show the collection of comparable profiles is always finite-dimensional, and thus makes up a vanishing proportion of the set of possible profiles of representations. See \autoref{locseptheorem} and \autoref{commonvnthm}.}\medskip

For profiles of expected-utility preferences, this multiplicity provides a unifying perspective on several well-known theories. We show utilitarian aggregation of von Neumann-Morgenstern utilities \`{a} la \cite{harsanyi1953cardinal, harsanyi1955cardinal}, relative utilitarianism (\citealt{dhillon1999relative}), and the Nash social welfare function (\citealt{kaneko1979nash}), all correspond to utilitarian aggregation of comparable profiles, differing only in the choice of common measurement unit. Moreover, for sufficiently rich profiles, we prove \emph{every} interpersonally-valid utilitarian social preference takes one of these three forms, or corresponds to a simple, parametric combination of them.\medskip

In particular, we provide new justification for methods of aggregation that rely on profiles of von Neumann-Morgenstern utilities, or their normalized logarithms. Apart from their convenience and interpretational simplicity, we show such representations make up essentially the only interpersonally comparable representations for expected utility preferences. This provides a new perspective on a longstanding critique of Harsanyi's theorems, which asserts his reliance on expected utilities is unjustified. See, e.g., \cite{weymark2005measurement,roemer2008harsanyi}. Similar critiques have also been raised for the Nash social welfare function (\citealt{gonczarowski2024quantifying}).\medskip

In general, describing the set of comparable profiles of representations (or even determining its non-emptiness), for given preferences, is a difficult problem. However, in the setting of local welfare analysis (e.g.\ \citealt{guesnerie1977direction, milleron1970, arrow1970uncertainty, radner1993note}), we provide a complete characterization. Our results rely on a suite of results due to \cite{cartan1908sous}, developed as part of Felix Klein's \emph{Erlangen} program (\citealt{klein1872vergleichende}), which are collectively known as Cartan's `method of equivalence.' These techniques have found widespread application in physics, geometry, optimization, and control theory.\footnote{See, e.g., \cite{karlhede1980review, gardner1989method, olver1995equivalence,gallo2005cartan} and references therein.} Nevertheless, due to its more technical nature, we treat this question in the \hyperlink{suppapp}{Supplemental Appendix}.\medskip

Finally, we illustrate that our notion of comparability yields new insights in problems of applied welfare analysis. We consider a planner selecting between marginal indirect tax reforms when households’ direct utilities are comparable. By explicit example, we show that even in simple settings, utilitarian aggregation of comparable utilities can yield the opposite ranking of two revenue-equivalent reforms relative to the sum of money metric utilities. The discrepancy arises as the money metric utilities may fail to be comparable, even when constructed using the common, reference prices.

\section{Related Literature}\label{relatedlit}

The question of how to obtain cardinally meaningful, and interpersonally comparable, utilities is the subject of longstanding literatures, both in economics and philosophy. For excellent overviews, see \cite{hammond1990} or \cite{FleurbaeyHammond2004}, and references within.\medskip

{\bf Social Choice}: \cite{Arrow1951} explicitly argues against the possibility of cardinal, interpersonally comparable utility, leading to his impossibility theorem. This assumption was later revisited (\citealt{sen1970collective}, \citealt{sen1974informational}, \citealt{ dAspremont1977}, \citealt{roberts1980interpersonal, roberts1980possibility}, \citealt{blackorby1984social}), who obtain possibility results when various types of interpersonal comparability are exogenously assumed.\footnote{See also, e.g., \cite{demeyer1971welfare} for similar ideas in the context of voting, and \cite{BossertWeymark2004} for an excellent overview of these ideas.} A common critique of this approach is that it is unclear when such comparisons are justified (\citealt{FleurbaeyHammond2004}).\footnote{\cite{FleurbaeyHammond2004} (p.\ $\!\!$1271) write ``In retrospect, however, the social welfare functional approach can now be seen to have several quite serious defects. One is the failure to explain how interpersonal comparisons of utility are to be interpreted, since the informational basis was assumed to be exogenously given."} While these works study the planner's problem when presented with cardinal, interpersonally comparable utilities, it has remained an open question to establish when, and how, such comparisons are justified. Our contribution is to address precisely this gap.\medskip

A common approach to motivating interpersonal comparability is to restrict preferences to quasilinear or expected utility form (e.g.\ \citealt{harsanyi1955cardinal, moulin1985egalitarianism, gonczarowski2024quantifying}). From this perspective, our notion of a utility unit may be seen as generalizing quasilinearity to a far wider range of profiles and settings. In the \hyperlink{suppapp}{Supplemental Appendix}, we show comparability is equivalent to the existence of a global change-of-coordinates which renders a profile simultaneously quasilinear, with respect to a common numeraire.\footnote{See  \autoref{quasilinearizationthm}.}\medskip

Classical welfare economics often sidestepped matters of comparability by assuming that all consumers share a common underlying preference, possibly differing only in their specific representation (\citealt{vickrey1945measuring, mirrlees1971exploration}). More modern approaches include the use of equivalence scales (e.g.\ \citealt{deaton1980almost,lewbel1989household}) or consideration only of preferences whose behavior admits an exact aggregate, e.g.\ \cite{gorman1953community}.\footnote{For an overview of this approach see Section 6 of \cite{FleurbaeyHammond2004}.} Because we allow for nonlinearities in our generalized numeraires, comparable profiles generally need not have demand behavior consistent with any representative consumer. Instead, comparability allows one to directly work with the utilitarian social preference obtained from the sum of the individual utilities.\medskip

{\bf Welfare Economics}: Various approaches exist to constructing normalized utility functions with particular units of measure. \cite{wold1944synthesis} provides a construction of a continuous utility function in commodity space which is normalized to be linear along a particular line segment. Similar ideas appear in the literature on `distance functions' (\citealt{malmquist1953index, deaton1979distance}).\footnote{\cite{debreu1951coefficient} suggests a related measure for quantifying inefficiency.} The practical value of these approaches is discussed in \cite{deaton1980measurement}.\medskip

A widely used method in consumer theory relies on the use of reference prices to obtain money metric representations (\citealt{mckenzie1957, samuelson1974complementarity}). \cite{chipman1976scope, Willig1976, chipman1980compensating, chipman1992compensating} study the cardinal meaningfulness of such metrics in the setting of cost-benefit analysis. While empirically and theoretically attractive (e.g.\ \citealt{deaton1980measurement, deaton2002guidelines, schlee2023money}), there is limited justification for treating money metric utilities as interpersonally comparable (\citealt{hammond1990, hammond1995money}).\footnote{Indeed, \cite{samuelson1974complementarity} (p.\ 1266) calls the utilitarian summation of money metrics for welfare analysis a `perversion,' and explicitly cautions `any such temptation should be resisted.'}\medskip

{\bf Extensive Measurement}: The theory of extensive measurement (\citealt{Suppes1951, krantz1971foundations}) is perhaps closest to our approach. It seeks to obtain canonical, cardinally meaningful numerical representations for orderings, by leveraging algebraic structure on the space of objects over which measurements are constructed. These ideas have found recent application in social welfare (\citealt{Nebel2023,nebel2024extensive}). Our approach may be viewed as somewhat dual. Rather than rely on exogenous algebraic structure on social states to construct a measurement scale, we study when a particular profile of preferences admits sufficient, common algebraic structure for extensive measurement to be carried out.\medskip

{\bf Geometry of Preferences \& Preference Profiles}: Formally, our notion of a virtual numeraire is a particular type of continuous symmetry of a preference (\citealt{tyson2013preference, mantovi2016smooth}). \cite{ok2014topological, ok2021fully} consider symmetries of incomplete preferences defined on groups, and study their extension properties. \cite{caradonna2026revealed} highlight the ubiquity of such symmetries in axiomatic work, and develop a general revealed preference theory for testing them.\medskip

An earlier notion of a virtual numeraire was proposed by \cite{caradonna2024least}, in the context of revealed preference theory. Our definition extends the one given there by allowing for transfers of sufficient magnitudes to be infeasible, providing a more flexible theory.\footnote{However, the theory developed in \cite{caradonna2024least} allows for non-smooth virtual commodities and is valid in any metric consumption space. Here, we focus on the finite-dimensional smooth theory to allow for the use of geometric techniques to establish existence results. Part (i) of our \autoref{virtnumthm} is also similar to Theorem 1 of \cite{caradonna2024least}. However, its proof relies on different underlying arguments, and it provides a strictly smaller axiom set, relative to \cite{caradonna2024least}, precisely due to this additional analytic structure.}\medskip

Our main results concern the simultaneous restrictions imposed by preference \emph{profiles}, rendering them distinct from existing work studying the geometry of individual preferences (\citealt{debreu1959topological, debreu1972smooth, mas1985theory, chiappori1999aggregation, chiappori2006micro}). Recently, \cite{sandomirskiy2024geometry} apply techniques from convex geometry to study the aggregation properties of profiles of homothetic preferences.

\section{Measurement of Individual Utility}\label{indutil}

Let $\mathcal{X}$ be a $d$-dimensional, convex, relatively open subset of Euclidean space, representing the possible social states, or {\bf allocations} available to society. We will assume $d \ge 2$. The set of individuals comprising the society in question is denoted $\mathcal{S} = \{1,\ldots, S\}$.\medskip

A {\bf utility function} is a smooth, critical-point-free map $u: \mathcal{X} \to \mathbb{R}$. Similarly, a {\bf preference relation} $\succsim$ is a complete and transitive binary relation  on $\mathcal{X}$ such that, for some utility function $u$:
\[
    x \succsim y \quad \iff \quad u(x) \ge u(y),
\]
for all $x,y \in \mathcal{X}$. We will use $\succ$ (resp.\ $\sim$) to denote the asymmetric (resp.\ symmetric) components of $\succsim$.\medskip

We allow allocations in $\mathcal{X}$ to be either public or private. Moreover, when allocations are private, we explicitly allow individuals' preferences to depend not only on their own components of the allocation, but also on those of other members of society.

    \subsection{Virtual Commodities}\label{commsect}
    
    We now introduce our core technical construction, virtual commodities. Informally, a virtual commodity consists of a family of transformations, each mapping $\mathcal{X} \to \mathcal{X}$. We interpret each function as augmenting an allocation with some fixed quantity of a nonlinear composite commodity.  However, as $\mathcal{X}$ itself may be bounded, not every allocation can be endowed with arbitrarily large quantities of additional commodity.\medskip

    A {\bf commodity domain} is an open set $\mathcal{D} \subseteq \mathcal{X} \times \mathbb{R}$ such that, for each $x \in \mathcal{X}$, the set:
    \[
    \mathcal{D}^{(x)} = \{t : (x,t) \in \mathcal{D}\}
    \]
    is an open interval containing zero. A smooth map $\varphi: \mathcal{D} \to \mathcal{X}$ is a {\bf transfer function} if, writing $\varphi_t(x)$ for $\varphi(x,t)$, for all $x \in \mathcal{X}$ both (i) $\varphi_0(x) = x$, and (ii) $t_1 \in \mathcal{D}^{(x)}$ and $t_2 \in \mathcal{D}^{(\varphi_{t_1}(x))}$ implies $\varphi_{t_1 + t_2}(x)$ is well-defined, and $\varphi_{t_2}\big(\varphi_{t_1}(x)\big) = \varphi_{t_1+t_2}(x)$.\footnote{By well-defined, we mean that $(t_1+ t_2) \in \mathcal{D}^{(x)}$.}\medskip

    A {\bf virtual commodity} is a pair $(\mathcal{D}, \varphi)$, where $\mathcal{D}$ is a commodity domain, and $\varphi$ a transfer function, which is maximal.\footnote{That is, there exists no commodity-domain/transfer-function pair $(\mathcal{D}', \varphi')$ with $\mathcal{D} \subset \mathcal{D}'$ and $\varphi = \varphi' \vert_{\mathcal{D}}$.} For any $t \in \mathcal{D}^{(x)}$, we interpret the allocation $\varphi_t(x)$ as $x$ plus $t$ units of the virtual commodity; the set $\mathcal{D}^{(x)}$ reflects the amounts of commodity that can be feasibly added to, or subtracted from, $x$. Thus, condition (i) simply requires that adding no units of commodity not affect any allocation, while (ii) is a mild, path-independence property, stating that adding $t_2$ units of commodity, to the allocation consisting of $x$ plus $t_1$ units, is equivalent to adding $t_1 + t_2$ units to $x$ all at once.\medskip

    Finally, given any virtual commodity $\varphi$, its {\bf generator} is the vector field:
\[
    \dot{\varphi}(x) =    \frac{\partial}{\partial t} \varphi_t(x) \big \vert_{t=0}. 
\]
The vector $\dot{\varphi}(x)$ describes the perturbation at $x \in \mathcal{X}$ corresponding to adding an infinitesimal additional quantity of commodity.

\begin{example}
    Suppose $\mathcal{X} = \mathbb{R}^d_{++}$ consists of all bundles of positive quantities of $d$ consumption goods. Any of these $d$ goods may be expressed as a virtual commodity. For example, the map:
    \[
        \varphi_t(x) = x + (t, 0, \ldots, 0)
    \]
    represents the horizontal-axis good as a virtual commodity.\footnote{The relevant commodity domain is defined by $\mathcal{D}^{(x)} = (-x^1, \infty)$.} Here, applying the transformation $\varphi_t$ to some bundle $x$, augments it with $t$ additional units of commodity one, and the generator $\dot{\varphi}(x) = [1, 0, \ldots, 0]^\intercal$ for all $x \in \mathcal{X}$.\medskip
    
    However, virtual commodities are significantly more flexible, and allow for nonlinear transfers. For example, $\varphi_t(x) = e^t x$, with $\mathcal{D} = \mathbb{R}^d_{++} \times \mathbb{R}$, defines a virtual commodity, where $\dot{\varphi}(x) = x$.  Here, `adding' $t$ extra units of commodity instead proportionally scales the bundle $x$ by the scalar $e^t$.\hfill $\blacksquare$
\end{example}

\subsection{Measurement of Individual Utility}

A virtual commodity $(\mathcal{D}, \varphi)$ is a {\bf virtual numeraire} for a preference $\succsim$ if it satisfies the following pair of axioms.
\begin{itemize}
    \item[(N.1)]\hypertarget{n1}{} {\bf Invariance}: For all $x,y \in \mathcal{X}$, and $t \in \mathcal{D}^{(x)} \cap \mathcal{D}^{(y)}$:
    \[
        x \succsim y \quad \iff \quad \varphi_t(x) \succsim \varphi_t(y).
    \]
    \item[(N.2)]\hypertarget{n2}{} {\bf Monotonicity}: For all $x \in \mathcal{X}$, and $t \in \mathcal{D}^{(x)}$:
    \[
        \varphi_t(x) \underset{(\succ)}{\succsim} x \quad \iff \quad t \underset{(>)}{\ge} 0.
    \]
\end{itemize}
Invariance (\hyperlink{n1}{N.1}) says that adding some common quantity of numeraire to two allocations does not affect the preference between them. It rules out cases in which adding the same quantity of commodity to each causes the preference between them to reverse. Monotonicity (\hyperlink{n2}{N.2}) says the virtual commodity is a good.\medskip

Suppose that $(\mathcal{D}, \varphi)$ is a virtual numeraire for a preference $\succsim$, which is represented by some utility function $u$. We say that $(\mathcal{D}, \varphi)$ is a {\bf utility unit} for $u$ if, for all $(x,t) \in \mathcal{D}$:\footnote{Our restriction to virtual numeraires here is without loss of generality. If a virtual commodity satisfies \eqref{uu}, it necessarily satisfies (\hyperlink{n1}{N.1}) and (\hyperlink{n2}{N.2}) for the underlying preference.}
\begin{equation}\label{uu}
    u\big(\varphi_t(x)\big) = u(x) + t.
\end{equation}
In such cases, willingness-to-pay, denominated in $\varphi$ terms, provides an exact proxy for utility differences. For example, if for any $x$ and $y$, there exists $x' \sim x$, $t \in \mathcal{D}^{(x')}$, such that $y \sim \varphi_t(x')$, then by \eqref{uu} it must be that $t = u(y) - u(x)$. More generally, as our next proposition shows, every $x$ and $y$ are related by a finite sequence of transfers of this form, whose sum uniquely determines $u(y) - u(x)$.

\begin{figure}
\centering
\begin{tikzpicture}[line join = round, line cap = round]

\draw[dashed, stealth-stealth] (5,0) -- (0,0) -- (0,5);
\draw[thick] plot [smooth, tension=1] coordinates { (.5,2.7) (1.5,1.25) (4,.75)};
\draw[thick] plot [smooth, tension=1] coordinates { (1.5,3.75) (2.5,2.75) (4.5,2)};

\draw[densely dotted, -stealth] plot [smooth, tension=.75] coordinates {  (0,1.5) (.6,2.275)  (1.5,3.2)   (2.5, 3.8)};

\draw[densely dotted, -stealth] plot [smooth, tension=.75] coordinates { (2.5,0)(3,.815) (3.9, 1.75) (4.2, 3) };

\coordinate [label=left:\scriptsize $x$] (0) at (.6,2.275);
\fill (0) circle [black!90, radius=1.5pt];

\coordinate [label=left:\scriptsize $\varphi_t(x)$] (1) at (1.735,3.375);
\fill (1) circle [black!90, radius=1.5pt];

\coordinate [label=below:\scriptsize $x'$] (2) at (3,.815);
\fill (2) circle [black!90, radius=1.5pt];

\coordinate [label=above right:\scriptsize $\varphi_t(x')$] (3) at (4.05,2.125);
\fill (3) circle [black!90, radius=1.5pt];

\coordinate [label=above right:\scriptsize $y$] (4) at (3,2.5);
\fill (4) circle [black!90, radius=1.5pt];

\draw[fill, black!90, radius=1.25pt] (2.5,0) circle;
\draw[fill, white, radius=.6pt] (2.5,0) circle;

\draw[fill, black!90, radius=1.25pt] (0,1.5) circle;
\draw[fill, white, radius=.6pt] (0,1.5) circle;
\end{tikzpicture}

\caption{Two sequences of the form \eqref{decompseq}, of length one. The virtual numeraire $\varphi$ provides a measurement scale for quantifying the intensity of the preference $y \succ x$. A utility is denominated in $\varphi$ units whenever the total transfer, along any such sequence, equals the utility difference.}
\label{virtnumfig}
\end{figure}

\begin{proposition}\label{decompprop}
        Let $(\mathcal{D}, \varphi)$ be a virtual numeraire for $\succsim$. Then for any $x, y \in \mathcal{X}$, there exists a finite sequence $(x_1,t_1), \ldots, (x_K, t_K) \in \mathcal{D}$ such that:
        \begin{equation}\label{decompseq}
            x \sim x_1, \quad \varphi_{t_1}(x_1) \sim x_2, \quad \cdots \quad  \varphi_{t_{K-1}}(x_{K-1}) \sim x_K, \quad \varphi_{t_{K}}(x_K) \sim y.
        \end{equation}
        Moreover, if $(\mathcal{D}, \varphi)$ is a utility unit for some representation $u$, then for any such sequence, $u(y) - u(x) = \sum_{k=1}^K t_k$.
\end{proposition}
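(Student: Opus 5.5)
The argument has two parts: existence of a chain, proved by a connectedness argument on $\mathcal{X}$, and the sum formula, proved by telescoping. Fix any utility function $u$ representing $\succsim$ (one exists by definition of a preference relation). Call $y$ \emph{reachable} from $x$ if a sequence as in \eqref{decompseq} exists, and let $R(x)$ be the set of points reachable from $x$.

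Reachability is reflexive: take $K=1$, $x_1 = x$, $t_1 = 0$. It is also transitive. Given chains $x \to y$ and $y \to z$, concatenate them. The junction reads $\varphi_{t_K}(x_K) \sim y \sim x'_1$, and transitivity of $\sim$ gives $\varphi_{t_K}(x_K) \sim x'_1$, which is exactly the required link. I will avoid proving symmetry directly, since reversing a transfer requires $-t \in \mathcal{D}^{(\varphi_t(x))}$, which is awkward to extract from the definitions. Instead I will prove two local statements.

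The key step is the following. For every $x$ there is a neighborhood $U$ of $x$ such that every $y \in U$ is reachable from $x$ in one step, and $x$ is reachable from $y$ in one step.

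\emph{Forward direction.} By (N.2), $t \mapsto u(\varphi_t(x))$ is strictly increasing on the open interval $\mathcal{D}^{(x)} \ni 0$. Pick $\varepsilon > 0$ with $[-\varepsilon, \varepsilon] \subset \mathcal{D}^{(x)}$. Then
\[
u(\varphi_{-\varepsilon}(x)) < u(x) < u(\varphi_{\varepsilon}(x)).
\]
For $y$ close to $x$, continuity of $u$ puts $u(y)$ strictly between these two values. The intermediate value theorem then gives $t$ with $u(\varphi_t(x)) = u(y)$, i.e.\ $\varphi_t(x) \sim y$. So the chain $x_1 = x$, $t_1 = t$ works.

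\emph{Backward direction.} Because $\mathcal{D}$ is open and contains the compact set $\{x\} \times [-\varepsilon, \varepsilon]$, for all $y$ near $x$ we have $[-\varepsilon, \varepsilon] \subset \mathcal{D}^{(y)}$. By continuity of $\varphi$ and $u$, the values $u(\varphi_{\pm\varepsilon}(y))$ are close to $u(\varphi_{\pm\varepsilon}(x))$, so they still strictly straddle $u(x)$. The intermediate value theorem gives $s$ with $\varphi_s(y) \sim x$. The chain $x_1 = y$, $t_1 = s$ then shows that $x$ is reachable from $y$.

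With the local lemma in hand, $R(x)$ is open. If $y \in R(x)$, a neighborhood of $y$ is reachable from $y$ (forward direction), hence from $x$ by transitivity. Its complement is also open. If $z \notin R(x)$, take a neighborhood $W$ of $z$ from each point of which $z$ is reachable (backward direction). Any $w \in W \cap R(x)$ would give $z \in R(x)$ by transitivity, so $W \cap R(x) = \emptyset$. Since $x \in R(x)$ and $\mathcal{X}$ is convex, hence connected, we conclude $R(x) = \mathcal{X}$. I expect the main technical care to lie in the backward direction, namely the uniform domain bound $[-\varepsilon, \varepsilon] \subset \mathcal{D}^{(y)}$ obtained from openness of $\mathcal{D}$ via a tube-lemma argument.

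For the second claim, suppose $u$ satisfies \eqref{uu}. Indifference means equal utility, so $u(x_1) = u(x)$ and $u(x_{k+1}) = u(\varphi_{t_k}(x_k)) = u(x_k) + t_k$ for each $k < K$. Finally, $u(y) = u(\varphi_{t_K}(x_K)) = u(x_K) + t_K$. Telescoping these equalities gives $u(y) - u(x) = \sum_{k=1}^K t_k$, independently of which chain was chosen.
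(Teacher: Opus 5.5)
Your proof is correct, but it takes a different route from the paper's.

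The paper assumes $y\succ x$ and works in the one-dimensional utility range. Using convexity of $\mathcal{X}$, it picks $z_a$ with $u(z_a)=a$ for each $a\in[u(x),u(y)]$. It covers this compact interval by the open intervals $I_a=u\big(\varphi(z_a,\mathcal{D}^{(z_a)})\big)$ and takes a Lebesgue-number partition $u(x)=b_1<\cdots<b_{K+1}=u(y)$ with each $[b_k,b_{k+1}]$ inside some $I_{a_k}$. It then sets $x_k=\varphi_{t_k^-}(z_{a_k})$ and $t_k=t_k^+-t_k^-$. The case $x\succ y$ is handled separately by reversing the transfers. You instead show that the set of points reachable from $x$ is open and closed in $\mathcal{X}$.

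The paper's construction buys an explicit, monotone chain. Every $t_k>0$ when $y\succ x$, so the chain literally lays $t$-util measuring sticks end to end across $[u(x),u(y)]$.

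Yours buys economy of hypotheses. You use only continuity of $\varphi$ on the open set $\mathcal{D}$, $\varphi_0=\mathrm{id}$, the sign content of (N.2) at $t=\pm\varepsilon$, and connectedness of $\mathcal{X}$. You also treat both orderings of $x,y$ at once. In particular, you never use the composition law or maximality. The paper, by contrast, invokes the Flow Lemma both to make $\varphi_{t_k}(x_k)$ well defined and to reverse chains.

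Two side remarks. That lemma gives $\mathcal{D}^{(\varphi_t(x))}=\mathcal{D}^{(x)}-t$, which would have settled the symmetry question you sidestepped. Also, your passing claim that $t\mapsto u(\varphi_t(x))$ is strictly increasing does need the composition law. However, you only use the endpoint inequalities $u(\varphi_{-\varepsilon}(x))<u(x)<u(\varphi_\varepsilon(x))$, which (N.2) gives directly.

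The price of your approach is that the chain is nonconstructive and may zigzag, with transfers of both signs.
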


Thus, when $\varphi$ is a utility unit for $u$, we may regard each transformation $\varphi_t$ as a `$t$-util' measuring stick; see \autoref{virtnumfig}. By \autoref{decompprop}, these allow for rich enough measurements to capture the utility difference between any pair of allocations.\medskip

Our first main result establishes two basic facts. The first is that every virtual numeraire for a preference on $\mathcal{X}$ is a utility unit for some choice of representation. Thus the collection of relative intensity measurements, captured by the virtual numeraires for a given preference, is no richer than the set of utility differences generated by its possible representations.\medskip

The second is that, conversely, for any preference $\succsim$ and choice of representation $u$, there exists a virtual numeraire for $\succsim$ which satisfies \eqref{uu} for $u$. Thus, virtual numeraires yield a sufficiently rich theory of measurement to capture the utility differences arising from any choice of representation, for any preference. Together, these imply the selection of some virtual numeraire for a preference may be regarded precisely as a choice of `util,' or measurement unit for utility.

\begin{theorem}\label{virtnumthm}
    Let $\succsim$ be a preference on $\mathcal{X}$. Then:
    \begin{itemize}
        \item[(i)] Every virtual numeraire for $\succsim$ is a utility unit for some choice of representation.
        \item[(ii)] For any utility representation of $\succsim$, there exists a virtual numeraire which is a utility unit.
    \end{itemize}
    Moreover, two representations of $\succsim$ share a common utility unit if and only if they differ by an additive constant.
\end{theorem}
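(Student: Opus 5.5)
Part (i) comes first. Fix a virtual numeraire $(\mathcal{D},\varphi)$ for $\succsim$ and any representation $v$. By (\hyperlink{n1}{N.1}) and (\hyperlink{n2}{N.2}), each flow map $\varphi_t$ sends indifference curves to indifference curves, and the orbit $t\mapsto v(\varphi_t(x))$ is strictly increasing. To define the new representation $u$, I would show that the integral curves of $\dot\varphi$ cross every indifference class transversally. Differentiating (\hyperlink{n2}{N.2}) at $t=0$ gives $\nabla v(x)\cdot\dot\varphi(x)\ge 0$. Combining this with invariance, which makes the quantity $g(x):=\nabla v(x)\cdot \dot\varphi(x)$ constant in sign and nonvanishing along level sets up to a positive factor, I would rule out $g=0$. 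The alternative would be an orbit tangent to a level set; there the strict monotonicity along the orbit would force $g>0$ at nearby points of the same orbit, and invariance transports this back.

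Next I would show that the function $h(x)=\partial_t v(\varphi_t(x))|_{t=0}$, rescaled along level sets, allows writing $v=f\circ u$ with $u(\varphi_t(x))=u(x)+t$. Concretely, invariance implies that $v(\varphi_t(x))$ depends only on $v(x)$ and $t$ wherever it is defined, say $v(\varphi_t(x))=F(v(x),t)$. The group law makes $F$ a local one-parameter flow on the interval $v(\mathcal X)$, with strictly increasing orbits. Such a flow on an interval is conjugate to translation: pick a base level $c_0$ and set $f^{-1}(F(c_0,t))=t$. Connectedness of $v(\mathcal X)$, together with the chaining of Proposition~\ref{decompprop}, extends this conjugacy to all levels. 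The map $u=f^{-1}\circ v$ then satisfies \eqref{uu}. It is smooth and critical-point-free because $F$ is smooth and $\partial_t F>0$.

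For part (ii), given $u$, pick a smooth vector field $X$ with $\nabla u\cdot X\equiv 1$, for instance $X=\nabla u/|\nabla u|^2$, which is possible since $u$ has no critical points. Let $\varphi$ be its maximal flow on its maximal open domain, which is a commodity domain by standard ODE theory. Then $\frac{d}{dt}u(\varphi_t(x))=1$, so \eqref{uu} holds, and (\hyperlink{n1}{N.1})--(\hyperlink{n2}{N.2}) follow from \eqref{uu}. Maximality as a virtual commodity follows from maximality of the flow: any extension would be a transfer function whose generator, by the group law, equals $X$, and by ODE uniqueness it would coincide with the maximal flow.

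For the final claim, if $u$ and $u+c$ are given, the same $\varphi$ works trivially. Conversely, suppose $\varphi$ is a unit for two representations $u$ and $w=f\circ u$ with $f$ increasing. Applying Proposition~\ref{decompprop}, pick a decomposition sequence from $x$ to $y$. Both differences $u(y)-u(x)$ and $w(y)-w(x)$ equal $\sum t_k$, so $w-u$ is constant.

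The main obstacle is the conjugacy step in (i). It requires that the induced flow $F$ on levels be globally well-defined, which needs a common $t$ to lie in the domains of both points, and that it cover the whole interval $v(\mathcal X)$ despite the local domains $\mathcal D^{(x)}$. I would handle this by working locally and patching via Proposition~\ref{decompprop}, using that $\mathcal X$ is connected and convex.
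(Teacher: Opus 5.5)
Your part (ii) is the paper's argument. Your proof of the final claim, which applies the telescoping clause of Proposition~\ref{decompprop} to both $u$ and $w$ along one chain, is a valid alternative to the paper's route ($\dot\varphi\cdot\nabla u = 1 = \dot\varphi\cdot\nabla w$ with collinear gradients forces $\nabla u=\nabla w$). The gap is in part (i), at exactly the step you flag, and your proposed remedy does not close it.

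Setting $f^{-1}(F(c_0,t))=t$ defines $f^{-1}$ only on the levels reachable from $c_0$ along a single orbit. This can be a proper subinterval of $v(\mathcal X)$. Take $\mathcal X=\mathbb R^2$, $v(x)=x^1$, and generator $\dot\varphi=(1,\,1+(x^2)^2)$; this is a legitimate virtual numeraire. Every orbit starting on $\{x^1=0\}$ blows up in time less than $\pi$, so $F(0,\cdot)$ reaches only $(-\pi,\pi)$. The same example shows $F$ is not a flow: $(0,3)$ and $(F(0,3),1)=(3,1)$ lie in its domain, but $(0,4)$ does not. The transfer-function axioms only compose steps taken along one orbit. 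Extending by chaining would require that $\sum_k t_k$ along a chain of levels not depend on the chain. Proposition~\ref{decompprop} cannot supply this: it only asserts that chains exist, and its uniqueness clause presupposes a utility unit, which is what you are constructing.

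The missing idea is to normalize the generator rather than the finite flow on levels. You already note that $g=\nabla v\cdot\dot\varphi$ is constant on indifference surfaces (differentiate $v(\varphi_t(x))=v(\varphi_t(y))$ at $t=0$) and positive. Hence $g=h\circ v$ for a smooth $h>0$ on the interval $v(\mathcal X)$. Define $H(q)=\int_{\bar q}^q d\xi/h(\xi)$, which lives on the whole interval at once. Then $\dot\varphi\cdot\nabla(H\circ v)\equiv1$, and integrating along each orbit over its entire domain $\mathcal D^{(x)}$ gives \eqref{uu} for $u=H\circ v$. No global flow on levels, coverage argument, or patching is needed.

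This $H$ is exactly the conjugacy to translation you wanted: by ODE uniqueness, $F$ is a restriction of the flow of $\dot q=h(q)$. Your local-patching plan can only be completed by noting that every local conjugacy has derivative $1/h$, which is this same argument.

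Your transversality step is also only sketched. The paper's version is cleaner: if $g(x)=0$, then $g$ vanishes on the whole indifference surface through $x$. So $\dot\varphi$ is tangent to that surface, the orbit of $x$ stays in it, and this contradicts (N.2).
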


\autoref{virtnumthm} shows virtual numeraires satisfy the basic desiderata for any notion of cardinal utility unit. For any preference, fixing a choice of virtual numeraire identifies a unique (up to an additive constant) representation satisfying \eqref{uu}. In this sense, such commodities precisely capture the \emph{measurement unit} for utility, rather than speaking to any means of quantifying utility level (\citealt{krantz1971foundations}).\medskip

Critically, virtual commodities may be defined without reference to any preference. In this sense they are \emph{objective} units of measure, external to the ordinal (or cardinal) assessments made by any individual. Thus, rather than thinking of the property of being numeraire for a given preference as an attribute of a commodity, it is perhaps more accurate to view the commodity as primitive, and (\hyperlink{n1}{N.1}) and (\hyperlink{n2}{N.2}) as characterizing when a preference is suitably `measurable' with respect to it.\medskip

\begin{example}
    Consider a finite set of prizes, $\mathcal{Z}$, partitioned into two sets, $G$ and $B$, reflecting good and bad outcomes. Individuals are assumed to have preferences defined on $\mathcal{X} = \textrm{int}\, \Delta(\mathcal{Z})$, i.e.\ full-support lotteries over these prizes, which can be represented by:
    \[
        u^s(p) = \ln\bigg(\frac{p(G)}{p(B)}\bigg) + v^s(p_G, p_B),
    \]
    where $p(G)$ and $p(B)$ denote the probabilities of the events $G$ and $B$ under $p$, and $p_G$ and $p_B$ the probabilities conditional on these events. Thus each utility $u^s$ is the sum of the log-odds of receiving a non-disappointing prize, plus an arbitrary function of the conditional probabilities, $p_G$ and $p_B$.\medskip

    Despite there not existing any physical commodities in this environment, the virtual commodity:
    \[
        \varphi_t^j(p) = \frac{e^{t \cdot \mathbbm{1}_{j \in G}}p^j}{p(B) + e^t p(G)}, \quad \quad j \in G\cup B,
    \]
    defines a common virtual numeraire. Here, `adding' additional commodity shifts mass away from the disappointing prizes in $B$, to the more desirable outcomes in $G$, while preserving the conditional probabilities $p_G$ and $p_B$. In particular, every $u^s$ satisfies \eqref{uu}, and hence treats $\varphi$ as a common utility unit.\hfill $\blacksquare$ 
\end{example}

An important caveat is that any fixed utility representation admits infinitely many possible units. For example, when $\mathcal{X} = \mathbb{R}^2$ and $u(x^1, x^2) = x^1 + x^2$, any vector $\lambda = (\lambda^1, \lambda^2)$ with $\lambda^1 + \lambda^2 = 1$ defines a utility unit for $u$, via:
\[
    \mathcal{D} =\mathbb{R}^3 \quad \textrm{ and } \quad \varphi_t(x) = x + t\lambda.
\]
Thus while \autoref{virtnumthm} tells us that fixing a choice of representation dramatically reduces the possible utility units, it does not provide exact identification.\footnote{However, by \autoref{virtnumthm}, a virtual numeraire can be a utility unit of two representations $u$ and $v$ of some $\succsim$ if and only if $u = v + c$ for some constant $c$. Thus fixing $u$ as a choice of representation rules out those virtual numeraires which satisfy \eqref{uu} for any nonlinear (indeed, non-translationary) monotone transform of $u$.}

\section{Interpersonally Comparable Utility}\label{comparablesect}

Suppose now that $(u^1,\ldots, u^S)$ is a profile of utilities. We will say these utilities are {\bf interpersonally comparable} if there exists a common utility unit, i.e.\ a virtual commodity $(\mathcal{D}, \varphi)$ such that, for all $s \in \mathcal{S}$ and all $(x,t) \in \mathcal{D}$,
\[
    u^s\big(\varphi_t(x)\big) = u^s(x) + t.
\]
When such a commodity exists, each utility $u^s$ may be regarded as being denominated in the same objective measurement unit, the numeraire good $\varphi$.\medskip

We interpret the property of being interpersonally comparable as indicating that it is \emph{rationalizable}, or internally consistent, for a social planner or policymaker to treat the profile of utilities as denominated in a common measurement unit. This provides justification for comparing the magnitude of the gains of some members of society against the losses of others, as captured by these utilities.\medskip

An immediate consequence of this definition is that not all utilities are interpersonally comparable. For example, by \autoref{virtnumthm}, if society contains two individuals with the same underlying preference, but whose utilities differ by a nonlinear monotone transform, the resulting profile cannot be interpersonally comparable. Intuitively, such a pair of utilities cannot be rationalized as being denominated in any common measurement unit.\medskip

Interpersonal comparability also yields ordinal implications.\footnote{See also \autoref{ordinalsect}.} If society's underlying preferences are too heterogeneous, interpersonal comparability can fail for \emph{every} profile of representations. For example, if at some allocation $x$, the intersection of all individuals' strictly preferred sets $\{x' \in \mathcal{X} : x' \succ^s x\}$ is empty, no virtual commodity can satisfy (\hyperlink{n2}{N.2}) for every preference, and hence no common virtual numeraire can exist.\footnote{This is closely related to various domain and profile restrictions in social choice, see \cite{kramer1973class,chichilnisky1983necessary,dhillon1999relative}.} Thus, given sufficiently detailed information about subjects' preferences (or utilities), the hypothesis that a profile is interpersonally comparable is falsifiable.\footnote{In the case of a preference profile, we take feasibility to mean the existence of some profile of representations that is jointly comparable.}\medskip

Finally, we note that comparability is a cardinal property. When a profile $(u^1, \ldots, u^S)$ admits a common unit, even if an identical, strictly increasing transform is applied to each utility, the resulting profile will generally not be interpersonally comparable. The following proposition characterizes the tuples of monotone transforms which robustly preserve comparability.

\begin{proposition}\label{uniquenessprop}
    Suppose $S \ge 2$, and let $f = (f^1,\ldots, f^S)$ be a tuple of smooth, strictly increasing, and critical-point-free transformations. Then $f$ has the property that $(f^1 \circ u^1, \ldots, f^S \circ u^S)$ are interpersonally comparable whenever $(u^1, \ldots, u^S)$ are, if, and only if:
    \begin{equation}\label{affinecommonslope}
        f^s(t) = \alpha t + \beta^s,
    \end{equation}
    for $\alpha \in \mathbb{R}_{++}$ and $\beta^s \in \mathbb{R}$, for all $s \in \mathcal{S}$.
\end{proposition}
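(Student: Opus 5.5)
The sufficiency direction is direct. Suppose $(\mathcal{D},\varphi)$ is a common utility unit for $(u^1,\ldots,u^S)$ and $f^s(t)=\alpha t+\beta^s$. We rescale time: set $\psi_t = \varphi_{t/\alpha}$ on the domain $\mathcal{D}_\alpha=\{(x,t):(x,t/\alpha)\in\mathcal{D}\}$. This is still a commodity domain and a transfer function. Maximality is inherited, since any extension of $(\mathcal{D}_\alpha,\psi)$ would rescale back to an extension of $(\mathcal{D},\varphi)$. The unit property then follows from
\[
f^s\big(u^s(\psi_t(x))\big) = \alpha\big(u^s(x)+t/\alpha\big)+\beta^s = f^s(u^s(x))+t ,
\]
so $\psi$ is a common unit for the transformed profile.

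For necessity, the plan is to test $f$ on explicit comparable profiles chosen to force the affine form. Take $\mathcal{X}\supseteq$ an open convex set, and use the commodity $\varphi_t(x)=x+te_1$ on its natural maximal domain. Any utility of the form $u(x)=x^1+g(x^2,\ldots,x^d)$ has this as a unit. The first step is to show each $f^s$ is affine. For this we use the profile $u^s(x)=x^1$ and $u^{s'}(x)=x^1+x^2$ (possible since $d\ge 2$); more simply, the "Moreover" part of \autoref{virtnumthm} applies when two coordinates share the same utility. Choose the profile with $u^1=\cdots=u^S=x^1$. After transformation, suppose the profile is comparable with unit $\psi$. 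Since $f^1\circ u$ and $f^2\circ u$ are both representations of the same preference sharing a unit, \autoref{virtnumthm} gives $f^1\circ u - f^2\circ u = $ const. Since $u$ is onto an interval, this yields $f^s(t)=f^1(t)+c^s$, which already gives common shape up to additive constants.

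Next we show $f^1$ is affine. Consider the profile $u^1(x)=x^1$, $u^2(x)=x^1+x^2$ (the remaining utilities set equal to $u^1$), which is comparable via $e_1$. Suppose $\psi$ is a common unit for $F(x^1)$ and $F(x^1+x^2)$, where $F=f^1$ up to constants. Differentiating the unit equation at $t=0$ gives $\nabla (F\circ u^s)\cdot\dot\psi=1$ for both $s$:
\[
F'(x^1)\dot\psi^1 = 1,\qquad F'(x^1+x^2)(\dot\psi^1+\dot\psi^2)=1 .
\]
This alone is solvable pointwise, so the gradient condition is insufficient, and the flow equation must be used. The idea is to pick a three-utility profile, or a profile with $u^2=x^1+h(x^2)$ for varied $h$, so that the unit conditions overdetermine $\dot\psi$. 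A cleaner route is to use $S\ge 2$ with $u^2(x)=x^1+x^2$ and compare level sets. Since $\psi_t$ maps level sets of $F(x^1)$ to level sets (by invariance), $\psi_t$ must send hyperplanes $\{x^1=a\}$ to $\{x^1=F^{-1}(F(a)+t)\}$, and likewise send $\{x^1+x^2=b\}$ to $\{x^1+x^2=F^{-1}(F(b)+t)\}$. Consequently, $\psi_t$ restricted to the plane induces a map that translates the family of lines $x^1=a$ by $\tau_t(a)=F^{-1}(F(a)+t)-a$ and the lines $x^1+x^2=b$ by $\tau_t(b)$. Using a third family $x^1+2x^2$, or a second profile, and tracking where the intersection point of lines goes yields a consistency (Cauchy-type) functional equation in $\tau_t$ that forces $\tau_t(a)$ to be independent of $a$. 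This gives $F^{-1}(F(a)+t)=a+c(t)$, so $F'$ is constant, and $F$ is affine with positive slope $\alpha$ (positive because $F$ is strictly increasing).

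The main obstacle is this last step: with only two directions the intersection map is unconstrained, so we must choose the test profile carefully. The first attempt will be a profile using three linear utilities $x^1$, $x^1+x^2$, $x^1+2x^2$ (with $S\ge 2$ we take the second utility to be a nonlinear quasilinear function $x^1+g(x^2)$ whose level curves through a point vary in slope), so that the induced point map must simultaneously respect two transversal foliations plus the smooth group structure. From there, differentiating in $t$ gives $\dot\psi$ explicitly, and requiring that the flow of $\dot\psi$ commute with level-set mapping yields an ODE in $F'$ whose only solutions are constants. Combined with the earlier step, this gives $f^s(t)=\alpha t+\beta^s$.
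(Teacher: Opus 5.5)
Your sufficiency argument is correct and matches the paper's. The necessity argument has a genuine gap at the decisive step: you never show that the common shape $F$ is affine, and the profiles you propose cannot force it.

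\textbf{Why your test profiles fail.} For $(F(x^1),F(x^1+x^2))$, the field $X=\big(1/F'(x^1),\,1/F'(x^1+x^2)-1/F'(x^1),0,\ldots,0\big)$ satisfies $X\cdot\nabla(F\circ u^s)=1$ for both $s$. Its maximal flow is a common unit, so this profile is comparable for \emph{every} $F$. There is no further ``flow equation'' to exploit. As in the proofs of Theorems~\ref{virtnumthm} and~\ref{cardexistance}, a smooth field with unit derivative along every utility already integrates to a common unit.

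The same holds for $(x^1,\,x^1+g(x^2))$ wherever $g'\neq 0$. So for $S=2$, which the proposition allows, your transversal-foliation profiles carry no information at all.

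The three-line profile $x^1,\ x^1+x^2,\ x^1+2x^2$ needs $S\ge 3$. Even then, your claim that $\tau_t(a)$ is independent of $a$ does not follow. Tracking intersection points only shows that $F^{-1}(F(\cdot)+t)$ preserves midpoints. Equivalently, $1/F'$ satisfies Jensen's equation and is affine on the attained utility range. That admits, for example, $\mathcal{X}=(0,1)^2$ with $F(q)=\ln(1+q)$ on the attained range (extended arbitrarily outside it). The dilation about $(-1,0)$ is then a common unit, as in Corollary~\ref{compcorr}. So the logarithmic case survives unless you know the relation on all of $\mathbb{R}$, which a fixed $\mathcal{X}$ does not give you.

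\textbf{The missing idea: additive shifts.} Use the individual additive constants, which comparability of the original profile ignores but the transformations $f^s$ do not. Take $u^s(x)=x^1+c^s$ with arbitrary $c^s\in\mathbb{R}$; this profile is comparable under $x\mapsto x+te_1$. All transformed gradients are multiples of $e_1$, so any common unit $\psi$ must satisfy $(f^s)'(x^1+c^s)\,\dot\psi^1(x)=1$ for every $s$. Hence $(f^1)'(x^1+c^1)=(f^2)'(x^1+c^2)$ for all $c^1,c^2$. Fixing $x^1$, this gives $(f^1)'(a)=(f^2)'(b)$ for all $a,b\in\mathbb{R}$, so every $(f^s)'$ equals a single constant $\alpha>0$. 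This is the paper's proof, and it uses only $S\ge 2$.

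Your own Step~1 tool would also have closed the gap. Applying the uniqueness clause of Theorem~\ref{virtnumthm} to $f^1(x^1)$ and $f^2(x^1+c)$ gives $(f^1)'(a)=(f^2)'(a+c)$ for all $a$ in the range of $x^1$ and every $c\in\mathbb{R}$. Without shifts, Step~1 as written also only identifies $f^s=f^1+c^s$ on the range of $x^1$ over $\mathcal{X}$, not on all of $\mathbb{R}$.
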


\subsection{Comparability \& Arrovian Aggregation}

How ought a planner synthesize a profile of comparable utilities into an aggregate, social ranking? Relative to much of social choice, which supposes an unrestricted domain (see, e.g.\ \citealt{BossertWeymark2004}), as noted above, comparability restricts the set of allowable profiles.\medskip

Let $\mathcal{U}$ denote all profiles of utilities $(u^1, \ldots, u^S)$ on $\mathcal{X}$, and $\mathcal{C} \subset \mathcal{U}$ the sub-collection of comparable profiles. A social welfare functional $F$ is a mapping from $\mathcal{U}$ to $\mathcal{P}$, the set of preference relations on $\mathcal{X}$. For fixed $F$ and any $u \in \mathcal{U}$, we denote the preference $F(u)$ via $\succeq_u$, with asymmetric component $\succ_u$.\medskip

Following \cite{sen1970collective}, a social welfare functional is able to rely on cardinal information about utility differences, both intra- and interpersonally, though not information about utility levels, when it satisfies the following axiom.\footnote{In contrast, the case of \cite{Arrow1951}, where $F$ is able to only rely on ordinal information, can be captured by requiring that for any $(u^1,\ldots, u^S) \in \mathcal{U}$, and any tuple $(f^1, \ldots, f^S)$ of increasing transforms:
\[
    F(f^1 \circ u^1, \ldots, f^S \circ u^S) = F(u^1,\ldots, u^S).
\]}
\begin{itemize}
    \item[(CUC)] {\bf Cardinal Unit-Comparability}:\hypertarget{cuc}{} For any profile $(u^1, \ldots, u^S)\in \mathcal{U}$, and any $\alpha > 0, \beta \in \mathbb{R}^\mathcal{S}$:
    \[
        F(u^1, \ldots, u^S) = F(\alpha u^1 + \beta^1, \ldots, \alpha u^S + \beta^S).
    \]
\end{itemize}

\noindent By definition, if $u \in \mathcal{C}$, it is meaningful to make comparisons of the form:
\begin{equation} \label{diffcomp}
    u^s(x) - u^s(y) \ge u^{s'}(x') - u^{s'}(y'),
\end{equation}
for any $s,s' \in \mathcal{S}$ and any $x,x',y,y' \in \mathcal{X}$, as by \autoref{decompprop}, the quantities on each side of the inequality can be expressed in terms of the common utility unit, $\varphi$. Conversely, (\hyperlink{cuc}{CUC}) says that when a profile of utilities is scaled by a common factor $\alpha > 0$, or shifted by individual quantities $\beta^s$, the induced (ordinal) ranking over social alternatives must remain unchanged. By \cite{bossert1991intra}, such transformations are precisely those which preserve the inequalities \eqref{diffcomp}. Thus, for a social welfare functional defined over comparable profiles, (\hyperlink{cuc}{CUC}) simply encodes the requirement that the planner use only those measurements which can be constructed using $\varphi$, i.e.\ \eqref{diffcomp}, and nothing more.\medskip

Apart from (\hyperlink{cuc}{CUC}), we consider a planner who is otherwise Arrovian, that is, who satisfies the social welfare functional analogues of the axioms of \cite{Arrow1951}. Our treatment here follows \cite{dAspremont1977}.
\begin{itemize}
    \item[(IIA)] {\bf Independence of Irrelevant Alternatives}:\hypertarget{iia}{} For all $x,y \in \mathcal{X}$, and any $u, v \in \mathcal{U}$, if $u(x) = v(x)$ and $u(y) = v(y)$, then $x \succeq_u y$ if and only if $x \succeq_v y$.
    \item[(SP)] {\bf Strong Pareto}:\hypertarget{sp}{} For all $ u \in \mathcal{U}$ and $x,y \in \mathcal{X}$, $u^s(x) \ge u^s(y)$ for all $s \in \mathcal{S}$ implies $x \succeq_u y$.  If, in addition, at least one comparison is strict, then $x \succ_u y$.
    \item[(A)] {\bf Anonymity}:\hypertarget{a}{} For any permutation $\sigma$ of $\mathcal{S}$, and any $u \in \mathcal{U}$, $F(u^1,\ldots, u^S) = $ $ F(u^{\sigma(1)}, \ldots, u^{\sigma(S)})$.
\end{itemize}

Our next result shows that, restricted to comparable profiles, an Arrovian social planner who satisfies (\hyperlink{cuc}{CUC}) is necessarily utilitarian.\footnote{Note that both (\hyperlink{cuc}{CUC}), via \autoref{uniquenessprop}, and (\hyperlink{a}{A}), remain well-defined restricted to $\mathcal{C}$.}

\begin{theorem}\label{utiliarianthm}
    Suppose $F: \mathcal{C} \to \mathcal{P}$ is a social welfare functional for comparable profiles. Then $F$ satisfies (\hyperlink{cuc}{CUC}), (\hyperlink{iia}{IIA}), (\hyperlink{sp}{SP}), and (\hyperlink{a}{A}) if and only if:
    \[
        x \succeq_u y \quad \iff \quad \sum_{s \in \mathcal{S}} u^s(x) \ge \sum_{s \in \mathcal{S}} u^s(y),
    \]
    for all $x,y \in \mathcal{X}$ and $u \in \mathcal{C}$.
\end{theorem}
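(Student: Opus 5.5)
The sufficiency direction is routine. On $\mathcal{C}$ the utilitarian ordering satisfies (\hyperlink{iia}{IIA}), (\hyperlink{sp}{SP}) and (\hyperlink{a}{A}) directly. Replacing $u^s$ by $\alpha u^s + \beta^s$ multiplies $\sum_s u^s(x) - \sum_s u^s(y)$ by $\alpha>0$, so (\hyperlink{cuc}{CUC}) also holds; by \autoref{uniquenessprop} the transformed profile stays in $\mathcal{C}$. For necessity I would follow the classical d'Aspremont--Gevers/Roberts route in three steps. First, show the domain $\mathcal{C}$ is rich enough to support welfarism. Second, derive an ordering on $\mathbb{R}^S$. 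Third, use (\hyperlink{cuc}{CUC}), (\hyperlink{a}{A}) and (\hyperlink{sp}{SP}) to pin that ordering down as the sum.

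\textbf{Richness of $\mathcal{C}$.} The claim is: for any three distinct allocations $x,y,z\in\mathcal{X}$ and any vectors $a,b,c\in\mathbb{R}^S$, some $u\in\mathcal{C}$ has $u(x)=a$, $u(y)=b$, $u(z)=c$. The construction is the step I expect to be the main obstacle; I would use quasilinear profiles. Since $d\ge 2$, pick a unit vector $e$ such that the orthogonal projections $\pi$ of $x,y,z$ onto $e^\perp$ are distinct; generic $e$ works. Set
\[
u^s(w)=\langle w,e\rangle + h^s(\pi(w)),
\]
where each $h^s$ is a smooth function on $e^\perp$ interpolating the required values at the three projected points. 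Then $\nabla u^s\cdot e=1$, so $u^s$ is critical-point-free. Moreover, $\varphi_t(w)=w+te$, on its maximal domain inside the convex open set $\mathcal{X}$, is a common utility unit, so $u\in\mathcal{C}$. Some care is needed to check that the commodity domain is open with interval fibres containing $0$, which follows from convexity and openness of $\mathcal{X}$.

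\textbf{Welfarism.} Given richness, (\hyperlink{iia}{IIA}) together with (\hyperlink{sp}{SP}) yields the standard strong neutrality argument. Pareto indifference plus IIA, applied to profiles through three allocations, produces a single ordering $R$ on $\mathbb{R}^S$ such that $x\succeq_u y$ iff $u(x)\,R\,u(y)$ for all $u\in\mathcal{C}$. The three-point richness above is exactly what the usual proof of the welfarism theorem consumes. Hence $R$ is well defined, complete and transitive, and inherits Pareto and anonymity (invariance under coordinate permutations).

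\textbf{Identifying $R$.} Since the profile-wide transformations in (\hyperlink{cuc}{CUC}) may be realized within $\mathcal{C}$, I obtain two properties. First, $a\,R\,b$ iff $(\alpha a+\beta)\,R\,(\alpha b+\beta)$ for all $\alpha>0$ and $\beta\in\mathbb{R}^S$. Taking $\beta=-b$ and $\alpha=1$ gives $a\,R\,b\iff (a-b)\,R\,0$. Second, the indifference class of $0$ is closed under addition: if $p\sim 0$ and $q\sim 0$, then $p+q\sim q\sim 0$ by translation. Anonymity gives $a\sim\sigma a$, hence $a-\sigma a\sim 0$, for every permutation $\sigma$. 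Summing over the $S$ cyclic shifts $\sigma$ gives $S a - \big(\sum_s a^s\big)\mathbf{1}\sim 0$. Scaling by $1/S$ then gives $a\sim \bar a\,\mathbf{1}$, where $\bar a$ is the mean of $a$. Strong Pareto ranks constant vectors by their level. Therefore $a\,R\,b$ iff $\sum_s a^s\ge\sum_s b^s$, which is the utilitarian rule.
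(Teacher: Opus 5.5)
Your proposal is correct and follows essentially the same route as the paper: \autoref{richnesslemma} is your quasilinear richness construction (a direction $v$ in the direction space of $\mathcal{X}$ not parallel to any $x_k - x_{k'}$, interpolation in the transverse coordinates, and common unit $\varphi_t(x) = x + tv$), after which the paper runs the d'Aspremont--Gevers argument for welfarism, anonymity of $R^*$, and the identification of $R^*$ as the sum. The differences are minor: you carry out the final identification yourself (translation invariance, additivity of the indifference class of $0$, cyclic shifts, Pareto) where the paper cites d'Aspremont--Gevers, and you take for granted the step from (\hyperlink{a}{A}) to $a \sim \sigma a$, which the paper proves via transpositions, since for $\sigma^2 = \mathrm{id}$ the profile $\sigma u$ swaps the values at $x$ and $x'$ and so gives $a\,R^*\,\sigma a \iff \sigma a\,R^*\,a$.
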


Methodologically, the proof of \autoref{utiliarianthm} follows closely that of \cite{dAspremont1977}. The key difference is that, rather than proceeding from an unrestricted domain, we consider only comparable profiles of smooth utilities, and likewise require the resulting social preference to admit a smooth and critical-point-free representation.\medskip

Relative to the social choice literature, the novelty of \autoref{utiliarianthm} is in its use of our notion of comparability to justify the assumption of unit-comparability, rather than imposing it exogenously.\footnote{See, e.g., discussion in \cite{FleurbaeyHammond2004}.} This provides a conceptual basis for interpreting the sum of these utilities as an ethically meaningful quantity.\footnote{A long-standing critique of utilitarianism is that, absent some notion of utility unit, it is unclear why any particular tuple of representations should be singled out for aggregation. Arrow writes: ``I’d like to be utilitarian but the only problem is I have nowhere those utilities come from. $\ldots$ What are those objects we are adding up? I have no objection to adding them up if there’s something to add." (\citealt{kelly1987interview}, p.\ 59). Our notion of comparability provides a theoretical basis for ensuring such sums are meaningful.} However, our definition of comparability presents a wrinkle---not all profiles are comparable. Thus \autoref{utiliarianthm} serves to verify that the classical characterization of utilitarianism remains valid, despite this restriction.

\subsection{A Characterization of Comparability}

Suppose the planner observes a profile $(u^1,\ldots, u^S) \in \mathcal{U}$. At any $x \in \mathcal{X}$, let $r^\mathcal{S}(x)$ denote the rank of the $d \times S$ matrix whose columns are the gradients of each $u^s$:
\begin{equation}\label{rankdef}
    r^\mathcal{S}(x)= \textrm{rank} \begin{bmatrix}
        & \vdots & \\
        \cdots & \nabla u^s(x) & \cdots\\
        & \vdots & 
    \end{bmatrix}.
\end{equation}
We say that the profile has {\bf constant rank} if $r^\mathcal{S}(x)$ is constant. This is a mild regularity condition which rules out singularities where the dimension of the space spanned by the normal vectors of the underlying indifference surfaces changes discretely. For any profile, the set of allocations at which $r^\mathcal{S}(x)$ is locally constant is always open and dense in $\mathcal{X}$.\footnote{See, e.g., \cite{lee2002} Theorem 4.14.}

\begin{theorem}\label{cardexistance}
    Suppose that $(u^1,\ldots, u^S)$ has constant rank. Then the profile is interpersonally comparable if, and only if, for all $x\in \mathcal{X}$:
    \begin{equation}\label{cardconsist}
        0 \not \in \textrm{\emph{Aff}}\bigg\{\nabla u^1(x), \ldots, \nabla u^S(x)\bigg\}.
    \end{equation}
\end{theorem}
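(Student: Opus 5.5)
The plan is to reduce comparability to the existence of a smooth vector field $v$ on $\mathcal{X}$ solving the pointwise linear system
\[
\nabla u^s(x)\cdot v(x) = 1 \qquad \text{for all } s \in \mathcal{S},\ x \in \mathcal{X},
\]
and then to take $\varphi$ to be the maximal flow of $v$.

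\emph{Necessity.} Suppose $(\mathcal{D},\varphi)$ is a common utility unit. Differentiate $u^s(\varphi_t(x)) = u^s(x)+t$ in $t$ at $t=0$ (legitimate, since $\mathcal{D}^{(x)}$ is an open interval containing $0$). This gives $\nabla u^s(x)\cdot\dot\varphi(x)=1$ for every $s$. Now take any affine combination $\sum_s\lambda_s\nabla u^s(x)$ with $\sum_s\lambda_s=1$. Its inner product with $\dot\varphi(x)$ equals $1$, so it cannot be zero, and \eqref{cardconsist} follows. This direction does not use constant rank.

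\emph{Pointwise solvability.} Let $G(x)$ be the $d\times S$ gradient matrix in \eqref{rankdef}. The system $G(x)^\intercal v=\mathbf 1$ is solvable if and only if $\mathbf 1 \perp \ker G(x)$, i.e.\ if and only if $G(x)\lambda=0$ forces $\sum_s\lambda_s=0$. This is equivalent to \eqref{cardconsist}. Indeed, if some $\lambda$ had $G\lambda=0$ and $\sum_s\lambda_s=c\neq0$, then $\lambda/c$ would express $0$ as an affine combination of the gradients; the converse is immediate.

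\emph{Smooth selection.} This is the step I expect to be the main obstacle, and it is where constant rank $r$ enters.
\begin{itemize}
\item[(a)] Near any $x_0$, choose indices $i_1,\ldots,i_r$ whose gradients are independent at $x_0$. By continuity they stay independent on a neighborhood $W$. By constant rank they span the column space of $G$ throughout $W$.
\item[(b)] Let $B$ be the $d\times r$ matrix of these gradients and set $v_W=B(B^\intercal B)^{-1}\mathbf 1$. This is smooth on $W$ and satisfies the $r$ chosen equations.
\item[(c)] For any other index $j$, write $\nabla u^j=\sum_k a_k\nabla u^{i_k}$ on $W$. The vector $(a,-e_j)$ lies in $\ker G$, so by pointwise solvability $\sum_k a_k=1$. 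Hence $\nabla u^j\cdot v_W=\sum_k a_k=1$ as well.
\item[(d)] The set of solutions at each point is affine, hence convex. So a smooth partition of unity subordinate to such neighborhoods glues the local $v_W$ into a global smooth solution $v$.
\end{itemize}
(Alternatively, $v=G(G^\intercal G)^{+}\mathbf 1$ is smooth because the pseudoinverse is smooth under constant rank.)

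\emph{Integration.} Let $\varphi:\mathcal{D}\to\mathcal{X}$ be the maximal flow of $v$ on the open set $\mathcal{X}$. By the fundamental theorem on flows, $\mathcal{D}$ is open, each $\mathcal{D}^{(x)}$ is an open interval containing $0$, $\varphi$ is smooth, and the flow satisfies $\varphi_0=\mathrm{id}$ and the group law exactly in the form (ii), including the domain statement. Maximality of the flow gives maximality of $(\mathcal{D},\varphi)$, because any extension of a transfer function must be an integral curve of the same generator $v$. Finally, along each integral curve,
\[
\tfrac{d}{dt}\,u^s(\varphi_t(x))=\nabla u^s\cdot v=1,
\]
so $u^s(\varphi_t(x))=u^s(x)+t$ on all of $\mathcal{D}$, for every $s$. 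This also yields (N.1) and (N.2), so $\varphi$ is a common utility unit and the profile is comparable.
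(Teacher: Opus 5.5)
Your proposal is correct and follows essentially the same route as the paper. Both reduce comparability to a smooth field $v$ with $\nabla u^s\cdot v=1$ for every $s$, build it locally as $B(B^\intercal B)^{-1}\mathbf 1$ from $r$ locally independent gradients, use the affine-hull condition to force the coefficients of each remaining gradient to sum to one, glue the local fields with a partition of unity, and integrate the flow; necessity is the same differentiation at $t=0$. Your alternative $v=G(G^\intercal G)^{+}\mathbf 1$ is also valid and is a slight shortcut the paper does not use: under constant rank the pseudoinverse is smooth, so it gives a global solution directly and avoids the gluing step.
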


\autoref{cardexistance} provides a simple, geometric characterization of when a constant-rank profile is interpersonally comparable: the affine hull of the gradients of each utility must never contain the zero vector. This serves as a straightforward test for verifying the comparability of a given profile.\medskip

By the Fredholm alternative (e.g.\ \citealt{gale1989theory}, Theorem 2.5), \eqref{cardconsist} holds precisely when, at each allocation $x \in \mathcal{X}$, there exists some vector whose dot product with every gradient is equal to one. Such a vector may be regarded as a choice of common, `infinitesimal numeraire' for the profile at $x$. By smoothly selecting such a vector at each allocation, we obtain a vector field which is the generator $\dot{\varphi}$ of some common utility unit, $\varphi$.\medskip

In \autoref{quasilinearsect}, we provide a complementary description of comparable profiles. We show a profile $(u^1,\ldots, u^S)$ is comparable, with common unit $\varphi$, if and only if there exists a global change of coordinates on $\mathcal{X}$ which (i) simultaneously renders each $u^s$ quasilinear, with respect to a common numeraire good, and (ii) maps each transfer function $\varphi_t$ into a translation along this numeraire's axis. Thus our notion of comparability may conceptually be regarded as simply requiring quasilinearity, but only up to change of coordinates.

\section{Existence of Common Numeraires}

Suppose a planner observes only a preference profile $(\succsim^1,\ldots, \succsim^S)$. For any choice of common virtual numeraire, \autoref{virtnumthm} guarantees the existence of a unique profile of comparable representations, up to individual additive constants. By \autoref{utiliarianthm}, these uniquely identify the social preference chosen by an Arrovian planner.\footnote{In particular, the additive constants have no effect on the resulting social preference.} Thus when multiple common numeraires exist, there may be multiple, equally justifiable utilitarian social preferences.\medskip

We consider a social planner who is selecting among small policy shifts from a status quo allocation. Such choices may be viewed as determining the \emph{direction} of reform.\footnote{Such problems of local welfare analysis have a long history in economics, including in welfare economics, cost-benefit analysis, and optimal taxation; see, for example, \cite{guesnerie1977direction, milleron1970, arrow1970uncertainty, radner1993note}. Recent contributions include \cite{saez2016generalized}, \cite{schlee2023money}, and \cite{sher2024generalized}.} We will seek to determine whether any, and if so, how many, profiles of comparable representations exist, locally about the status quo, for a given preference profile. This, in turn, characterizes the range of valid utilitarian objectives such a planner could entertain, when faced with only ordinal information.\medskip

For ease of exposition, we illustrate our results in the special case of profiles which admit a particular kind of locally separable representation. More generally, in \autoref{ordinalsect}, we provide a complete description of the set of common, local virtual numeraires, for general profiles. This fully characterizes all possible utilitarian objectives for problems of local welfare analysis.\medskip

Suppose, then, that $x_0$ is some status quo allocation, and there exists a subprofile of preferences $(\succsim^1, \ldots, \succsim^{d+1})$ such that, for some (hence any) choice of utilities, the gradients of the first $d$ representations are linearly independent.\footnote{By possibly relabeling our original profile, we may suppose this subprofile consists of the first $d+1$ preferences.} We say such a subprofile admits a {\bf locally separable representation} if there exists some neighborhood $\mathcal{N}$ of $x_0$, utility representations $v^1, \ldots, v^d$ and $v^{d+1}$, of $\succsim^1\!\vert_{\mathcal{N}}, \ldots, \succsim^{d+1} \!\vert_{\mathcal{N}}$, and vector $\lambda \in \{-1,1\}^d$, such that:
\begin{equation}\label{seprepglobal}
    v^{d+1}(x) = \sum_{i=1}^d \lambda^i v^i(x),
\end{equation}
for all $x\in \mathcal{N}$. In such cases, we may view the locally defined utility $v^{d+1}$ as an additively separable function, but of the first $d$ utilities, rather than the original coordinates on $\mathcal{X}$.\medskip

When $d = 2$, such a subprofile admits a separable representation precisely when the three families of indifference curves, corresponding to each of the three preferences, satisfy the hexagon condition of \cite{debreu1959topological} on some neighborhood of $x_0$. When $d \ge 3$, the existence of a locally separable representation is characterized by a simple modification of Theorem 1 of \cite{goldman1964note}.\medskip

Our next result characterizes the set of common, local numeraires for any such subprofile. Since any common numeraire for the full profile must also remain so for any such subprofile, this ensures all possible common numeraires belong to a simple, parametric family.

\begin{theorem}\label{locseptheorem}
    Suppose that $(\succsim^1,\ldots, \succsim^{d+1})$ admits a locally separable representation \eqref{seprepglobal} on some connected neighborhood $\mathcal{N}$ of $x_0$ on which $v(x) = \big(v^1(x), \ldots, v^d(x)\big)$ is smoothly invertible. Then a virtual commodity $\varphi$ on $\mathcal{N}$ is a common virtual numeraire if and only if: 
    \[
        \varphi_t(x) = v^{-1}\bigg( e^{at} \, v(x) + \frac{e^{at}-1}{a} b\bigg) \quad \textrm{ and }\quad  \begin{cases}
            a v(x) + b \gg 0\\
           \lambda \cdot \big(a v(x) + b\big) > 0,
        \end{cases} 
    \]
    for scalars $a, b^1, \ldots, b^d$ and all $x \in \mathcal{N}$.\footnote{When $a = 0$, we define $\varphi_t$ to be $v^{-1}\big(v(x) + bt\big)$.}
\end{theorem}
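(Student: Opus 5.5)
We would work throughout in the coordinates $y = v(x)$ on $v(\mathcal{N})$. This change of variables is legitimate because $v$ is a diffeomorphism onto its image. In these coordinates each $v^i$, for $i \le d$, is the coordinate function $y^i$, and $v^{d+1}$ is the linear function $\lambda \cdot y$.

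\textbf{Step 1: reduce to a generator equation.} Suppose $\varphi$ is a common virtual numeraire on $\mathcal{N}$. Applying \autoref{virtnumthm}(i) to each $\succsim^s\vert_{\mathcal{N}}$, $\varphi$ is a utility unit for some representation $u^s = f^s \circ v^s$, with $f^s$ smooth and strictly increasing. Differentiating \eqref{uu} at $t=0$ gives $\nabla u^s(x)\cdot\dot{\varphi}(x) = 1$ for each $s$.

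Let $w(y)$ denote the generator pushed forward to $y$-coordinates. For $i \le d$ the equation reads $f^{i\prime}(y^i)\, w^i(y) = 1$. Hence $w^i(y) = g^i(y^i) := 1/f^{i\prime}(y^i)$ depends only on $y^i$, and $g^i > 0$ wherever $f^{i\prime} > 0$. For $s = d+1$ the equation reads $f^{d+1\prime}(\lambda\cdot y)\sum_i \lambda^i w^i(y) = 1$. We therefore obtain the functional equation
\[
\sum_{i=1}^d \lambda^i g^i(y^i) = k(\lambda\cdot y), \qquad k := 1/f^{d+1\prime} > 0 .
\]

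\textbf{Step 2: solve the functional equation.} Differentiate the equation in $y^i$, using $(\lambda^i)^2 = 1$. This gives $g^{i\prime}(y^i) = k'(\lambda\cdot y)$ for every $i$. Since $d \ge 2$, compare $i \ne j$: the common value $g^{i\prime}(y^i) = g^{j\prime}(y^j)$ is a function of $y^i$ alone and also of $y^j$ alone. So it is locally constant on small boxes around each point. Because $\mathcal{N}$, and hence $v(\mathcal{N})$, is connected, it equals a global constant $a$.

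Integrating, $g^i(y^i) = a y^i + b^i$ for constants $b^i$. Substituting back shows $k(s) = a s + \lambda\cdot b$ is consistent. Thus $w(y) = a y + b$. Positivity of the $g^i$ and of $k$ give exactly $a v(x) + b \gg 0$ and $\lambda\cdot(a v(x)+b) > 0$.

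Finally, $\varphi$ is the flow of its generator, by the group property (ii) together with maximality. So in $y$-coordinates $\varphi_t$ is the flow of the linear ODE $\dot y = a y + b$. That flow is $e^{at}y + \frac{e^{at}-1}{a}b$, or $y + bt$ when $a=0$. Pulling back by $v^{-1}$ yields the stated form.

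\textbf{Step 3: the converse.} Take $\varphi$ of the stated form with the two positivity conditions holding. Define $f^i$ by $f^{i\prime}(s) = 1/(a s + b^i)$, and $f^{d+1}$ by $f^{d+1\prime}(s) = 1/(a s + \lambda\cdot b)$. These are logarithms, or linear functions when $a=0$, and they are strictly increasing on the relevant ranges by the positivity conditions. Set $u^s = f^s\circ v^s$.

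By construction $\nabla u^s\cdot\dot\varphi = 1$, so $\frac{d}{dt}u^s(\varphi_t(x)) = 1$ along flow lines. Integrating gives \eqref{uu} for every $s$. As the footnote to \eqref{uu} notes, \eqref{uu} implies (\hyperlink{n1}{N.1}) and (\hyperlink{n2}{N.2}), so $\varphi$ is a common virtual numeraire. One can also check \eqref{uu} directly, since in one variable $e^{at}y^i + \frac{e^{at}-1}{a}b^i$ satisfies $\ln(a\,\cdot + b^i)/a$ shifting by exactly $t$.

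\textbf{Main obstacle.} The delicate step is Step 2 on a domain $v(\mathcal{N})$ that need not be a product or convex set. We must justify that the separate dependence on $y^i$ and on $y^j$ forces a constant. We would argue locally on open boxes contained in $v(\mathcal{N})$, where independence of the coordinates makes the argument clean, and then propagate by connectedness: the set where the derivative equals a given constant is open and closed.

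A secondary point is that $f^i$ is defined only on the range of $y^i$ over $v(\mathcal{N})$. That range is an interval because $\mathcal{N}$ is connected, so the integration in Step 2 is well-posed. The same applies to identifying the maximal commodity domain $\mathcal{D}$ with the maximal existence intervals of the flow within $\mathcal{N}$.
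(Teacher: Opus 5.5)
Your proposal is correct and follows essentially the paper's route. In $v$-coordinates, each component of the generator depends only on its own coordinate, and the $(d+1)$th preference forces the derivatives $F^i_i$ to be equal and constant. That gives the linear generator $av+b$, with the positivity constraints coming from (\hyperlink{n2}{N.2}), and integrating it yields the stated flow. The only differences are cosmetic: you obtain the one-variable dependence from the normalized representations of \autoref{virtnumthm}(i) rather than directly from (\hyperlink{n1}{N.1}), and you give an explicit logarithmic construction for the converse where the paper simply asserts it.
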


\autoref{locseptheorem} shows that, when some subprofile admits a locally separable representation, every common, local virtual numeraire is constrained to take on a simple, parametric form.  More generally, when \eqref{seprepglobal} holds on $\mathcal{N} = \mathcal{X}$, it fully characterizes the subprofile's common numeraires, providing a parametric representation for any common numeraire of the full profile. It also provides a simple description, for any valid choice of parameters $(a, b^1, \ldots, b^d)$, of the comparable profile denominated in that choice of unit.

\begin{corollary}\label{compcorr}
    For parameters $(a,b^1,\ldots, b^d)$ defining a common numeraire, the unique, normalized, comparable representations with this unit are given by:
    \[
        u^i(x) = \begin{cases}\frac{1}{a} \ln\bigg(\frac{a\, v^i(x) + b^i}{a v^i(x_0) + b^i} \bigg) & a \neq 0\\
        \frac{v^i(x) - v^i(x_0)}{b^i} & a = 0
        \end{cases},
    \]
    for all $i = 1,\ldots, d+1$, where $b^{d+1} \equiv \lambda \cdot b$.
\end{corollary}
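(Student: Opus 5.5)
The plan is to pass to the coordinates $w = v(x)$ on $v(\mathcal{N})$. By \autoref{locseptheorem}, in these coordinates the common numeraire $\varphi$ acts as the affine flow
\[
    \psi_t(w) = e^{at} w + \frac{e^{at}-1}{a}\, b,
\]
whose generator is $\dot\psi(w) = a w + b$. The proof then has three steps: guess the utilities in $w$-coordinates, verify the unit equation \eqref{uu} directly for each one, and invoke \autoref{virtnumthm} for uniqueness.

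\emph{Step 1 (candidate for $i\le d$).} We look for a utility of the form $u^i(x) = g^i(v^i(x))$ with $g^i$ strictly increasing. Condition \eqref{uu} becomes $g^i(\psi^i_t(w)) = g^i(w^i) + t$. The $i$-th coordinate of $\psi_t(w)$ depends only on $w^i$, and $\psi^i_t(w)$ is the flow of the scalar ODE $\dot w^i = a w^i + b^i$. So $g^i$ must satisfy $g^{i\prime}(s)(as+b^i) = 1$. For $a\ne 0$ this integrates to $g^i(s) = \frac1a\ln(as+b^i) + c$. The check is immediate: $a\psi^i_t + b^i = e^{at}(aw^i+b^i)$, so taking logs and dividing by $a$ adds exactly $t$. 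The logarithm is well defined because of the positivity condition $av(x)+b\gg 0$ from \autoref{locseptheorem}, and $g^i$ is increasing since $g^{i\prime}=1/(as+b^i)>0$. For $a=0$ the flow is $w^i + b^i t$, so $g^i(s)=s/b^i$, with $b^i>0$ by the same condition.

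\emph{Step 2 (the $(d+1)$-th utility).} We apply the same argument to $v^{d+1} = \lambda\cdot v$. Linearity gives
\[
    \lambda\cdot\psi_t(w) = e^{at}(\lambda\cdot w) + \frac{e^{at}-1}{a}\,(\lambda\cdot b),
\]
which is again the scalar affine flow, now with constant $b^{d+1}=\lambda\cdot b$. The second positivity condition, $\lambda\cdot(av+b)>0$, makes $\frac1a\ln(a v^{d+1}+b^{d+1})$ (or $v^{d+1}/b^{d+1}$ when $a=0$) well defined and increasing.

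\emph{Step 3 (normalization and uniqueness).} We subtract the value at $x_0$ to get the stated normalized formulas, which vanish at $x_0$. Each $u^i$ is a strictly increasing transform of $v^i$, so it represents $\succsim^i$ on $\mathcal{N}$, and by Steps 1–2 $\varphi$ is a utility unit for it. The final clause of \autoref{virtnumthm}, applied on $\mathcal{N}$, says two representations sharing a unit differ by a constant. The normalization $u^i(x_0)=0$ fixes that constant, which gives uniqueness. The only delicate points are sign and domain conditions for the logarithm, and both are supplied exactly by the constraints in \autoref{locseptheorem}; the case $a=0$ is handled as the stated limit.
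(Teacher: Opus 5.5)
Your proposal is correct, and it follows essentially the route the paper has in mind. The paper gives no separate proof: it treats the corollary as a direct consequence of \autoref{locseptheorem} (in $v$-coordinates the generator is $av+b$) combined with \autoref{virtnumthm}, whose construction $H(q)=\int_{\bar q}^{q} d\xi/h(\xi)$ with $h(s)=as+b^i$ is exactly your ODE $g^{i\prime}(s)(as+b^i)=1$. Your uniqueness step via the final clause of \autoref{virtnumthm} is valid on $\mathcal{N}$ because $\mathcal{N}$ is assumed connected in \autoref{locseptheorem}.
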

For many classes of preferences, existence of a subprofile admitting such a representation is simply a matter of society being large enough. In such cases, \autoref{locseptheorem} describes the common numeraires of any sufficiently diverse profile. Examples include Cobb-Douglas preferences over consumption bundles, and expected utility preferences over lotteries.

\begin{example}
    Suppose $\mathcal{X} = \mathbb{R}^2_{++}$, $x_0 = (1,1)$, and let $(\succsim^1, \succsim^2, \succsim^3)$ be represented by Cobb-Douglas utilities:
    \[
        \tilde{u}^i(x^1, x^2) = (x^1)^{\alpha^i}(x^2)^{1-\alpha^i} \quad \textrm{ and } \quad \tilde{u}^3(x^1, x^2) = (x^1)^\beta(x^2)^{1-\beta},
    \]
    for $i=1,2$, where we assume that $0 < \alpha^1 < \beta < \alpha^2 < 1$. By direct calculation:
    \[
        \ln \tilde{u}^3 = \gamma^1 \ln \tilde{u}^1 + \gamma^2 \ln \tilde{u}^2,
    \]
    where $\gamma^1 = \frac{\alpha^2 - \beta}{\alpha^2 - \alpha^1}$ and $\gamma^2 =  \frac{\beta - \alpha^1}{\alpha^2 - \alpha^1}$. Thus we may take $v^i = \gamma^i \ln \tilde{u}^i$, for $i = 1,2$, and $v^3 = v^1 + v^2$. By \autoref{locseptheorem}, every common, local virtual numeraire $\varphi$ takes the form:
    \[
        \varphi_t(x) = v^{-1}\bigg( e^{at} \, v(x) + \frac{e^{at}-1}{a} b\bigg),
    \]
    subject to the positivity constraint $a \gamma^i \ln \tilde{u}^i + b^i > 0$ for all $i = 1,2$.\footnote{Here, since $\alpha^1 < \beta < \alpha^2$, $\lambda = (1,1)$ hence the final constraint $\lambda \cdot (av(x) + b) > 0$ is redundant.} Since we have a separable representation globally, \autoref{locseptheorem} characterizes globally-defined numeraires. Now, because $\ln \tilde{u}^i$ takes on every real value, this positivity constraint forces $a = 0$. Thus, every globally-defined common numeraire takes the form:
    \[
        \varphi_t(x) = \begin{bmatrix}
            e^{c^1 t} \, x^1\\
            e^{c^2 t} \, x^2           
        \end{bmatrix},
    \]
    where $c^1$ and $c^2$ depend on the choice of $b \gg 0$.\footnote{Concretely:
    \[
        c^1 = \frac{(1-\alpha^1)b^2/\gamma^2 - (1-\alpha^2)b^1/\gamma^1}{\alpha^2 - \alpha^1} \quad \textrm{ and } \quad c^2 = \frac{\alpha^2 b^1/\gamma^1 - \alpha^1 b^2/\gamma^2}{\alpha^2 - \alpha^1}.
    \] These may be obtained by explicitly computing the inverse $v^{-1}$.}  By \autoref{compcorr}, the normalized, comparable utilities associated with any such choice of parameters are:
    \[
        u^i = \frac{\gamma^i}{b^i} \ln \tilde{u}^i, \quad i = 1,2, \quad \textrm{ and } \quad u^3 = \frac{\gamma^1 \ln \tilde{u}^1 + \gamma^2 \ln \tilde{u^2}}{b^1 + b^2}.
    \]
    Thus the planner should not use the sum of the original Cobb-Douglas utilities $\tilde{u}^s$ as an objective, as these fail to be comparable. Instead, they should maximize the sum of the comparable representations $u^s$ or, equivalently, maximize the Nash product of the Cobb-Douglas utilities, with weights reflecting the particular choice of numeraire, as determined by the parameters $b$.\hfill $\blacksquare$
\end{example}

\subsection{Non-Separable Profiles}

The existence of a separable profile ensures the problem of finding a common, local numeraire takes on a rigid, linear form. Given a representation \eqref{seprepglobal}, by considering the local utility possibility space $v(\mathcal{N})$, rather than $\mathcal{N}$ itself, the indifference curves of all $d+1$ preferences become hyperplanes. In $(v^1, \ldots, v^d)$-coordinates, \autoref{locseptheorem} simply says a virtual commodity is a common numeraire precisely when its generator has the form:
\[
    \dot{\varphi}(v) = a v + b.
\]

Conversely, when no subprofile admits a locally separable representation, determining the existence of common numeraires is governed by an overdetermined system of partial differential equations. In general, no solutions to these equations exist unless the system satisfies particular integrability, or compatibility conditions. Using geometric techniques, we characterize the solution set of such systems in \autoref{ordinalsect}.\medskip

This provides criteria not only for determining the set of local numeraires for a given preference profile, but also for obtaining necessary and sufficient conditions on \emph{families} of profiles, for the existence of some tuple of comparable local representations. For example, when $\mathcal{X} = \mathbb{R}^3_{++}$, we show that a profile of preferences represented by utilities of the form:
\[
    u^i(x) = x^i, \quad i = 1,2,3, \quad \textrm{ and } \quad u^4(x) = f(x^1+x^2) + x^3,
\]
for some smooth, increasing and strictly concave $f$, admits a common numeraire if, and only if, $f$ is additionally the Bernoulli utility of a HARA expected utility function (\citealt{merton1971optimum}).\footnote{Moreover, when non-empty, the set of local numeraires always depends smoothly on exactly three parameters. We prove this in \autoref{ordinalsect}.}

\section{Examples \& Applications}\label{exandapp}

\subsection{Aggregation of Expected Utility Preferences}\label{euex}

  \autoref{utiliarianthm} shows that, when presented with a comparable profile of utilities, an Arrovian planner will aggregate them linearly. However, as noted, when only a profile of ordinal preferences is observed, there may be multiple, distinct, comparable profiles of representations, each corresponding to a different utility unit. This can lead to multiple, equally valid choices of objective for such a planner.\footnote{See \cite{sen1970collective} for a discussion.}\medskip
    
  One notable example is the case of expected utility preferences. Indeed, many of the most common, and well-known approaches to aggregating such preferences (e.g.\ \citealt{harsanyi1953cardinal, harsanyi1955cardinal, kaneko1979nash, dhillon1999relative}) all arise as utilitarian aggregation of comparable profiles, differing only in their implicit choice of numeraire unit.\medskip

  Let $\mathcal{Z} \subset \mathbb{R}$ denote a finite set of monetary prizes, and suppose that $\mathcal{X}$ is the interior of the simplex $\Delta(\mathcal{Z})$.  Let $(\succsim^1,\ldots, \succsim^S)$ be a profile of monotone expected utility preferences on $\mathcal{X}$. We will denote the highest prize in $\mathcal{Z}$ by $z^*$ and the lowest by $z_*$. Finally, let $(\tilde{u}^1,\ldots, \tilde{u}^S)$ denote any profile of expected utility representations.\medskip
  
  Common virtual numeraires for such profiles can take several forms. Suppose $\gamma \in \mathbb{R}^\mathcal{Z}$ is any zero-sum vector such that $p + \gamma \succ^s p$ for some $p \in \mathcal{X}$ and every $s \in \mathcal{S}$. Then any {\bf translation} commodity, $\varphi_t(p) = p + t\gamma$, defines a common numeraire.\footnote{Here, for any lottery $p$, the interval $\mathcal{D}^{(p)}$ is simply:
    \[
        \bigg(\max_{\{i : \gamma_i > 0\}}\bigg(\frac{-p_i}{\gamma_i}\bigg), \min_{\{i : \gamma_i < 0\}} \bigg(\frac{-p_i}{\gamma_i}\bigg) \bigg).
    \]
    Since $\gamma$ is zero-sum and not identically equal to zero, the max and min are taken over non-empty sets.} The unique, comparable utilities denominated in this unit are:
    \[
        u^s(p) = \frac{1}{\gamma \cdot \nabla \tilde{u}^s} \tilde{u}^s(p),
    \]
    up to individual additive constants.\medskip

    In this case, \autoref{utiliarianthm} calls for the planner to maximize a weighted sum of the expected utilities $\tilde{u}^s$, as in \cite{harsanyi1953cardinal, harsanyi1955cardinal}. Our results clarify that the indeterminacy of the weights is a consequence of the fact that, in general, there are many possible, valid choices of translation direction $\gamma$, and the social weights $1/(\nabla \tilde{u}^s \cdot \gamma)$ depend upon the particular selection.\footnote{See also \cite{mononen2025observable} for more discussion.}\medskip

    In the special case when $\gamma$ is proportional to $\delta_{z^*} - \delta_{z_*}$, the comparable profiles are still of expected utility form. However, now the range of each $u^s$ is an open interval of common length.\footnote{Concretely, if $\varphi_t(p) = p + t \varepsilon (\delta_{z^*} - \delta_{z_*})$, for some $0 < \varepsilon < 1$, then:
    \[
        t = u^s\big(\varphi_t(p)\big) - u^s(p) = u^s\big(p + t \varepsilon(\delta_{z^*} - \delta_{z_*}) - p \big) = t \varepsilon \big[u^s(\delta_{z^*}) - u^s(\delta_{z_*})\big],
    \]
    hence the range of each $u^s$ is an open interval of length $1/\varepsilon$, for all $s \in \mathcal{S}$.} As a consequence, the utilitarian sum of comparable utilities is, up to an affine transform, equal to the objective of a \emph{relative} utilitarian (\citealt{dhillon1999relative}).\medskip

    Alternatively, defining $\varphi_t(p) = e^tp + (1-e^t)\delta_{z_*}$ also yields a common numeraire. Here, $\varphi_t$ corresponds to a {\bf dilation} about the least-desirable prize, $\delta_{z_*}$.\footnote{The associated commodity domain is defined by $\mathcal{D}^{(p)} = \big(-\infty, -\ln(1-p_{z^*})\big)$.} As a consequence, the comparable profiles are no longer of expected utility form. Instead, they are given by $u^s(p) = \ln\big(\tilde{u}^s(p) - \tilde{u}^s(\delta_{z_*})\big)$, up to individual additive constants. In this case, \autoref{utiliarianthm} then calls for maximizing the Nash social welfare function (\citealt{kaneko1979nash}).\medskip

    Thus, despite these methods of aggregation admitting distinct, independent axiomatizations, we find each arises as a special case of utilitarianism of comparable profiles. In light of \autoref{utiliarianthm}, this provides a novel, transparent, and unifying axiomatic description for each of these theories. In particular, our results clarify, and justify, the precise role that interpersonal comparisons play in each.\footnote{This has been the subject of extensive debate, e.g.\ \cite{harsanyi1975nonlinear, sen1976welfare, Sen1977NonLinear, roemer2008harsanyi, weymark2005measurement}.}\medskip

    A frequent critique of expected utilities in social choice is that this selection of representation is arbitrary, without ordinal justification (\citealt{Arrow1951, sen1976welfare, roemer2008harsanyi, weymark2005measurement}). One implication of our results is that (normalized) expected utilities, and their logarithms, occupy a privileged position as among the only interpersonally comparable profiles, whereas more general monotone transforms are not.\medskip
    
    In light of this, a natural question is whether there exist other methods of aggregation, corresponding to non-translationary, or dilationary, numeraires.\footnote{Equivalently, do there exist other profiles of comparable representations, beyond normalized expected utilities and their logarithms, which can be summed to obtain new forms of social preference.} For any sufficiently rich profile, we now show that these two types of numeraire, and their combinations, are the only possibilities.\medskip

    Suppose there exists a locally separable subprofile, $(\succsim^1,\ldots, \succsim^{d+1})$. Any such representation \eqref{seprepglobal} holds globally, where each $v^i$, $i = 1,\ldots, d+1$ is an expected utility function. Without loss of generality, we may normalize each so that $v^i(\delta_{z_*}) = 0$. Then by \autoref{locseptheorem}, every common virtual numeraire for this subprofile is of the form:
    \[
        \varphi_t(p) = v^{-1}\bigg(e^{at} v(p) + \frac{e^{at}-1}{a}b \bigg),
    \]
    where $av(p) + b \gg 0$, for the parameters $a, b^1,\ldots, b^d$. When $a = 0$, every such numeraire is a translation:
    \[
        \varphi_t(p) = v^{-1}\big( v(p) + tb\big) = p + t \gamma,
    \]
    as $v$ is affine. In this case, \autoref{compcorr} confirms that the comparable representations are indeed of expected utility form. If instead $b = 0$, then $\varphi_t$ is a dilation about $\delta_{z_*}$, as:
    \[
        \varphi_t(p) = v^{-1}\big( e^{at} v(p) + (1-e^{at})v(\delta_{z_*})\big) = e^{at}p  +(1-e^{at}) \delta_{z_*},
    \]
    as each $v^i$ is affine, and $v^i(\delta_{z_*}) = 0$. Here, as noted above, the comparable profiles are the normalized logarithms of the expected utilities $v^i$. Thus the only other common utility units are precisely those $\varphi$ that are of `mixed' form, consisting of both a dilation and translation.\medskip
    
    In particular, any general, interpersonally-valid method of aggregating expected utility preferences must rely on comparisons denominated in a common unit of the above form. The sum of expected utilities, and the Nash social welfare function, correspond to the two extremal cases (either $a$ or $b$ equal to zero) within this class. The full parametric family of possible approaches is then obtained simply by summing utilities given in \autoref{compcorr}.

\subsection{Tax Reform}

A large literature studies the direction of marginal tax reform under revenue, feasibility, and distributional constraints (\citealt{guesnerie1977direction, tirole1981tax, king1983welfare, ahmad1984theory}). More recent work provides necessary and sufficient conditions for the existence of Pareto-improving tax reforms (\citealt{bierbrauer2023pareto}) and studies the properties of generalized welfare weights incorporating non-utilitarian criteria (\citealt{saez2016generalized, sher2024generalized}).\medskip

Our theory of comparability supplies an alternative foundation for the household welfare gradients entering a tax-reform problem. Rather than suppose that money is itself an interpersonally valid unit, or introduce household-specific welfare weights on the basis of external considerations, we suppose households' direct utilities possess a common measurement unit $(\mathcal{D}, \varphi)$. This allows us to meaningfully compare households’ marginal utilities of income and, given this common unit, \autoref{utiliarianthm} justifies aggregating utility changes linearly.\medskip

Let $\mathcal{X} = \mathbb{R}^d_{++}$. A market state is a tuple $(p^1, \dots, p^d, w^1, \ldots, w^S) \gg 0$, which we denote by $(p,w)$. Suppose that each household's direct utility $u^s: \mathcal{X} \to \mathbb{R}$ is comparable, with common utility unit $(\mathcal{D}, \varphi)$. We will assume that at every market state, each household's Marshallian demand $x^s(p,w^s)$ belongs to $\mathcal{X}$, i.e.\ is strictly interior. Finally, let $v^s(p,w^s) = u^s\big(x^s(p,w^s)\big)$ denote the indirect utilities associated with the comparable utilities $u^s$.\medskip

For simplicity, we consider proportional price reforms. Given some status quo market state $(p_0, w_0)$, a vector $a \in \mathbb{R}^d$ defines a price path via:
\[
    p^i(t) = p^i_0 e^{t a^i}, \quad i= 1,\ldots d, 
\]
thus a positive $a^i$ corresponds to a proportional increase in the price of the $i$th commodity. Define the {\bf price-exposure} of household $s$ as the vector:
\begin{equation}\label{exposure}
    \mu^s = \frac{\textrm{diag}(p_0) \, x^s_0}{\dot{\varphi}(x^s_0)  \cdot p_0},
\end{equation}
where $x^s_0 \equiv x^s(p_0, w^s_0)$. Thus, for any household, the first-order effect of a price reform $a$ is given by:
\[
    h^s(a) = \frac{\partial}{\partial t} v^s\big(p(t), w^s_0\big) \bigg \vert_{t=0} = - \mu^s \cdot a.
\]
These quantities admit meaningful interpersonal comparisons. To see this, by differentiating both sides of \eqref{uu} at $x^s_0$ with respect to $t$, and evaluating at $t = 0$, we obtain $\dot{\varphi}(x^s_0) \cdot \nabla u^s(x^s_0) = 1$, and hence, by plugging in the first-order condition from the household's utility maximization problem at $(p_0, w_0)$:
\[
    \lambda^s(p_0, w_0^s) = \frac{1}{\dot{\varphi}(x^s_0) \cdot p_0},
\]
where by the envelope theorem, $\lambda^s$ denotes the marginal utility of wealth. Thus the numerator in \eqref{exposure} is simply the vector of household expenditures at the status quo, and the denominator is the dollar cost, at prices $p_0$, of an infinitesimal unit of the common numeraire. Thus the vector $\mu^s$ reflects expenditures denominated in the common utility unit $\varphi$.\medskip

This can yield new insights relative to standard means of computing the impact of a reform. For example, even in simple circumstances, a reform selected on the basis of common-unit utilitarianism may differ significantly from, e.g., the reform that maximizes the sum of money metrics.\footnote{This is conceptually distinct from, e.g.,  equity concerns arising from the general non-concavity of money metric sums; see \cite{schlee2023money} and references therein.}\medskip

\begin{example}
Suppose $\mathcal{X} = \mathbb{R}^2_{++}$, and there are three households, with utilities:
\[
    \tilde{u}^s(x) = \big(\sqrt{\theta^s x^1} + \sqrt{(1-\theta^s)x^2}\big)^2,
\]
where $\theta = \big(.2, .5, .8)$. For any $p \gg 0$, the expenditure functions associated with these $\tilde{u}^s$ are given by:
\[
    e^s(p,\tilde{u}) = \frac{\tilde{u}}{\frac{\theta^s}{p^1} + \frac{1-\theta^s}{p^2}}.
\]
Fix the reference price vector $\bar{p} = (2,1)$. The corresponding money metric utilities, in these reference prices, are then:
\[
    m^s(x) = \bigg(\frac{2}{2-\theta^s}\bigg) \tilde{u}^s(x).
\]
By direct calculation, the $m^s$ may be seen to be incomparable.\footnote{For example, at $\bar{x} = (1,1)$, the gradients satisfy $6 \nabla m^1(\bar{x}) - 9 \nabla m^2(\bar{x}) + 4 \nabla m^3(\bar{x}) = 0$, implying the zero vector belongs to their affine hull. Thus by \autoref{cardexistance} they are not comparable.} However, the normalization $u^s = \ln m^s$ are straightforwardly comparable, under the radial numeraire $\varphi_t(x) = e^t x$.\medskip

Suppose the status quo market state is given by $p_0 = (1,1)$, and $w_0 = (1,1,2)$. At the status quo, each household's demand $x^s_0 = \theta^s w^s_0, (1-\theta^s) w^s_0$, and hence aggregate status quo demands are $Q = (2.3, 1.7)$. Consider two indirect policies which affect the market through proportional price changes:
\[
    a^A = \bigg(\frac{10}{23}, 0\bigg) \quad \textrm{ and }\quad a^B = \bigg(0, \frac{10}{17}\bigg).
\]
Each reform raises one unit of revenue, up to first order.\footnote{That is, $Q \cdot a^A = 1 = Q \cdot a^B$.} Now, under $\varphi$,
\[
h^s(a) = -\theta^s a^1 - (1-\theta^s) a^2,
\]
thus the utilitarian welfare under $\varphi$ of each policy is given by:
\[
    \sum_{s=1}^3 h^s(a^A) = -\frac{15}{23} \quad \textrm{ and } \quad \sum_{s=1}^3 h^s(a^B) = -\frac{15}{17},
\]
and hence the planner would select policy $A$. Conversely, at the status quo, the money metric utility of each household is $m^s(x^s_0) = \big(\frac{2}{2-\theta^s}\big) w^s_0$, hence the marginal utility of wealth of each household is $\big(\frac{2}{2-\theta^s}\big)$. Thus the first-order money metric welfare effect is given by:
\[
    h^s_{mon}(a) = - \bigg(\frac{2 w^s_0}{2-\theta^s}\bigg)\big(\theta^s a^1 + (1-\theta^s)a^2 \big)
\]
Summing these across households yields a criterion which instead selects policy $B$.\footnote{Here, 
\[
    \sum_{s=1}^3 h^s_{mon}(a^A) = -\frac{320}{207} \approx -1.55 \quad \textrm{ versus } \quad \sum_{s=1}^3 h^s_{mon}(a^B) = -\frac{200}{153} \approx -1.31.
\]} However, the money metric sum which generated this conclusion is incoherent. There is no way to rationalize the welfare units $h^s_{mon}$ as denominated in any common unit. This highlights a novel insight arising from our notion of comparability: even common-reference-price money metric utilities may fail to be comparable, leading to judgment reversals relative to the comparable-utilitarian benchmark. This shows that any characterization of the sum of money metric utilities (e.g.\ \citealt{bosmans2018s}) necessarily requires the introduction of external, normative considerations.\hfill $\blacksquare$
\end{example}

\section{Conclusion}\label{conclusions}

This paper develops a theory of objective measurement units for utility. We have shown that our theory is rich enough to provide an appropriate measurement unit for any choice of representation of any preference, and conversely that any virtual commodity which is, in our generalized sense, a numeraire for a preference is the utility unit for some unique (up to additive constant) choice of representation.\medskip

Given that virtual commodities can be defined objectively, without reference to any preference information, it is meaningful to ask when a given commodity is a utility scale for a profile of utilities. This gives rise to a natural notion of interpersonal comparability: a profile of utilities is comparable precisely when they can be rationalized as denominated in a common unit. In such cases, we show an Arrovian planner will always choose to maximize the sum of these utilities.\medskip

This both unifies and extends many existing results on aggregation and welfare. For example, we show that each of the most widely-studied methods of aggregating expected utility preferences arises as utilitarian aggregation of comparable profiles of utility representations. In particular, the differences in these approaches all arise purely due to different choices of implicit utility unit. However, our results also hold for general profiles of smooth preferences, significantly extending these ideas beyond the context of risk.\medskip

Our results suggest numerous questions for further inquiry. For instance, our notion of common utility scales provides a natural framework for considering interdependent preferences, where subjects care not only about their direct allocation, but also those of others. This suggests a pathway to studying \emph{ordinal} preference profiles which admit representations reflecting phenomena such as altruism, envy, or inequality aversion.\medskip

It also provides means of determining which aggregate, ordinal welfare criteria admit meaningful cardinal extensions. For example, while envy-freeness is a black-and-white ordinal property, comparability allows for shades of gray, quantifying the \emph{magnitude} of aggregate envy induced by an allocation. Such ideas may also be of use in developing general inefficiency-bounding theorems, such as those in \cite{gonczarowski2024quantifying}. We leave the development of these and other ideas to future research.

\bibliographystyle{ecta}
\bibliography{bibliography}

\begin{bibunit}
\begin{appendix}

    \section{A Flow Lemma}

    Virtual commodities, as defined in \autoref{commsect}, simply correspond to maximal flows on $\mathcal{X}$ (see Chapter 9 of  \citealt{Lee2002} for formal definitions). This gives rise to a one-to-one correspondence between virtual commodities $(\mathcal{D}, \varphi)$, and their generators, $\dot{\varphi}$.

    \begin{lemma}\label{flowlemma}
        Every virtual commodity admits a unique generator, and conversely, any vector field on $\mathcal{X}$ is the generator of a unique virtual commodity. Moreover, every virtual commodity $(\mathcal{D}, \varphi)$ satisfies:
        \begin{itemize}
            \item[(i)] For any $x \in \mathcal{X}$, if $t \in \mathcal{D}^{(x)}$, then:
            \[
                \mathcal{D}^{\big(\varphi_t(x)\big)} = \mathcal{D}^{(x)} - t = \big\{t'-t : t' \in \mathcal{D}^{(x)} \big\}.
            \]
            \item[(ii)] For all $t \in \mathbb{R}$, the set $\mathcal{X}_t = \{x \in \mathcal{X} : (x,t) \in \mathcal{D}\}$ is open in $\mathcal{X}$, and the map $\varphi_t$ is a diffeomorphism $\mathcal{X}_t \to \mathcal{X}_{-t}$, with inverse $\varphi_{-t}$.
        \end{itemize}
    \end{lemma}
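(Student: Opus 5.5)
The plan is to identify virtual commodities with maximal flows of smooth vector fields on $\mathcal{X}$ and then invoke the fundamental theorem on flows (e.g.\ \citealt{Lee2002}, Theorem 9.12).

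\textbf{Step 1: every virtual commodity has a unique generator.} Since $\varphi$ is smooth on the open set $\mathcal{D}$ and each $\mathcal{D}^{(x)}$ is an open interval containing $0$, the derivative $\dot{\varphi}(x)=\partial_t\varphi_t(x)\vert_{t=0}$ exists for every $x$. It depends smoothly on $x$, so it is a smooth vector field, and it is determined by $\varphi$.

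\textbf{Step 2: each orbit is an integral curve.} Fix $x$ and $t\in\mathcal{D}^{(x)}$. Because $\mathcal{D}$ is open, $s\in\mathcal{D}^{(\varphi_t(x))}$ for all sufficiently small $|s|$. Condition (ii) of a transfer function then gives $\varphi_{t+s}(x)=\varphi_s(\varphi_t(x))$. Differentiating in $s$ at $s=0$ yields
\[
\frac{d}{dt}\varphi_t(x)=\dot{\varphi}\big(\varphi_t(x)\big).
\]
So $t\mapsto\varphi_t(x)$ is an integral curve of $\dot{\varphi}$ on $\mathcal{D}^{(x)}$ starting at $x$.

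\textbf{Step 3: maximality.} By uniqueness of integral curves, $\varphi$ agrees with the maximal flow $\theta$ of $\dot{\varphi}$ on $\mathcal{D}$, and $\mathcal{D}$ is contained in the flow domain $\mathcal{D}_\theta$. The fundamental theorem says $(\mathcal{D}_\theta,\theta)$ is itself a commodity domain with a transfer function: $\mathcal{D}_\theta$ is open, its fibres are open intervals containing $0$, and the group law holds whenever the left-hand side is defined. Maximality of $(\mathcal{D},\varphi)$ therefore forces $\mathcal{D}=\mathcal{D}_\theta$ and $\varphi=\theta$.

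\textbf{Step 4: the converse direction.} Given any smooth vector field, its maximal flow is a commodity domain/transfer function pair. It is maximal: any extension $(\mathcal{D}',\varphi')$ is again a transfer function, so by Step 2 its orbits are integral curves of the same field. Those curves are defined on intervals no longer than the maximal ones, so $\mathcal{D}'\subseteq\mathcal{D}_\theta$. Uniqueness of the virtual commodity with a given generator then follows from Steps 2–3.

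\textbf{Properties (i) and (ii).} Both are standard parts of the fundamental theorem once we know $\varphi=\theta$.

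For (i), the curve $s\mapsto\varphi_{t+s}(x)$, $s\in\mathcal{D}^{(x)}-t$, is an integral curve through $\varphi_t(x)$. Hence $\mathcal{D}^{(x)}-t\subseteq\mathcal{D}^{(\varphi_t(x))}$. The reverse inclusion follows by applying the same argument from $\varphi_t(x)$ with $-t$, since $\varphi_{-t}(\varphi_t(x))=x$.

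For (ii), $\mathcal{X}_t$ is open because it is a slice of the open set $\mathcal{D}$. The map $\varphi_t$ sends $\mathcal{X}_t$ into $\mathcal{X}_{-t}$ by (i). Together with $\varphi_{-t}\circ\varphi_t=\mathrm{id}$ from the group law, this makes $\varphi_t$ a smooth bijection with smooth inverse $\varphi_{-t}$.

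\textbf{Main obstacle.} The delicate point is the definitional mismatch in Step 3. The paper's axiom (ii) only asserts the group law when $t_1\in\mathcal{D}^{(x)}$ and $t_2\in\mathcal{D}^{(\varphi_{t_1}(x))}$. We must check that this matches the flow's domain conventions and that "maximal" in the paper's sense (no proper extension) coincides with maximality of integral curves. This reduces to uniqueness of integral curves, which is where I would concentrate care.
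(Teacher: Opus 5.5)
Your proposal is correct and takes the same route as the paper, whose proof consists solely of citing the fundamental theorem on flows (Lee 2002, Theorem 9.12); parts (i) and (ii) of the lemma are taken directly from that theorem. The one thing you add is the check, left implicit in the paper, that its definitions match Lee's. The paper's group law (with $t_1+t_2\in\mathcal{D}^{(x)}$ as a conclusion) holds for the maximal flow by the translation property. And maximality among transfer functions forces $(\mathcal{D},\varphi)$ to coincide with the maximal flow of $\dot{\varphi}$, by uniqueness of integral curves.
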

    \begin{proof}
        See \cite{Lee2002}, Theorem 9.12.
    \end{proof}
    
    \section{Proof of \autoref{virtnumthm}}

    \begin{proof}
        We first prove claim (i). Suppose $\succsim$ is a preference, and $(\mathcal{D}, \varphi)$ is a virtual numeraire for it. We will show there exists some representation $v$ of $\succsim$ which satisfies \eqref{uu}. Since $\succsim$ is a preference, it admits some smooth, critical-point-free utility representation $u: \mathcal{X} \to \mathbb{R}$.\medskip
        
        Suppose $x \sim y$. By (\hyperlink{n1}{N.1}), for any $t \in \mathcal{D}^{(x)} \cap \mathcal{D}^{(y)}$, we have $u\big(\varphi_t(x)\big) = u\big(\varphi_t(y)\big)$. Differentiating both sides and evaluating at $t=0$ yields 
        $ \dot{\varphi}(x) \cdot \nabla u(x)$ = $ \dot{\varphi}(y) \cdot \nabla u(y) > 0$.\footnote{Positivity follows by noting that if, e.g., $\dot{\varphi}(x) \cdot \nabla u(x) = 0$, then (\hyperlink{n1}{N.1}) would imply the same of all $x'$ on the same indifference surface near $x$. Thus, for any such $x'$, $\dot{\varphi}(x')$ is tangent to the $x$-indifference surface, thus the trajectory of $\dot{\varphi}$ starting from $x$ (i.e.\ $\varphi_t(x)$ for small $t$) must belong to this indifference surface, violating (\hyperlink{n2}{N.2}).} Since $x$ and $y$ were arbitrary points belonging to a general level set, there exists some smooth function $h: \textrm{range}(u) \to \mathbb{R}_{++}$ such that $\dot{\varphi}(x) \cdot \nabla u(x) = (h\circ u)(x)$. Fix any $\bar{q} \in \textrm{range}(u)$. Then, defining $H: \textrm{range}(u) \to \mathbb{R}$ via:
\[
    H(q) = \int_{\bar{q}}^q \frac{1}{h(\xi)} \, d\xi
\]
yields a strictly increasing, critical-point-free map.  But by the chain rule:
\[
    \dot{\varphi}(x) \cdot \nabla (H \circ u)(x) = H'(u(x))  \dot{\varphi}(x) \cdot \nabla u(x) = \frac{1}{(h \circ u)(x)} \dot{\varphi}(x) \cdot \nabla u(x) = 1.
\]

Thus $v = (H \circ u)$ represents $\succsim$ and has constant $X$-directional derivative equal to unity. Therefore:
\[
    \begin{aligned}
        v\big(\varphi_t(x)\big) & = v(x) + \int_0^t (\dot{\varphi} \cdot \nabla v) \big\vert_{\varphi_\xi(x)} \, d\xi = v(x) + \int_0^t 1 \, d\xi = v(x) + t,
    \end{aligned}
\]
and hence $v$ satisfies \eqref{uu} as desired.\medskip

We now prove (ii). Toward this end, let $u$ be an arbitrary utility representation of $\succsim$. It suffices to exhibit a virtual commodity $(\mathcal{D}, \varphi)$ such that $u$ satisfies \eqref{uu}. Consider the vector field $F(x) = \nabla u(x) / \big\| \nabla u(x)\big\|_2^2$. Let $\varphi$ denote the unique, smooth, maximal flow associated with the vector field $F$, with domain $\mathcal{D}$. By Theorem 9.12 of \cite{Lee2002}, this defines a virtual commodity. But now, by the fundamental theorem of calculus for line integrals:
\[
    \begin{aligned}
        u\big(\varphi_t(x)\big) - u(x) = \int_0^t  (F \cdot \nabla u) \big\vert_{\varphi_\xi(x)} \, d\xi = \int_0^t 1 \, d\xi = t.
    \end{aligned}
\]
Thus $(\mathcal{D}, \varphi)$ and $u$ satisfy \eqref{uu} and hence $(\mathcal{D}, \varphi)$ is a virtual numeraire for $\succsim$.\medskip

Finally, suppose $u$ and $v$ are representations of $\succsim$, and both satisfy \eqref{uu} for some common virtual numeraire $(\mathcal{D}, \varphi)$. Then $\dot{\varphi} \cdot \nabla u = 1 = \dot{\varphi} \cdot \nabla v$ pointwise. By the chain rule, $\nabla u$ and $\nabla v$ are additionally everywhere collinear, implying they must be equal. As a consequence, $u$ and $v$ are equal up to a constant of integration.
\end{proof}

\section{Proof of \autoref{utiliarianthm}}

We will rely on the following technical lemma.

\begin{lemma}\label{richnesslemma}
    For any distinct $x_1,\ldots, x_K \in \mathcal{X}$, and any $A \in \mathbb{R}^{K \times S}$, there exists a profile $u \in \mathcal{C}$ such that:
    \[
        a_{ks} = u^s(x_k)
    \]
    for all $s = 1,\ldots, S$ and $k = 1,\ldots, K$.
\end{lemma}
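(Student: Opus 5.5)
Fix a single direction to serve as the common unit and build every utility as a linear function in that direction plus a bump. Choose a unit vector $e$ in $\mathbb{R}^d$ such that the projections $e\cdot x_k$ of the given points are pairwise distinct. This is possible because the $x_k$ are distinct and finitely many, so only finitely many hyperplanes of directions need to be avoided. The candidate utilities are
\[
    u^s(x) = e\cdot x + g^s(x),
\]
where $g^s$ is a smooth function. We need $u^s(x_k)=a_{ks}$ and also $\nabla u^s\cdot e = 1$ everywhere. The second condition says $g^s$ is constant along lines parallel to $e$. If it holds, the translation commodity $\varphi_t(x)=x+te$, restricted to its maximal domain in the convex open set $\mathcal{X}$, satisfies $u^s(\varphi_t(x)) = u^s(x)+t$ for every $s$. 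So $(u^1,\ldots,u^S)\in\mathcal{C}$.

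This requirement on $g^s$ conflicts with interpolation, since a function constant along $e$-lines cannot separate two points lying on the same $e$-line. So I will instead take $g^s$ depending only on the coordinate $\tau = e\cdot x$, i.e.\ $u^s(x) = h^s(e\cdot x)$ for a smooth one-variable $h^s$. Then $\nabla u^s(x) = (h^s)'(e\cdot x)\,e$. The common unit becomes the flow of the vector field $e$, and the unit condition forces $h^{s\prime}\equiv 1$. That is too rigid, so the construction must be relaxed.

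The relaxation is to use a general flow rather than a translation. Let $\tau_k = e\cdot x_k$, which are distinct. First build a single strictly increasing smooth $c:\mathbb{R}\to\mathbb{R}$. Then choose the remaining freedom in directions orthogonal to $e$. Concretely, set
\[
    u^s(x) = e\cdot x + w^s(Px),
\]
where $P$ is the projection onto $e^\perp$. These satisfy $\nabla u^s\cdot e = 1$, so all share the unit $x\mapsto x+te$. Interpolation now requires $w^s(Px_k) = a_{ks} - \tau_k$. This is solvable by a smooth function on $e^\perp$, provided the projected points $Px_k$ are distinct. So I should instead choose $e$ such that no two $x_k$ differ by a multiple of $e$. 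This is again a generic condition, since $d\ge 2$ leaves only finitely many excluded directions. Given distinct $Px_k$, take $w^s$ to be a sum of smooth bump functions with disjoint supports, or polynomial Lagrange interpolation on $e^\perp$.

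It remains to check that each $u^s$ is a utility function. Critical-point-freeness is automatic, because $\nabla u^s\cdot e = 1 \neq 0$.

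The main subtlety is that the translation flow on a bounded convex $\mathcal{X}$ needs to be a legitimate virtual commodity. Maximality holds by taking $\mathcal{D}^{(x)}$ to be the full open interval of $t$ with $x+te\in\mathcal{X}$, which is an open interval containing $0$ by convexity and openness. The relation $u^s(x+te)=u^s(x)+t$ then holds identically on $\mathcal{D}$.
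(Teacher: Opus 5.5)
Your final construction is correct and is essentially the paper's proof. In both, you choose a direction not parallel to any $x_k - x_{k'}$, make every $u^s$ quasilinear in that direction with the interpolation done in complementary coordinates, and take the translation $x \mapsto x + te$ as the common unit; the paper uses a basis $e_1,\ldots,e_{d-1},v$ and an explicit product-of-squared-distances Lagrange interpolant where you use an orthogonal projection and bumps. The only cleanups needed are to delete the abandoned first attempts (including the unused function $c$), and, since $\mathcal{X}$ is only assumed relatively open, to pick $e$ in the span of the differences $x - x'$ with $x,x' \in \mathcal{X}$, as the paper does with $V$.
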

\begin{proof}
    Let $V = \textrm{span}\{x - x' : x, x' \in \mathcal{X}\}$, and fix $\{x_1,\ldots, x_K\} \subset \mathcal{X}$. Since $\dim{V} = \dim{\mathcal{X}} = d > 1$, there exists a vector $v \in V$ such that $v$ is not parallel to $x_k - x_{k'}$, for all distinct $k,k' = 1,\ldots , K$, and let $e_1, \ldots, e_{d-1}, v$ denote any extension of $v$ into a basis for $V$.\medskip

    Now, let $x_0 \in \mathcal{X}$ be an arbitrary, fixed allocation. Then every $x \in \mathcal{X}$ can be written uniquely as:
    \[
        x = x_0 + \sum_{i = 1}^{d-1} y^i e_i + tv,
    \]
    thus we may express each $x$ uniquely in $(y,t)$ coordinates. Writing $x_k = (y_k, t_k) = (y^1_k, \ldots, y^{d-1}_k, t_k)$, note that for all $k \neq k'$, we have $y_k \neq y_{k'}$, as if this were not true, $x_k - x_{k'} = (t_k - t_{k'})v$, a contradiction.\medskip

    Define, for each $k =1,\ldots, K$:
    \[
        \lambda_k(y) = \prod_{\substack{k'=1 \\ k' \neq k}}^K \frac{\|y- y_{k'} \|^2}{\|y_k - y_{k'}\|^2}.
    \]
    By the preceding argument, this is well-defined as its denominator is always non-zero. Moreover, for any $j,k = 1,\ldots, K$, $\lambda_k(y_j)$ is one if $j = k$ and zero otherwise. Finally, let:
    \[
        u^s(y,t) = t +  \sum_{k=1}^K (a_{ks} - t_k) \lambda_k(y).
    \]
    Then at $x_{k'}$, we have:
    \[
        \begin{aligned}
            u^s(y_{k'}, t_{k'}) & = t_{k'} + \sum_{k=1}^K (a_{ks} - t_k) \lambda_k(y_{k'})\\
            & = t_{k'} + (a_{k's} - t_{k'}) \\
            & = a_{k's}
        \end{aligned}
    \]
    as desired. As each $u^s$ is clearly $t$ quasilinear, $\varphi_t(x) = x + tv$ is a common utility unit, and thus $u = (u^1, \ldots, u^S) \in \mathcal{C}$.
\end{proof}

\subsection{Proof of \autoref{utiliarianthm}}

\begin{proof}
    In light of \autoref{richnesslemma}, our proof essentially mirrors that of Theorem 3 of \cite{dAspremont1977}. We describe the changes that are needed below.\medskip

    Lemma 2 of \cite{dAspremont1977} considers only profiles whose utility values are specified at three alternatives. By \autoref{richnesslemma} such profiles always exist in $\mathcal{C}$ hence their Lemma 2, in its original proof, goes through essentially unchanged. Thus our $F$ satisfies their axiom (XNE).\medskip

    Now, for any $a,b \in \mathbb{R}^S$ define $a \, R^* \, b$ if and only if there exist $x, x' \in \mathcal{X}$ and $u \in \mathcal{C}$ where $u(x) = a$, $u(x') = b$, and $x \succeq_u x'$. By \autoref{richnesslemma} such $x,x'$ and $u$ can always be found. Then Lemma 3 of \cite{dAspremont1977} goes through unchanged for this relation $R^*$.\medskip

    The proof of Lemma 4 in \cite{dAspremont1977} is the sole other point in which a slightly modified construction is needed. Writing $I^*$ as the symmetric component of $R^*$, we wish to show that for any permutation $\sigma: \mathcal{S} \to \mathcal{S}$, and any $a \in \mathbb{R}^S$, $a I^* \sigma a$, where $\sigma a = \big(a^{\sigma(1)}, \ldots, a^{\sigma(S)}\big)$. It suffices to consider $\sigma$ a transposition, as every permutation is a product of transpositions. Thus, let $b = \sigma a$ and by \autoref{richnesslemma}, suppose $u\in \mathcal{C}$ and $x,x'\in \mathcal{X}$ satisfies $u(x) = a$ and $u(x') = b$. Let $\sigma u = (u^{\sigma(1)}, \ldots, u^{\sigma(S)})$. By (\hyperlink{a}{A}), $\succeq_u \, = \, \succeq_{\sigma u}$. Since $\sigma^2 = \textrm{id}$, we have $\sigma u(x) = b$ and $\sigma u (x') = a$, hence:
    \[
        a R^* b \quad \iff \quad x \succeq_u x' \quad \iff \quad x \succeq_{\sigma u} x' \quad \iff \quad b R^* a.
    \]
    The claim for arbitrary permutations $\sigma$ then follows. Lemma 5 of \cite{dAspremont1977} then goes through unchanged, and likewise the remainder of the proof of their Theorem 3.
\end{proof}

\section{Proof of \autoref{cardexistance}}
\begin{proof}
First, note that if at some $x \in \mathcal{X}$, $0 \not \in \textrm{Aff}\big\{\nabla u^1(x), \ldots, \nabla u^S(x)\big\}$, then:
\begin{equation}\label{cardconsist2}
    \sum_{s\in \mathcal{S}} c^s \nabla u^s(x) = 0 \quad \implies \quad \sum_{s \in \mathcal{S}} c^s = 0,
\end{equation}
as otherwise the vector $c$ could be rescaled to sum to unity. Conversely, if $0 \in \textrm{Aff}\big\{\nabla u^1(x), \ldots, \nabla u^S(x)\big\}$, there exists $c \in \mathbb{R}^\mathcal{S}$, such that:
\[
     \sum_{s\in \mathcal{S}} c^s \nabla u^s(x) = 0 \quad \textrm{ and } \quad \sum_{s \in \mathcal{S}} c^s = 1.
\]
Thus \eqref{cardconsist2} is equivalent to \eqref{cardconsist}.\medskip

Now, let $r = r(\mathcal{S})$. As the profile has constant rank, at every $x \in \mathcal{X}$, there exist $\nabla u^{s_1},$ $\ldots,$ $\nabla u^{s_{r}}$ which are linearly independent on some neighborhood of $x \in \mathcal{X}$. Thus, for every $x \in \mathcal{X}$, there exists an open, connected neighborhood $U_x$ of $x$, and selection of $r$ gradients $\nabla u^{s_1}, \ldots, \nabla u^{s_r}$, such that these gradients remain linearly independent on $U_x$. For each $x \in \mathcal{X}$, let $\mathcal{A}_x$ denote the $d \times r$ matrix-valued function whose columns consist of these gradients.\medskip

Then, for every $(U_x, \mathcal{A}_x)$, and every $x' \in U_x$, the $r \times r$ matrix $\mathcal{A}_x^\intercal(x') \mathcal{A}_x(x')$ is invertible, hence we may define a function $F_{x}$ on $U_x$ via:
\[
    F_x(x')= \mathcal{A}_x(x')\big[\mathcal{A}_x^\intercal \mathcal{A}_x(x') \big]^{-1} \mathbbm{1},
\]
where $\mathbbm{1}$ denotes the column vector of $r$ ones. This function is smooth and, by construction, $\mathcal{A}_x(x')^\intercal F_x(x') = \mathbbm{1}$. Thus $F_x(x')$ has dot product one with every gradient $\nabla u^{s_1}(x'),\ldots, \nabla u^{s_r}(x')$, at every $x' \in U_x$.  Now, on $U_x$, every remaining gradient can be written as:
\[
    \nabla u^s(x') = \sum_{i = 1}^r c^i(x') \nabla u^{s_i}(x'),
\]
where the coefficient functions $c^i$ are smooth. By \eqref{cardconsist2}, the weight functions must sum to unity, as $\nabla u^s(x') - \sum_{i = 1}^r c^i(x') \nabla u^{s_i}(x') = 0$. Thus $ F_x(x') \cdot \nabla u^s(x') = \sum_{i=1}^r c^i(x') F_x(x')\cdot \nabla u^{s_i}(x') = \sum_{i =1}^r c^i(x') = 1$. Thus, each $F_x$ has dot product equal to one with every gradient $\nabla u^1, \ldots, \nabla u^S$ at every point of $U_x$.\medskip

Since $\mathcal{X}$ is a subset of Euclidean space it is paracompact, therefore the open cover $\mathcal{U} = \{U_x\}_{x \in \mathcal{X}}$ admits a locally finite refinement, $\tilde{\mathcal{U}} = \{\tilde{U}_i\}_{i \in I}$. For each $i \in I$, select any $x_i$ such that $\tilde{U}_i \subseteq U_{x_i}$. Now, let $\{\psi_i\}_{i \in I}$ denote a smooth partition of unity subordinate to $\tilde{\mathcal{U}}$. Then:
\[
    F(x) = \sum_{i \in I} \psi_i(x) F_{x_i}(x)
\]
is smooth and well-defined, as the open cover $\{\tilde{U}_i\}_{i \in I}$ is locally finite, hence for any $x \in \mathcal{X}$, the sum is taken over only finitely many non-zero terms. Moreover, as at every point it is a convex combination of vectors which have dot product equal to one with respect to every gradient, so too does $F$. Then, by an identical argument to that appearing in the proof of \autoref{virtnumthm}, there is then a unique virtual commodity $(\mathcal{D}, \varphi)$ on $\mathcal{X}$ such that $\dot{\varphi} = F$ pointwise. This $\varphi$ then necessarily satisfies \eqref{uu} for every $u^s$.\medskip

Conversely, suppose there exists some common utility unit $(\mathcal{D}, \varphi)$ for every $u^s$. Thus, for every $u^s$, we have $\dot{\varphi} \cdot \nabla u^s = 1$. For sake of contradiction, suppose, now that $0 \in \textrm{Aff}\big\{\nabla u^1(x), \ldots, \nabla u^S(x)\big\}$.  By \eqref{cardconsist2} there must be some $c \in \mathbb{R}^\mathcal{S}$ such that $\sum_s c^s \nabla u^s(x)= 0$ and $\sum_s c^s = 1$. But this means $0 = \dot{\varphi} \cdot \sum_s c^s \nabla u^s = \sum_s c^s  \dot{\varphi}\cdot \nabla u^s = \sum_s c^s = 1$, a contradiction. Thus 
$0 \not \in \textrm{Aff}\big\{\nabla u^1(x), \ldots, \nabla u^S(x)\big\}$, as desired.
\end{proof}

\section{Proof of \autoref{locseptheorem}}

\begin{proof}
    Suppose $\mathcal{N} \subseteq \mathcal{X}$ is a connected neighborhood of $x_0$ on which \eqref{seprepglobal} holds, for utilities $v^1, \ldots, v^{d+1}$ and some vector $\lambda \in \{-1,1\}^d$. Define $v(x) = \big(v^1(x), \ldots, v^d(x)\big)$, and let $\mathcal{V} = v(\mathcal{N})$. Let $\varphi$ be a virtual commodity on $\mathcal{V}$ such that it is a common numeraire for the preferences $\succsim^1, \ldots, \succsim^d$ represented by the coordinate functions $v^i$, $i =1,\ldots, d$ and $\succsim^{d+1}$ represented by $\lambda^i v^1 + \cdots + \lambda^d v^d$. By (\hyperlink{n1}{N.1}) applied to each $\succsim^i$, $i = 1,\ldots, d$, we have:
    \begin{equation} \label{separableode}
        \dot{\varphi}^i(v) = F^i(v),
    \end{equation}
    for some tuple of functions $(F^1, \ldots, F^d)$, defined on $\mathcal{V}$ and satisfying $F^i_j = 0$ for all $i \neq j$. But by (\hyperlink{n1}{N.1}) for $\succsim^{d+1}$, we obtain:
    \[
        F^{d+1}(v) = \sum_{i=1}^d \lambda^i F^i(v^i)
    \]
    is constant on subsets of vectors with constant $\lambda$ dot product in $\mathcal{V}$. Let $\xi^i$ denote the $d$-vector of all $0$'s except for its $i$th component, which is $1$. Then, for all $i \neq j$:
    \[
        DF^{d+1}(v)(\lambda^i \xi^i - \lambda^j \xi^j) = 
         F^i_i(v) -  F^j_j(v) = 0.
    \]
    Since $v$ is a diffeomorphism on $\mathcal{N}$, $\mathcal{V}$ is open. Thus each coordinate around any fixed $v(x)$ can be independently varied, hence the above implies every $F^i_i$ is locally constant and $F^i_i = F^j_j$ for all $i,j$. Since $\mathcal{N}$ is connected, we obtain that each $F^i$ is the restriction to $\mathcal{V}$ of a function:
    \[
       F^i(v) = a v^i + b^i.
    \]
    By (\hyperlink{n2}{N.2}), for each $i = 1,\ldots, d$, we have $a v^i + b^i > 0$, as well as $\lambda \cdot \big(av^i + b\big) > 0$. Conversely, clearly any $\varphi$ satisfying \eqref{separableode} also satisfies (\hyperlink{n1}{N.1}) and (\hyperlink{n2}{N.2}). Thus every virtual numeraire takes this form. The claim then follows from solving the ordinary differential equation \eqref{separableode} and conjugating back to $\mathcal{N}$ via $v$. 
\end{proof}

\section{Proposition Proofs}

\subsection{Proof of \autoref{decompprop}}

    \begin{proof}
        Fix a choice of utility representation $u:\mathcal{X} \to \mathbb{R}$ for $\succsim$, and suppose $y \succsim x$ for some $x,y \in \mathcal{X}$. First, note if $x \sim y$, then the singleton sequence $(x_1, t_1) = (x,0)$ trivially belongs to $\mathcal{D}$ and suffices for the claim. Thus suppose $y \succ x$, and consider the interval $A = \big[ u(x), u(y) \big] \subset \mathbb{R}$. By convexity of $\mathcal{X}$, for every $a \in A$ there exists some $z_a \in \mathcal{X}$ with $u(z_a) = a$. Now, for each $a \in A$, let:
        \[
            I_a = u\big(\varphi(z_a,\mathcal{D}^{(z_a)})\big)
        \]
        for an arbitrary choice of $z_a \in u^{-1}(a)$. Since $u$ is continuous, and by (\hyperlink{n2}{N.2}), $u\big(\varphi(z_a,t)\big)$ is strictly increasing in $t$, the fact $\mathcal{D}^{(z_a)}$ is an open interval containing zero implies $I_a \subseteq \mathbb{R}$ is an open interval containing $a$.\medskip

        Thus $\{I_a\}_{a \in A}$ defines an open cover of the compact interval $A$, and hence there exists a finite $A' \subset A$ such that $\{I_a\}_{a \in A'}$ forms a finite subcover; by the  Lebesgue number lemma, there also exists a partition:
        \[
            u(x) = b_1 < b_2 < \cdots < b_K < b_{K+1} = u(y),
        \]
        such that every closed interval $[b_k, b_{k+1}]$, $k = 1, \ldots, K$ is contained in some element of $\{I_a\}_{a \in A'}$. Now, for every $k = 1,\ldots, K$, let $I_{a_k}$, $a_k \in A'$ be any sub-cover element such that $[b_k, b_{k+1}] \subset I_{a_k}$.
        Thus there exist $t^-_k < t^+_k$ such that:
        \[
            \varphi_{t^-_k}(z_{a_k}) \sim z_{b_k} \quad \textrm{ and } \quad \varphi_{t^+_k}(z_{a_k}) \sim z_{b_{k+1}},
        \]
        and hence:
        \[
            \varphi_{t_k^+ - t_k^-}\big(\varphi_{t_k^-}(z_{a_k})\big) \sim z_{b_{k+1}}.
        \]
        In particular, by \autoref{flowlemma}, the left-hand term is well-defined. Then, defining:
        \[
            x_k = \varphi_{t_k^-}(z_{a_k}), \quad \textrm{ and }\quad  t_k = t_k^+ - t_k^-,
        \]
        we have:
        \[
            \varphi_{t_k}(x_k) = \varphi_{t_k^+ - t_k^-}\big(\varphi_{t_k^-}(z_{a_k})\big) \sim z_{b_{k+1}} \sim \varphi_{t_{k+1}^-}(z_{a_{k+1}}) = x_{k+1},
        \]
        as desired. Now, by maximality of $(\mathcal{D}, \varphi)$, if $(t,z) \in \mathcal{D}$, then $\big(-t, \varphi_t(z)\big) \in \mathcal{D}$ by \autoref{flowlemma}. Thus, if $x \succ y$, applying the same construction in the opposite direction, and then reversing the transfers, yields the analogous conclusion. Finally, when $(\mathcal{D}, \varphi)$ is a utility unit for $u$, the last claim follows from \eqref{uu}, as the sum along the chain telescopes.
    \end{proof}

\subsection{Proof of \autoref{uniquenessprop}}

\begin{proof}
    For sufficiency, it suffices to note that if $(\mathcal{D}, \varphi)$ is a common unit for $(u^1,\ldots, u^S)$, then $(\tilde{\mathcal{D}}, \tilde{\varphi})$, where $\tilde{\varphi}_t = \varphi_{t/\alpha}$, and $\tilde{\mathcal{D}}^{(x)} = \alpha \mathcal{D}^{(x)}$, is a unit for $(\alpha u^1 + \beta^1, \ldots, \alpha u^S + \beta^S)$.\medskip
    
    For necessity, suppose $S \ge 2$, and that for some tuple of increasing, smooth, and critical-point-free functions $f \equiv (f^1,\ldots, f^S)$, whenever a profile $(u^1,\ldots, u^S)$ is interpersonally comparable, so is $\big(f^1 \circ u^1, \ldots, f^S \circ u^S\big)$.\medskip

    Choose global coordinates on $\mathcal{X}$ such that the projection of $\mathcal{X}$ onto the first factor is non-singleton.  Define:
    \[
        u^s(x) = x^1 + c^s
    \]
    for all $s = 1, \ldots, S$, where each $c_s \in \mathbb{R}$. Then $(u^1, \ldots, u^S)$ are interpersonally comparable, with common utility unit given by $(\mathcal{D}, \varphi)$, where $\varphi_t(x) = x + (t,0,\ldots, 0)$ and $\mathcal{D}$ is the maximal domain associated with this flow on $\mathcal{X}$.\footnote{See \cite{Lee2002} Theorem 9.12.}\medskip

    Now, if $(f^1 \circ u^1, \ldots, f^S \circ u^S)$ are interpersonally comparable, there exists some $(\tilde{\mathcal{D}}, \tilde{\varphi})$ such that:
    \[
        (f^s \circ u^s)\big(\tilde{\varphi}_t(x)\big) = (f^s \circ u^s)(x) + t.
    \]
    Differentiating both sides with respect to $t$ and evaluating at $t=0$ implies:
    \[
        \dot{\tilde{\varphi}} \cdot \nabla u^s = \frac{1}{(f^s)'(u^s)}
    \]
    for every $s =1 ,\ldots, S$. In particular, since $\nabla u^s = [1, 0,\ldots, 0]^\intercal$, this implies:
    \[
        \dot{\tilde{\varphi}}^1(x) = \frac{1}{(f^s)'(x^1 + c^s)}.
    \]
    But since $f$ preserves interpersonal comparability for any profile, it does so for the profile obtained by simply varying the $c^s$. Since any profile obtained in this fashion remains comparable, $(f^1)' = \cdots = (f^S)' = \alpha$ for some constant $\alpha$; since each $f^s$ is strictly increasing, $\alpha > 0$ and $f^s(t) = \alpha t + \beta_s$.
\end{proof}

\pagebreak

\section*{\huge Supplemental Appendix}\hypertarget{suppapp}{}

\section{Quasilinear Representation of Comparable Profiles}\label{quasilinearsect}

Perhaps the most natural example of interpersonally comparable utilities are quasilinear profiles. When $\mathcal{X} = \mathbb{R}^n_{++}$, and every utility is of the form:
\[
    u^s(x) = x^1 + v^s(x^2, \ldots, x^n),
\]
it is natural to treat utilities as denominated in a common unit.\footnote{In the language of virtual commodities, this corresponds to the virtual numeraire with $\mathcal{D}^{(x)} = (-x^1, \infty)$, for each $x \in \mathcal{X}$, and $\varphi_t(x) = x  + (t,0, \ldots, 0)$.} Here, since the marginal utilities of the numeraire, $x^1$, are all normalized to unity, the `utils' of each $u^s$ may be identified with quantities of numeraire. Since these have exogenous, physical meaning, they can serve as an  objective measurement unit into which all utility differences can be converted.\footnote{For example, in a quasilinear profile, this amounts to judging that individual $s$ values changing the social allocation from $x$ to $x'$ more than does $s'$, because their numeraire-denominated willingness-to-pay to induce such a shift exceeds that of $s'$.}\medskip

Our next result illustrates that quasilinearity provides a kind of normal-form representation for interpersonally comparable profiles. We show a profile admits a common measurement unit if, and only if, there exists a globally-defined, smooth change of coordinates that simultaneously renders each utility quasilinear.\footnote{By smooth change of coordinates, we mean a smooth bijection with smooth inverse.} This illustrates that the abstract machinery of virtual commodities may simply be viewed as a coordinate-free generalization of classical commodities, allowing us to speak of numeraire goods even outside the setting of traditional consumer theory.\medskip

Before formally stating this result, we introduce a mild, technical regularity condition. Suppose $(\mathcal{D}, \varphi)$ is a virtual commodity, and $x, x' \in \mathcal{X}$.  Let $x \sim_\varphi x'$ denote that $x$ and $x'$ are equal, up to some quantity of additional commodity, that is, $x' = \varphi_t(x)$ for some $t \in \mathcal{D}^{(x)}$ or vice versa. We say $(\mathcal{D}, \varphi)$ is a {\bf regular} virtual commodity if the binary relation $\sim_{\varphi}$ has a closed graph.

\begin{theorem}\label{quasilinearizationthm}
    A regular virtual commodity $(\mathcal{D}, \varphi)$ is a common utility unit for the profile $(u^1,\ldots, u^S)$ if, and only if, there exists a smooth manifold $\mathcal{Y}$, open subset $\Omega \subseteq \mathbb{R} \times \mathcal{Y}$, and smooth change-of-coordinates $h:\mathcal{X} \to \Omega$, such that:
    \begin{itemize}
        \item[(i)] For all $(x,t) \in \mathcal{D}$,
        \[
            h\big(\varphi_t(x)\big) = h(x) + (t, 0, \ldots, 0).
        \]
        \item[(ii)] For all $s \in \mathcal{S}$ and $(t,y) \in \Omega$,
        \[
            u^s\big(h^{-1}(t,y)\big) = t + v^s(y).
        \]
    \end{itemize}
\end{theorem}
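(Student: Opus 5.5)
The easy direction is sufficiency. Suppose $h$ satisfies (i) and (ii). For $(x,t)\in\mathcal{D}$, write $h(x)=(\tau,y)$. Then (i) gives $h(\varphi_t(x)) = (\tau+t,y)$, and this point lies in $\Omega$ because $\varphi_t(x)\in\mathcal{X}$. Applying (ii) yields
\[
u^s(\varphi_t(x)) = \tau+t+v^s(y) = u^s(x)+t.
\]
Hence $\varphi$ is a common utility unit. By the footnote following \eqref{uu}, it is automatically a virtual numeraire for each underlying preference.

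For necessity, assume $\varphi$ is a common utility unit, so that $\dot\varphi\cdot\nabla u^s=1$ for every $s$. In particular $\dot\varphi$ never vanishes. The plan is to take $\mathcal{Y}$ to be the orbit space $\mathcal{X}/\!\sim_\varphi$ and $\pi:\mathcal{X}\to\mathcal{Y}$ the quotient map.

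\begin{itemize}
\item[1.] \textbf{No closed orbits.} Closed orbits are impossible, since $u^1$ increases by exactly $t$ along the flow. So every orbit is an injectively immersed line.
\item[2.] \textbf{The orbits are closed embedded submanifolds.} Each orbit is a level-set-transverse curve along which $u^1$ is a strictly monotone parameter. Hence a point of the orbit is determined by its $u^1$-value.
\item[3.] \textbf{The quotient is a smooth manifold.} The action $\varphi$ is free, and regularity says the orbit relation $\sim_\varphi$ has a closed graph. Together these should give a Hausdorff quotient and, with the standard quotient manifold theorem for free proper actions (or for regular foliations with Hausdorff leaf space, Lee), make $\mathcal{Y}$ a smooth $(d-1)$-manifold with $\pi$ a submersion.
\item[4.] \textbf{Define the chart.} Set
\[
h(x)=\big(u^1(x),\pi(x)\big).
\]
Property (i) follows from \eqref{uu} and $\pi\circ\varphi_t=\pi$.
\item[5.] \textbf{$h$ is a diffeomorphism onto an open set.} Injectivity holds because two points on the same orbit with equal $u^1$-value coincide, by step 2. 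The differential is invertible because $d\pi$ kills exactly $\mathrm{span}\,\dot\varphi$, while $du^1(\dot\varphi)=1$. Injectivity plus local diffeomorphism gives a diffeomorphism onto the open image $\Omega$.
\item[6.] \textbf{Quasilinear form.} Each $u^s-u^1$ is invariant under the flow, because both functions increase by $t$. Hence it descends to a smooth function $v^s$ on $\mathcal{Y}$, giving $u^s\circ h^{-1}(t,y)=t+v^s(y)$, which is (ii).
\end{itemize}

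The main obstacle is step 3, showing that the orbit space is a Hausdorff smooth manifold. The flow is only a local $\mathbb{R}$-action, with domain $\mathcal{D}$, so the properness-based quotient manifold theorem does not apply verbatim. I would first try to build flow-box slices $\{u^1=c\}\cap U$ as local charts for $\mathcal{Y}$, using $u^1$ as a global time function. The real difficulty is that an orbit could return arbitrarily close to a slice with a different $u^1$-value. I expect the regularity hypothesis (closed graph of $\sim_\varphi$), combined with the fact that $u^1$ increases exactly with $t$, to rule out such returns and to give Hausdorffness, but I have not checked this.
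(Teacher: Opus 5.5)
Your architecture matches the paper's proof: $\mathcal{Y}=\mathcal{X}/\!\sim_\varphi$, $h=(u^s,\pi)$, injectivity of $h$ from \eqref{uu}, invertibility of $dh$ from $du^s(\dot\varphi)=1$, and the same direct computation for the converse. Your step 6, which descends the flow-invariant function $u^s-u^1$ to $\mathcal{Y}$, is a slightly cleaner route to (ii) than the paper's argument via $\partial_t (u^s\circ h^{-1})=1$ and connectedness of the fibers $\Omega^{(y)}$.

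\textbf{The gap is step 3.} It is the only non-routine part of the necessity direction, and you leave it as an expectation. Neither off-the-shelf result closes it. The proper-action quotient theorem does not apply to a local flow, as you note. The foliation version presupposes two things: that each point has a flow box meeting every orbit in at most one plaque, and that the leaf space is Hausdorff. Both are exactly what must be proved.

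The obstruction you anticipate is also not the real one. Take the slice inside a single level set $\{u^1=c\}$. By \eqref{uu}, each orbit meets it at most once, regardless of how the orbit behaves near other levels, so there are no ``returns'' to rule out. Hausdorffness, for its part, comes from regularity alone and does not use the monotonicity of $u^1$.

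The missing argument consists of three observations, which is how the paper proceeds.

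\emph{First, $\pi$ is an open map.} For open $W\subseteq\mathcal{X}$, write $\mathcal{X}_t=\{x:(x,t)\in\mathcal{D}\}$. Then $\pi^{-1}(\pi(W))=\bigcup_t\varphi_t(W\cap\mathcal{X}_t)$. This is a union of open sets, since each $\varphi_t:\mathcal{X}_t\to\mathcal{X}_{-t}$ is a diffeomorphism between open sets.

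\emph{Second, Hausdorffness and second countability.} For an open quotient map, Hausdorffness is equivalent to closedness of the graph of the relation. If $x\not\sim_\varphi x'$, regularity gives open sets $W\ni x$ and $W'\ni x'$ with $(W\times W')\cap{\sim_\varphi}=\varnothing$. Then $\pi(W)$ and $\pi(W')$ are disjoint open neighborhoods. Second countability follows by pushing forward a countable base.

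\emph{Third, the charts.} Let $U$ be a coordinate patch of $\{u^1=c\}$, and set $\Psi(x',t)=\varphi_t(x')$ on $\mathcal{D}\cap(U\times\mathbb{R})$.
\begin{itemize}
\item $\Psi$ is injective: equal images force equal $t$ by \eqref{uu}, and then equal $x'$.
\item Its differential $d\Psi_{(x',t)}(\xi,\lambda)=d(\varphi_t)_{x'}\xi+\lambda\dot\varphi$ is an isomorphism. Differentiating \eqref{uu} gives $du^1\circ d\varphi_t=du^1$, so $d\varphi_t$ carries $T_{x'}U=\ker du^1$ onto $\ker du^1$ at the image, while $du^1(\dot\varphi)=1$.
\item Hence $\Psi$ is a diffeomorphism onto the saturated open set $\pi^{-1}(\pi(U))$, and $\pi|_U$ is a homeomorphism onto an open subset of $\mathcal{Y}$.
\item Transition maps between slices at levels $c$ and $c'$ are restrictions of $\varphi_{c'-c}$, hence smooth.
\item In these charts $\pi$ is $x\mapsto\varphi_{c-u^1(x)}(x)$, a smooth submersion.
\end{itemize}
With this in place, your steps 4--6 go through as written.
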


\autoref{quasilinearizationthm} states that whenever $(u^1,\ldots, u^S)$ are interpersonally comparable, they are jointly quasilinear under some change of coordinates. The first claim states that this change of variables transforms the virtual numeraire into the new, horizontal-axis commodity. This allows us to interpret the first component of vectors in $\Omega$ as reflecting the level of virtual commodity contained in an allocation. The second claim states that under this change of variables, each utility $u^s$ is quasilinear with respect to this commodity.

\begin{example}
    Suppose $\mathcal{X} = \mathbb{R}^2_{++}$, and suppose $(u^1,\ldots, u^S)$ are all of the form:
    \[
        u^s(x) = \ln \tilde{u}^s(x),
    \]
    where each $\tilde{u}^s$ is strictly increasing, and homogeneous of degree one.\footnote{Any such function $\tilde{u}^s$ is necessarily positive-valued, and hence $u^s$ is well-defined.} Define the virtual commodity $\mathcal{D} = \mathcal{X} \times \mathbb{R}$ and $\varphi_t(x) = e^t \, x$, i.e.\ where $\varphi_t$ scales $x \in \mathcal{X}$ proportionally, by the scalar $e^t$. It follows immediately that:
    \[
        u^s\big(\varphi_t(x)\big) = \ln \tilde{u}^s(e^t \, x) = \ln \big(e^t \, \tilde{u}^s(x)\big) = u^s(x)  + t,
    \]
    and hence $(u^1,\ldots, u^S)$ is interpersonally comparable. Consider now the change of coordinates $h$ mapping $(x^1,x^2)$ to the modified polar coordinates $(\ln{r}, \theta)$. Here, $\Omega = \mathbb{R} \times \big(0, \frac{\pi}{2}\big)$ and, in these coordinates:
    \[
        u^s\big(h^{-1}(\ln{r}, \theta)\big) = \ln \tilde{u}^s\big(r \cos \theta, r \sin{\theta}\big) = \ln{r} + u^s(\cos{\theta}, \sin{\theta})
    \]
    is quasilinear in first coordinate, $\ln{r}$, as desired.\hfill $\blacksquare$
    \end{example}

\subsection{Proof of \autoref{quasilinearizationthm}}

\begin{proof}
    $(\Longrightarrow)$: Suppose that $(\mathcal{D}, \varphi)$ is a virtual commodity which is a common utility unit for a profile $(u^1,\ldots, u^S)$, and that $\sim_\varphi$ is a closed binary relation. First, we note $\sim_\varphi$ is an equivalence relation. It is defined symmetrically, and since for any $x \in \mathcal{X}$, $0 \in \mathcal{D}^{(x)}$ and $\varphi_0(x) = x$, it is reflexive. Suppose then that $x\sim_\varphi y$ and $y \sim_\varphi z$. If $\varphi_t(x) = y$ and $\varphi_{t'}(y) = z$, then $\varphi_{t'+t}(x)$ is well-defined and equals $z$. Thus suppose $\varphi_t(x) = y$ and $\varphi_{t'}(z) = y$. Then $\varphi_{-t}(y) = x$ and hence $\varphi_{t'-t}(z)$ is well-defined and equals $x$. Thus $\sim_{\varphi}$ is transitive and hence an equivalence relation.\medskip

    We now show the quotient map $\pi: \mathcal{X} \to \mathcal{X}/\!\sim_\varphi$ is open. First, note that for any $t \in \mathbb{R}$, let $\mathcal{D}_t = \{x \in \mathcal{X} : t \in \mathcal{D}^{(x)}\}$. As $\mathcal{D}$ is open, so too is each $\mathcal{D}_t$. Moreover, $\varphi_t : \mathcal{D}_t \to \mathcal{D}_{-t}$ is a diffeomorphism with inverse $\varphi_{-t}$. Now, let $U \subseteq \mathcal{X}$ be open. Then:
    \[
        \begin{aligned}
            \pi^{-1}\big(\pi(U)\big) & = \big\{x \in \mathcal{X} : \varphi_t(x) \in U \textrm{ for some } t \in \mathbb{R}\big\}\\
            & = \bigcup_{t \in \mathbb{R}} \varphi_t\big(U \cap \mathcal{D}_t\big),
        \end{aligned}
    \] 
    which is open. Thus, in the quotient topology on $\mathcal{X}/\!\sim_\varphi$, $\pi(U)$ is open for any open $U \subseteq \mathcal{X}$, and hence $\pi$ is an open map.\medskip

    We now show that $\mathcal{X}/\!\!\sim_\varphi$ is a smooth manifold. First, we verify the quotient is Hausdorff. Let $\pi(x) \neq \pi(x')$. Then $(x,x') \not \in \, \sim_\varphi$ and hence there exist open neighborhoods $U$ of $x$ and $V$ of $x'$ in $\mathcal{X}$ such that $(U \times V)\, \cap \, \sim_{\varphi} = \varnothing$. As both $U$ and $V$ are open, so too are $\pi(U)$ and $\pi(V)$, as $\pi$ is an open map. Then $\pi(U) \cap \pi(V) = \varnothing$, as their product is disjoint from the graph of $\sim_\varphi$, and thus $\mathcal{X}/\!\sim_\varphi$ is Hausdorff. Moreover, since $\mathcal{X}$ is second-countable, so too is $\mathcal{X}/\!\sim_\varphi$, as by openness of $\pi$, the image of any countable basis for $\mathcal{X}$ under $\pi$ forms a countable basis for $\mathcal{X}/\!\sim_\varphi$.\medskip

    We now construct a smooth atlas for $\mathcal{X}/\!\sim_\varphi$. Fix any $s \in \mathcal{S}$, and let $x \in \mathcal{X}$. Let $\mathcal{I}$ denote the $u^s$-indifference surface through $x$. Because $u^s$ is critical-point-free, $\mathcal{I}$ is an embedded smooth submanifold of $\mathcal{X}$, e.g.\ \cite{Lee2002} Corollary 5.14. Now, let $U \subset \mathcal{I}$ be a coordinate neighborhood of $x$ on this submanifold. Define:
    \[
        \mathcal{D}_U = \big\{(x',t') \in U \times \mathbb{R}: (x',t') \in \mathcal{D}\big\},
    \]
    and let $\Psi: \mathcal{D}_U \to \mathcal{X}$ via $\Psi(x',t') = \varphi_{t'}(x')$. Suppose $\Psi(x',t') = \Psi(x'',t'')$, i.e.\ $\varphi_{t'}(x') = \varphi_{t''}(x'')$. Then $u^s(x') = u^s(x'')$, hence by \eqref{uu}, $u\big(\varphi_{t'}(x')\big) - u^s\big( \varphi_{t''}(x'')\big) = t' - t''$ and thus $t' = t''$. But as $\varphi_{t'}: \mathcal{D}_{t'} \to \mathcal{D}_{-t'}$ is injective, we obtain $x' = x''$. Thus $\Psi$ is itself injective.\medskip
      
    Define $X(x) = \frac{\partial}{\partial t} \varphi_t(x) \big\vert_{t=0}$. Fix some $(x',t') \in \mathcal{D}_U$, and let $x'' = \varphi_{t'}(x')$. Since $U$ is open in $\mathcal{I}$, a vector  $\xi$ is tangent to $\mathcal{U}$ at $x'$ if and only if $\xi \in \textrm{ker}\, du_{x'}$. Then for any $\xi \in T_{x'} U$ and $\lambda \in \mathbb{R}$, 
    \begin{equation} \label{psidiff}
        d \Psi_{(x',t')}(\xi, \lambda) = d(\varphi_{t'})_{x'}(\xi) + \lambda X_{x''}.
    \end{equation}
    These two terms belong to complementary subspaces, as $\langle X, \nabla u\rangle = 1$ implies $X_{x''} \not \in \textrm{ker} \, du_{x''}$ and hence is not tangent to $U$. Now, differentiating both sides of \eqref{uu} with respect to $x$ and evaluating at $(x',t')$ yields:
    \[
        du_{x''} \circ d(\varphi_{t'})_{x'}  = du_{x'},
    \]
    and hence $d(\varphi_{t'})_{x'}(T_{x'} U) \subseteq \textrm{ker} \, du_{x''}$; as $d(\varphi_{t'})_{x'}$ is an isomorphism, these two subspaces have equal dimension, and hence coincide. Now, suppose:
    \[
        d \Psi_{(x',t')}(\xi, \lambda) = 0.
    \]
    Applying $du_{x''}$ to both sides yields:
    \[
        0 = du_{x''}\bigg(d(\varphi_{t'})_{x'}(\xi) + \lambda X_{x''}\bigg) = \lambda
    \]
    and hence $\lambda = 0$. Plugging this into \eqref{psidiff}, this implies $0 = d(\varphi_{t'})_{x'}(\xi)$. But as noted above, $d(\varphi_{t'})_{x'}$ is an isomorphism hence $\xi = 0$. Thus $d\Psi_{(x',t')}$ is an isomorphism and therefore $\Psi$ is a local diffeomorphism.\medskip

    Now, by definition, the range $\Psi(\mathcal{D}_U) = \pi^{-1}\big(\pi(U)\big)$, hence $\pi^{-1}\big(\pi(U)\big)$ is open, and therefore $\Psi$ is a diffeomorphism. But this implies that $\pi(U)$ is an open subset of $X/\!\sim_\varphi$, and that $\pi \big\vert_U \to \pi(U)$ is a homeomorphism onto an open subset. As $U$ is a coordinate neighborhood, this yields charts covering $X/\!\sim_{\varphi}$.\medskip

    Suppose $U, V \subset X/\!\sim_{\varphi}$ are two coordinate neighborhoods such that $U \cap V \neq \varnothing$.  Then for some open subset $U' \subseteq \{x \in \mathcal{X} : u^s(x) = c\}$ and $V' \subseteq \{x \in \mathcal{X} : u^s(x) = d\}$, the transition map is simply the restriction of $\varphi_{d-c}$, which is smooth. Hence these charts define a smooth $(d-1)$-manifold structure on $X/\!\sim_{\varphi}$.\medskip

    We now are in position to define $h$, $\mathcal{Y}$, and $\Omega$. Let $\mathcal{Y} = X/\!\sim_\varphi$, and define $h: \mathcal{X} \to  \mathbb{R} \times \mathcal{Y}$ via:
    \[
        h(x) = \big(u^s(x), \pi(x)\big).
    \]
    The map $h$ is injective, as $h(x) = h(x')$ implies $\pi(x) = \pi(x')$ and hence $\varphi_t(x) = x'$. But since $u^s(x) = u^s(x')$, by \eqref{uu}, we must have $x= x'$. Suppose now $dh_x(\zeta) = 0$ for some tangent vector $\zeta \in T_x \mathcal{X}$. Since $dh_x = \big[du^s_x \;\; d\pi_x\big]^\intercal$, this implies $d\pi_x(v) = 0$ hence $v$ is proportional to $X_x$. But since $du^s_x(\lambda X_x) = \lambda$, it follows $\lambda = 0$, and hence $v = 0$. Thus $dh_x$ is injective and since both the domain and codomain are $d$-dimensional, $dh_x$ is an isomorphism, making $h$ a local diffeomorphism. Because we have already shown $h$ to be injective, and both $\mathcal{X}$ and $\mathbb{R} \times X/\!\sim_\varphi$ are $d$-dimensional manifolds, $\mathcal{X}$ without boundary, we obtain that $h$ is a diffeomorphism from $\mathcal{X}$ onto the open set $h(\mathcal{X}) \subseteq \mathbb{R} \times X/\!\sim_\varphi$ (e.g.\ Prop.\ 4.22.(d) in \citealt{Lee2002}). We then define $\Omega = h(\mathcal{X}) \subseteq\mathbb{R} \times X/\!\sim_\varphi$. \medskip

    Finally, we now verify (i) and (ii).  For (i), note that for all $(x,t)\in \mathcal{D}$,
    \[
        \begin{aligned}
            h\big(\varphi_t(x)\big) & = \big(u^s(\varphi_t(x)), \pi(\varphi_t(x)\big)\\
            & = \big(u^s(x) + t, \pi(x)\big)\\
            & = h(x) + (t, 0,\ldots, 0),
        \end{aligned}
    \]
    where the second equality follows from the fact $u^s$ satisfies \eqref{uu}. For (ii), note that from the above, $(h_* X)_{(t,y)} = dh_{h^{-1}(t,y)}(X_{h^{-1}(t,y)}) = \partial_t$. Thus if $h(x) = (t,y)$, then for any $s' \in \mathcal{S}$:
    \[
        \frac{\partial}{\partial t} u^{s'}\big(h^{-1}(t,y)\big) = \langle X, \nabla u\rangle \big \vert_x = 1.
    \]
    Writing $\Omega = h(\mathcal{X})$, let $\Omega^{(y)} = \{t : (t,y) \in \Omega\}$. For any $y$, this is a connected interval, as if $h(x) = (t,y)$, this just equals $u^s\big( \pi^{-1}(\pi(x))\big) = u^s(\varphi_{\mathcal{D}^{(x)}}(x))$, which is connected as $\mathcal{D}^{(x)}$ is. As each $\Omega^{(y)}$ is connected, $u^{s'}\big(h^{-1}(t,y)\big) - t$ is then constant on $\Omega^{(y)}$ and hence depends only on $y$ as a function defined on all $\Omega$. Define this function to be $v^{s'}(y)=  u^{s'}\big(h^{-1}(t,y)\big) - t$. Then by identity, $u^{s'}\big(h^{-1}(t,y)\big) = t + v^{s'}(y)$, as desired.\medskip

    $(\Longleftarrow)$: Suppose now (i) and (ii) hold. Suppose $h(x) = (t,y)$ and let $t' \in \mathcal{D}^{(x)}$. Then:
    \[
        \begin{aligned}
            u^s(x) & = u^s\big(h^{-1}(t,y) \big)   \\
            & = u^s\big(h^{-1}(t+t', y)\big) - t'\\
            & = u^s\big(\varphi_{t'}\big(h^{-1}(t, y)\big)\big) - t'\\
            &= u^s\big(\varphi_{t'}(x)\big) - t',
        \end{aligned}
    \]
    where the first equality is an identity, the second follows from (ii), the third from (i), and the fourth again is an identity. Hence $u^s$ satisfies \eqref{uu} for each $s \in \mathcal{S}$.
    \end{proof}

\section{Locally Comparable Representations}\label{ordinalsect}

For ease of exposition, we will illustrate our results in the case of full-rank profiles, containing $S = d+1$ individuals.\footnote{All our results extend straightforwardly, albeit at the cost of more cumbersome notation, to profiles of arbitrary, finite size, as well as to profiles of lower rank; see \autoref{ordthmext}.} Consider, then, a fixed status quo allocation $x_0$, and suppose the planner observes a profile $(\succsim^1, \ldots, \succsim^{d+1})$. Let $(u^1, \ldots, u^{d+1})$ denote an arbitrary profile of representations. We will make use of the following regularity hypothesis.

\begin{itemize}
    \item[(CR.1)] \hypertarget{cr1}{} {\bf Full Profile Rank}: The subprofile $(u^1, \ldots, u^{d})$ has constant and full rank about $x_0$, i.e.\ the  matrix:   
    \begin{equation}\label{utiljac}
        \mathcal{J}(x) =\begin{bmatrix}
            \vdots & & \vdots\\
            \nabla u^{1} & \cdots & \nabla u^{d}\\
            \vdots & & \vdots
        \end{bmatrix}^\intercal \textrm{ is invertible}
    \end{equation}
    in a neighborhood of $x_0$. In addition, we will suppose:
    \[
    \nabla u^{{d+1}}(x_0) =  \sum_{i = 1}^d c_i \nabla u^{i}(x_0),
    \]
    for positive scalars $c_1, \ldots, c_d > 0$.
\end{itemize}

Assumption (\hyperlink{cr1}{CR.1}) requires the gradients of some $d$-individual subprofile not all lie on some lower dimensional surface. We also require that the final gradient does not belong to the span of any proper sub-collection of these. However, the assumption that it is specifically a \emph{positive} linear combination is purely for notational convenience. Finally, to avoid a proliferation of indices, we will simply write $\bar{u}$ for $u^{d+1}$.\medskip

Suppose now (\hyperlink{cr1}{CR.1}) holds, and let $\mathcal{N}_0$ denote any neighborhood of $x_0$ on which $\mathcal{J}$ is invertible, and the gradient condition on $\bar{u}$ remains valid.  Given any smooth function $f: \mathcal{N}_0 \to \mathbb{R}$, we define its {\bf utility derivatives}, for any $i = 1,\ldots, d$, via:
\[
    f_i \equiv \frac{\partial f}{\partial u^{i}} = \sum_{j=1}^d \frac{\partial f}{\partial x^j} (\mathcal{J}^{-1})_{ji},
\]
These quantities are simply the derivatives of $f$, instead computed with respect to the \emph{utility} coordinates $u \equiv (u^1, \ldots, u^d)$.\medskip

This perspective allows us to consider the final utility $\bar{u}$ as a function $\bar{u}(u)$ of the utilities of the remaining members of society. Doing so, we define the {\bf separability indices} of $\bar{u}$ via:
\[
    \underbrace{\frac{\bar{u}_{ij}}{\bar{u}_i \bar{u}_j} - \frac{\bar{u}_{ik}}{\bar{u}_i\bar{u}_k}, \quad i,j,k \textrm{ distinct}}_{\textrm{If } d \, \ge \,3} \quad \textrm{ or } \quad \underbrace{\sqrt{\bigg\vert\frac{1}{\bar{u}_1\bar{u}_2} \partial_{12} \ln\bigg(\frac{\bar{u}_1}{\bar{u}_2}\bigg)\bigg\vert}}_{\textrm{If } d \, = \,2},
\]
where all subscripts refer to utility derivatives. Let $\mathbb{H}$ denote the set of such indices.\footnote{That is, when $d \ge 3$, we let $\mathbb{H}$ denote the set of all functions $\frac{\bar{u}_{ij}}{\bar{u}_i \bar{u}_j} - \frac{\bar{u}_{ik}}{\bar{u}_i\bar{u}_k}$ for distinct $i,j$, and $k$. If instead $d=2$, $\mathbb{H}$ denotes the singleton family, containing only $\sqrt{\big\vert\frac{1}{\bar{u}_1\bar{u}_2} \partial_{12} \ln\big(\frac{\bar{u}_1}{\bar{u}_2}\big)\big\vert}$.} It may be verified that, given $\bar{u}$, each $h \in \mathbb{H}$ is independent of the choices of representation $u^1,\ldots, u^d$. Moreover, while in general the value of a given $h \in \mathbb{H}$ depends on the particular choice of representation $\bar{u}$, the property that any given $h = 0$ is purely ordinal.\medskip

As we will shortly show, the indices in $\mathbb{H}$ may be regarded as quantifying how close $\bar{u}$ is, up to a monotone transform, to being an additively separable function of the utilities $u^1, \ldots, u^d$. Equivalently, they reflect how close the utility-derivative marginal rates of substitution $\bar{u}_i/\bar{u}_j$ are to being independent of the utility levels of the rest of society $k \neq i,j$.\medskip

Before providing further interpretation, we first introduce our next regularity condition, which rules out singularities in the system $\mathbb{H}$ about $u_0 = u(x_0)$.

\begin{itemize}
    \item[(CR.2)] \hypertarget{cr2}{} {\bf Constant Index Rank}: For all $h \in \mathbb{H}$, $\textrm{sign}\big(h(u)\big)$ is constant in a neighborhood of $u_0$.\footnote{Recall $\textrm{sign}(t) = 1$ if $t > 0$, $-1$ if $t < 0$, and $0$ if $t = 0$.}
\end{itemize}

\noindent Under (\hyperlink{cr2}{CR.2}), there are two possibilities. Either every function in $\mathbb{H}$ is uniformly zero on some neighborhood of $x_0$, or there exists some $h \in \mathbb{H}$ such that $h(u_0)$ is locally strictly positive.\footnote{When $d \ge 3$, the functions corresponding to the tuple $(i,j,k)$ and $(i, k,j)$ have opposite sign. As  a consequence if any $h \in \mathbb{H}$ is non-zero at $u_0$, at least one will be strictly positive.} These two cases lead to qualitatively different behavior, hence we treat them separately.

\subsection{The Separable Case}\label{sepcase}

Suppose first every $h \in \mathbb{H}$ vanishes on some neighborhood of $u_0$. By possibly restricting further, we may suppose this is true on some $\mathcal{N} \subseteq u(\mathcal{N}_0)$.  In this case, there exist utility representations $f^1 \circ u^1,\ldots , f^d \circ u^d$ for $\succsim^1\! \!\!\,\vert_{u^{-1}(\mathcal{N})},\ldots, \succsim^d \!\!\!\,\vert_{u^{-1}(\mathcal{N})}$, and a smooth, strictly increasing, and critical-point-free $\bar{f}: \bar{u}(\mathcal{N}) \to \mathbb{R}$, such that:
\begin{equation}\label{seprep}
    (\bar{f}\circ \bar{u})(u) = \sum_{i=1}^d f^i(u^i).
\end{equation}
Let $v^i = f^i \circ u^i$, for each $i = 1,\ldots, d$. We term the tuple $v \equiv (v^1, \ldots, v^d)$ a {\bf locally separable representation} for the profile $(\succsim^1,\ldots, \succsim^{d+1})$.\footnote{The tuple of representations $v = (v^1,\ldots, v^d)$ is unique up to a component-wise transform of the form $v^i \mapsto \alpha v^i + \beta^i$, where the common $\alpha > 0$, and $\beta^i \in \mathbb{R}$.} Thus this case corresponds precisely to the locally separable setting treated in the main text.

\subsection{The Non-separable Case}

Suppose now that $\mathbb{H}$ contains some function $\bar{h}$ which is strictly positive at $x_0$. In this case, nearby $u_0 = u(x_0)$, we can define the functions:
\[
    K^i(u) = -\ln \big(\bar{h}(u)\, \bar{u}_i(u)\big),
\]
for $i = 1,\ldots, d$. In light of (\hyperlink{cr1}{CR.1}) and (\hyperlink{cr2}{CR.2}), these are well-defined. As the following lemma shows, the functions $K^1,\ldots, K^d$ generate an overdetermined system of partial differential equations, which characterize when a local virtual commodity is numeraire, for every preference in $(\succsim^1,\ldots, \succsim^{d+1})$.

\begin{proposition}\label{pdelemma}
    Assume (\hyperlink{cr1}{CR.1}) and (\hyperlink{cr2}{CR.2}), and let $\bar{h} \in \mathbb{H}$ be positive at $u_0$. Suppose $(\mathcal{D}, \varphi)$ is a local virtual commodity defined about $u_0$, and in $u$-coordinates, let:\footnote{In particular, $F^i$ is the component of $F$ pointing in direction $\nabla u^i$.}
    \[
        \dot{\varphi}(u) = F(u) = \begin{bmatrix} F^1(u)\\  \vdots \\ F^d(u) \end{bmatrix}.
    \]
    Then $(\mathcal{D}, \varphi)$ satisfies (\hyperlink{n1}{N.1}) and (\hyperlink{n2}{N.2}) for every preference in $(\succsim^1,\ldots, \succsim^{d+1})$, if, and only if, each $F^i > 0$, and $F$ satisfies:
    \begin{equation}\label{overdetpde}
    \begin{bmatrix}
        F^1_1 & \cdots & F^1_d\\
        \vdots & \ddots & \vdots\\
        F^d_1 & \cdots & F^d_d
    \end{bmatrix} = \begin{bmatrix}
         \langle F, \nabla K^1 \rangle & 0 & 0\\
        0 & \ddots & 0\\
        0 & 0 & \langle F, \nabla K^d \rangle
    \end{bmatrix}.
    \end{equation}
\end{proposition}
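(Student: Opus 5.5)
The plan is to work entirely in the local utility coordinates $u=(u^1,\ldots,u^d)$. By (\hyperlink{cr1}{CR.1}) these are a valid chart near $x_0$. In this chart, preference $i\le d$ is represented by the coordinate function $u^i$, and $\succsim^{d+1}$ is represented by $\bar u(u)$. Since $\bar u_i=c_i>0$ at $u_0$, we have $\bar u_i>0$ on a neighborhood. I would first translate the axioms into conditions on $F$, arguing as in the proof of \autoref{virtnumthm}.

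For $i\le d$, (\hyperlink{n1}{N.1}) says the flow carries $u^i$-level sets to $u^i$-level sets. Differentiating at $t=0$, this is equivalent to $F^i=\dot\varphi\cdot\nabla u^i$ being constant on level sets of $u^i$, i.e.\ $F^i_j=0$ for $j\neq i$. Conversely, if $F^i$ is a function of $u^i$ alone, then $u^i(\varphi_t(x))$ solves an autonomous ODE in $u^i$, so invariance holds. Likewise, (\hyperlink{n2}{N.2}) for $\succsim^i$ is equivalent to $F^i>0$. For $\bar u$, the same reasoning shows that (\hyperlink{n1}{N.1}) is equivalent to $\sum_i \bar u_i F^i = g(\bar u)$ for some function $g$ of $\bar u$ alone, and (\hyperlink{n2}{N.2}) is equivalent to $g>0$. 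The latter is automatic from $F^i>0$ and $\bar u_i>0$.

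Next I would differentiate the $\bar u$-invariance condition in $u^j$. Using $F^i_j=0$ for $i\ne j$, this gives
\[
F^j_j+\sum_i F^i\,\frac{\bar u_{ij}}{\bar u_j} = g'(\bar u)\equiv G \quad\text{for every } j,
\]
with $G$ independent of $j$. Conversely, if the left-hand side is independent of $j$, then $\nabla(\sum_i\bar u_iF^i)$ is parallel to $\nabla\bar u$, so $\sum_i\bar u_iF^i$ is locally a function of $\bar u$. On the other side, from $K^j=-\ln\bar h-\ln\bar u_j$ we get
\[
\langle F,\nabla K^j\rangle=-F\cdot\nabla\ln\bar h-\sum_i F^i\bar u_{ij}/\bar u_j .
\]
Hence \eqref{overdetpde} says exactly two things: the matrix $[F^i_j]$ is diagonal, and the displayed common value $G$ equals $-F\cdot\nabla\ln\bar h$. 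Sufficiency is therefore immediate: \eqref{overdetpde} together with $F^i>0$ yields (\hyperlink{n1}{N.1}) and (\hyperlink{n2}{N.2}) for all $d+1$ preferences.

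The main obstacle is necessity, specifically the identity $g'(\bar u)=-F\cdot\nabla\ln\bar h$. My plan for this step is a normalization argument.
\begin{itemize}
\item[(a)] Reparametrize each $u^i$ by $w^i$ with $dw^i/du^i=1/F^i(u^i)$. This is legitimate since $F^i>0$ depends only on $u^i$, and in $w$-coordinates it makes $F\equiv(1,\ldots,1)$.
\item[(b)] Reparametrize $\bar u$ by $\bar w=\Phi(\bar u)$ with $\Phi'=1/g$, so that $\bar w(w+t\mathbbm 1)=\bar w(w)+t$.
\item[(c)] In these coordinates the whole configuration is invariant under translation along $\mathbbm 1$. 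The index computed from $\bar w$ as a function of $w$ is therefore constant along $F$. Here I use that the index does not depend on the representations $u^1,\ldots,u^d$, as stated before the proposition.
\item[(d)] Compute how $\bar h$ transforms under $\bar u\mapsto\Phi(\bar u)$. For $d\ge3$, the term $\Phi''/\Phi'^2$ cancels in the difference defining $\bar h$, leaving an overall factor $1/\Phi'$. For $d=2$, the same factor should emerge after the square root; this needs checking.
\item[(e)] This gives $\bar h=\Phi'(\bar u)\,h_{\bar w}$. Hence $F\cdot\nabla\ln\bar h=(\Phi''/\Phi')\,F\cdot\nabla\bar u=(\Phi''/\Phi')\,g=-g'$, which is the required identity.
\end{itemize}
If the transformation law for $\bar h$ in step (d) turns out to differ, for instance by a power, I would instead obtain the identity by differentiating the $j$-equations once more in $u^k$ and eliminating $F^j_{jj}$ to derive the integrability condition directly. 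I expect the bookkeeping of $\bar h$'s transformation law, especially in the $d=2$ case, to be the delicate point.
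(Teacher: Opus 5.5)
Your argument is correct, but it reaches the key identity by a different route than the paper. The paper does not translate (N.1) for $\bar u$ into $\sum_i \bar u_i F^i = g(\bar u)$. Instead it works with finite symmetries. It first uses \autoref{noninfinitesimalsymmchar2} to get $\varphi_t^*\omega^i = g^t\,\omega^i$, where $\omega^i=\bar u_i\,du^i$. It then uses Cartan's lifted coframe and structure equations to prove that every such symmetry satisfies $(\bar h\circ\phi)\,g=\bar h$, so that $\bar h\,\omega^i$ is an invariant coframe. For $d\ge 3$ this comes from the torsion coefficients (\autoref{invariantcoframelemma}); for $d=2$ it comes from the curvature of the form $\kappa$ (\autoref{invariantcoframelemmad=2}). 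Differentiating in $t$ gives $\mathcal{L}_F(\bar h\,\omega^i)=0$, which is exactly \eqref{overdetpde}.

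Your identity $g'(\bar u)=-F\cdot\nabla\ln\bar h$ is the infinitesimal form of the same fact. Under your diagonal conditions one has $\mathcal{L}_F\omega^i=g'(\bar u)\,\omega^i$. You derive it instead from two elementary transformation rules plus translation invariance in rectified coordinates:
\begin{itemize}
\item the index is invariant under reparametrizations $u^i\mapsto f^i(u^i)$;
\item the index has weight $\bar h\mapsto\bar h/\Phi'$ under $\bar u\mapsto\Phi(\bar u)$.
\end{itemize}
The $d=2$ case you flagged does go through. Since $\bar w_1/\bar w_2=\bar u_1/\bar u_2$ while $\bar w_1\bar w_2=\Phi'^2\,\bar u_1\bar u_2$, the square root again produces the factor $1/\Phi'$.

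Comparing the two routes:
\begin{itemize}
\item Your route is shorter and uniform in $d$. It makes transparent why $\bar h$ is the right normalizer: its weight exactly cancels the factor $g'$ in $\mathcal{L}_F\,d\bar u=g'(\bar u)\,d\bar u$. It also shows that sufficiency uses no information about $\bar h$ at all, only the diagonal structure and the $j$-independence.
\item The paper's route proves the stronger statement that $\bar h\,\omega^i$ is preserved by every local symmetry of the profile, not just by flows. It also sets up the coframe machinery used later for \autoref{commonvnthm}.
\end{itemize}

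What you leave implicit is routine, handled as in \autoref{noninfinitesimalsymmchar} and \autoref{liederivformulation}. You need to shrink to a neighborhood on which the relevant level sets are connected, so that ``constant on level sets'' really yields a function of $u^i$ or of $\bar u$, and so that your ODE argument converts the infinitesimal conditions back into finite-time (N.1). You also need the smoothness of $g$.
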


\autoref{pdelemma} shows that the $d$ components of the generator of a virtual numeraire must satisfy an overdetermined system of $d^2$ partial differential equations. Informally, the equations \eqref{overdetpde} state each transformation $\varphi_t$ must preserve all local, geometric information about every indifference curve, such as the utility-derivative marginal rates of substitution of $\bar{u}$, and any structural relations between them. This may be viewed as a generalization of the classical fact that quasilinear preferences are characterized by the independence of their marginal rates of substitution on the level of numeraire.\footnote{See \cite{blackorby1978extension} and \cite{sono1961effect}. See also \autoref{quasilinearizationthm}.}\medskip

In general, no non-zero solution to \eqref{overdetpde} will exist, unless the system satisfies strong compatibility, or `integrability' conditions. 
For any distinct $i, j_0 = 1,\ldots, d$, let $Q^i_{j_0} = K^i_{j_0}$, and for any tuple of indices $(i, j_0, \ldots, j_L)$ with $i \neq j_0$ and $L \ge 0$, recursively define:
\[
    Q^i_{j_0\ldots j_{L+1}} = \frac{\partial Q^i_{j_0\ldots, j_L}}{\partial u^{j_{L+1}}}  + Q^i_{j_0\ldots, j_L}\sum_{l=0}^L K^{j_l}_{j_{L+1}}.
\]
Let $\mathcal{M}_L$ denote the matrix of functions whose rows are given by the vectors:
\begin{equation}\label{compatcond}
     \bigg[\nabla Q^i_{j_0\ldots j_{L'}} + Q^i_{j_0\ldots j_{L'}}\sum_{l=0}^{L'} \nabla K^{j_l}\bigg]^\intercal,
\end{equation}
for each tuple $(i, j_0,\ldots, j_{L'})$ with $i \neq j_0$ and $L' \le L$. Since the rank of any $\mathcal{M}_L$ cannot exceed $d$, there is necessarily some unique, smallest $L^*$ such that, evaluated at $x_0$, the ranks of $\mathcal{M}_{L^*}$ and $\mathcal{M}_{L^*+1}$ coincide. Our final regularity condition requires the ranks of these matrices to be locally constant.

\begin{itemize}
    \item[(CR.3)] \hypertarget{cr3}{} {\bf Constant System Rank}: For each $0 \le L \le L^* + 1$, the ranks of the matrices $\mathcal{M}_L$ are locally constant about $x_0$.
\end{itemize}

We say that the system \eqref{overdetpde} satisfies the {\bf compatibility conditions} if $\textrm{ker}(\mathcal{M}_{L^*})$, evaluated at $x_0$, contains a strictly positive vector. Our next theorem not only characterizes the existence of common, local virtual numeraires, but also shows that the set of such commodities has a well-defined dimension, and is smoothly parameterized by at most $d = \textrm{dim}(\mathcal{X})$ parameters.

\begin{theorem}\label{commonvnthm}
    Assume (\hyperlink{cr1}{CR.1}) - (\hyperlink{cr3}{CR.3}), and suppose $\mathbb{H}$ contains some locally non-zero function. Then the set of local virtual numeraires for $(\succsim^1,\ldots, \succsim^{d+1})$ is non-empty if, and only if the compatibility conditions:
    \[
        \textrm{\emph{ker}}(\mathcal{M}_{L^*}) \cap \mathbb{R}^d_{++} \neq \varnothing
    \]
    hold at $x_0$. When this is true, the set of local virtual numeraires is non-empty, and depends smoothly on $d- \textrm{\emph{rank}}(\mathcal{M_{L^*}})$ parameters.
\end{theorem}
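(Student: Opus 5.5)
By \autoref{pdelemma}, a local virtual commodity is a common numeraire for $(\succsim^1,\ldots,\succsim^{d+1})$ exactly when its generator $F$, written in $u$-coordinates, satisfies $F^i>0$ together with the overdetermined linear system \eqref{overdetpde}. Since \autoref{flowlemma} puts generators and virtual commodities in bijection, the theorem reduces to describing the local solution space of \eqref{overdetpde} near $u_0$. The plan is to show that the solutions form a finite-dimensional space, determined by the value $F(u_0)$ and cut out by linear constraints, and that the set of admissible initial values is $\textrm{ker}(\mathcal{M}_{L^*})$ evaluated at $u_0$.

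\emph{Step 1: reduce to a Frobenius-type system.} Each off-diagonal equation $F^i_j=0$ ($i\ne j$) makes $F^i$ a function of $u^i$ alone. The diagonal equations $F^i_i=\langle F,\nabla K^i\rangle$ then express every first derivative of $F$ linearly in terms of $F$ itself. So the full Jacobian of $F$ is $F\mapsto A(u)F$ for a smooth matrix-valued function $A$ built from $\nabla K^i$; equivalently, $dF=\Theta(u)F$ for a $\mathfrak{gl}_d$-valued one-form $\Theta$. Any solution is therefore determined by $F(u_0)$, which already gives the bound of at most $d$ parameters.

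\emph{Step 2: derive the integrability constraints.} Differentiating the identity $F^i_j=0$ along $u^{j_1}$, and substituting Step 1's expressions for first derivatives, yields a linear constraint on $F$. I intend to check by induction on $L$ that the constraint at level $L$ is exactly $\langle r,F\rangle=0$, where $r$ is a row of \eqref{compatcond}; this should follow from the recursion defining $Q^i_{j_0\ldots j_{L+1}}$, with the sum $\sum_l K^{j_l}_{j_{L+1}}$ arising from the logarithmic weights $e^{-K}$ that appear when $F^i$ is rewritten in terms of $F^i_i/F^i$. This bookkeeping is the step I expect to be the main obstacle: identifying precisely which combination of $K$'s makes the constraint close up, and verifying that commuting mixed partials introduces no additional constraints beyond the rows of $\mathcal{M}_L$.

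\emph{Step 3: stabilization and existence via constant rank.} Any solution satisfies $\mathcal{M}_L F=0$ for all $L$. Under (\hyperlink{cr3}{CR.3}), the distribution $E(u)=\textrm{ker}\,\mathcal{M}_{L^*}(u)$ is a smooth vector bundle near $u_0$. Because the rank of $\mathcal{M}_{L^*+1}$ equals that of $\mathcal{M}_{L^*}$, differentiating the constraints produces nothing new, so the connection $d-\Theta$ preserves $E$. The same equality of ranks should also force the restricted connection to be flat; this is the standard Cartan/Frobenius prolongation argument, in which the first derivatives of the constraints lie in the span of the old ones. Frobenius then gives, for each $F_0\in E(u_0)$, a unique local solution with $F(u_0)=F_0$, depending smoothly and linearly on $F_0$. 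The solution space thus has dimension $d-\textrm{rank}\,\mathcal{M}_{L^*}(u_0)$.

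\emph{Step 4: impose positivity.} By \autoref{pdelemma}, the numeraires are exactly the solutions with $F^i>0$. By continuity, positivity at $u_0$ persists on a smaller neighborhood, and positivity at $u_0$ is necessary. Hence local numeraires exist if and only if $E(u_0)\cap\mathbb{R}^d_{++}\ne\varnothing$, and this set is an open cone in $E(u_0)$ of dimension $d-\textrm{rank}(\mathcal{M}_{L^*})$, which gives the smooth parametrization.
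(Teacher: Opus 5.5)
Your proof is correct. It follows the paper's reduction and bookkeeping but replaces the paper's key existence step with a self-contained argument.

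\textbf{The paper's route.} It reduces to \eqref{overdetpde} via \autoref{pdelemma}. In \autoref{integrabilitycondlemma} it proves $R_{\alpha k}=\mathcal{D}_kR_\alpha+\big(\sum_l K^{j_l}_k\big)R_\alpha$, so that the zero sets of the prolonged integrability conditions are $\ker\mathcal{M}_L$. It then cites the Veblen--Thomas theorem for first-order systems $X^i_j=\psi^i_j(X,u)$, checking its hypotheses by picking $p$ rows of $\mathcal{M}_{L^*}$ that stay independent near $u_0$.

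\textbf{Your route.} You use that the system is linear in $F$. Solutions are parallel sections of $\nabla=d-\Theta$ on the trivial bundle, $E=\ker\mathcal{M}_{L^*}$ is a subbundle by (CR.3), and Frobenius applied to the restricted connection gives a solution through every $F_0\in E(u_0)$. Two remarks on your Step 3.
\begin{itemize}
\item Invariance of $E$ needs exactly the paper's identity above, i.e.\ your deferred Step 2. Indeed $\langle r_\alpha,\nabla_kF\rangle=\partial_k\langle r_\alpha,F\rangle-(\mathcal{D}_kR_\alpha)(F,u)$, and $\mathcal{D}_kR_\alpha$ is a combination of $R_{\alpha k}$ and $R_\alpha$, both of which vanish on $E=\ker\mathcal{M}_{L^*+1}$.
\item Flatness on $E$ does not come from the rank equality. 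For $j\neq k$, the curvature component $R_{jk}$ sends $F$ to the vector with $j$th entry $I^j_k(F,u)$, $k$th entry $-I^k_j(F,u)$, and zeros elsewhere. So the curvature is literally the level-zero conditions, and it vanishes on $E\subseteq\ker\mathcal{M}_0$. The rank equality is what gives invariance of $E$.
\end{itemize}

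\textbf{What each buys.} Your route does not depend on a terse 1926 result whose regularity conditions the paper had to reconstruct. It also shows the rows of $\mathcal{M}_L$ as the curvature and its iterated covariant derivatives. Finally, it gives a sharper conclusion: the numeraires form an open cone in a $(d-\textrm{rank}\,\mathcal{M}_{L^*})$-dimensional linear space of solutions, depending linearly on $F(u_0)$. The paper's route is shorter given the citation and never uses linearity, so it would apply unchanged to nonlinear systems of the same shape.
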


\autoref{commonvnthm} characterizes when a common local virtual numeraire exists for a preference profile. This, in turn, completely describes the set of profiles of comparable representations for these preferences. \medskip

This result straightforwardly extends to profiles where $S \neq d+1$, as well as to various cases where (\hyperlink{cr1}{CR.1}) does not hold, such as when a profile's rank is less than $d$, or $\nabla u^{d+1}$ is not a positive linear combination of other gradients.\footnote{See \autoref{ordthmext}.}\medskip

Together \autoref{locseptheorem} and \autoref{commonvnthm} show that, when non-empty, the set of common, local virtual numeraires is always finite dimensional, and of dimension at most equal to $d + 1$. Thus, interpersonal comparability presents an exacting requirement. For any sufficiently rich profile, of the infinite-dimensional family of possible local utility representations consistent with the given preferences, only a finite-dimensional subset will be interpersonally comparable.\medskip

\autoref{commonvnthm} not only allows us to describe the set of common, virtual numeraires for a given profile, but also to obtain necessary and sufficient conditions on \emph{families} of profiles for the existence of comparable representations. We illustrate this below.

\begin{example}\label{haraex}
    Suppose that $\mathcal{X} = \mathbb{R}^3_{++}$, and consider the preference profile represented by the utilities $u^i(x) = x^i$, for $i = 1,2,3$, and $\bar{u}(x) = g(x^1 + x^2)  + x^3$ for some increasing, strictly concave $g$.\footnote{Given our choices of $u^1, u^2$, and $u^3$, we may think of $\mathcal{X}$ as already being in utility coordinates. Thus, in particular, the utility derivatives of $\bar{u}$ coincide with its partial derivatives with respect to each $x^i$.} Note that varying $g$ generally does not correspond to applying any monotone transform to $\bar{u}$, hence we may regard $g$ as indexing a family of preference profiles.\medskip

    We claim any tuple of preferences of this form admits a common, local virtual numeraire about an allocation $x_0$ if, and only if, restricted to some neighborhood of $x_0$, $g$ is the Bernoulli utility of a hyperbolic expected utility preference (\citealt{merton1971optimum}), i.e.\ it satisfies:
    \begin{equation}\label{hara}
        -\frac{g'(s)}{g''(s)} = a + bs,
    \end{equation}
    for constants $a,b \in \mathbb{R}$.\footnote{Examples include both log or power functions (the CRRA family) and exponential (CARA) utilities. Since $g$ is assumed to be increasing and strictly concave, $a+bs$ is strictly positive on the range of values taken by $x^1+x^2$ about $x_0$.} For any such $g$, we will show the set of common, local numeraires depends smoothly on three parameters.\footnote{If, instead, we required $g$ to be only weakly concave, on regions where $g$ is linear, \autoref{locseptheorem} would apply and instead yield four-parameter families of local solutions.}\medskip

    To prove this, we will make use of \autoref{pdelemma} and \autoref{commonvnthm}. Firstly, we seek a positive function in $\mathbb{H}$. Letting $s = x^1 + x^2$, since $\bar{u}_1 = \bar{u}_2 = g'(s)$, and $\bar{u}_3 = 1$, the only functions in $\mathbb{H}$ that are positive at any given point are $ \bar{h} = -g''(s)/g'(s)^2$. Thus we obtain:
    \[
        K^1 = K^2 = -\ln{\frac{-g''(s)}{g'(s)}}, \quad \textrm{ and } \quad K^3 =  -\ln{\frac{-g''(s)}{g'(s)^2}}.
    \]
    By \autoref{pdelemma}, it suffices to instead determine whether any vector field $F$, defined in a neighborhood of $x_0$, satisfies \eqref{overdetpde} for these $K$.\medskip
    
    Since $K^1 = K^2$, by \eqref{overdetpde} it follows any $F$ must satisfy $F^1_1 = F^2_2$. However, since \eqref{overdetpde} also implies each $F^i_i$ must be a function solely of $x^i$, it follows that $F^1_1 = F^2_2 = \alpha$, for some constant $\alpha$. Integrating, we obtain $F^i = \alpha x^i + \beta^i$, for $i =1,2$.\medskip

    Plugging these functional forms for $F^1$ and $F^2$ back into \eqref{overdetpde} yields an ordinary differential equation for $g$ that must be satisfied for any such $X$ to exist:
    \[
        \frac{T'(s)}{T(s)} = \frac{\alpha}{\alpha s + \beta},
    \]
    where $T(s) = -g'(s)/g''(s)$. In the context of risk, this quantity would correspond to the Arrow-Pratt measure of risk tolerance for the expected utility preference with Bernoulli utility $g$. Integrating both sides yields:
    \[
        -\frac{g'(s)}{g''(s)} = as + b,
    \]
    as desired. Thus for any common virtual numeraire to exist, $g$ must necessarily obey \eqref{hara}.\medskip

    Conversely, to show this is not only necessary but also sufficient for the existence of some common, local numeraire, we apply \autoref{commonvnthm}.  By direct calculation, satisfaction of \eqref{hara} implies each vector of the form \eqref{compatcond} is uniformly zero. Thus $L^* = 0$, and $\mathcal{M}_{L^*}$ is simply the zero matrix, and hence satisfies the integrability conditions. As a consequence, \autoref{commonvnthm} guarantees the set of common, local virtual numeraires is non-empty, and depends smoothly on $3$ parameters, as desired.\hfill $\blacksquare$
    \end{example}

\subsection{\autoref{commonvnthm} for General Profiles}\label{ordthmext}

\subsubsection{Underdetermined Profiles: Rank $ = S$}

Consider a preference profile $(\succsim^1,\ldots, \succsim^S)$ whose rank (i.e.\ the rank of the $d \times S$ matrix whose columns are the gradients $\nabla u^s$ of arbitrary choices of representations) is equal to $S$.  In this case, given any choice of representations $u^1,\ldots, u^S$, the gradients of these utilities are linearly independent at $x_0$; we will assume that this rank condition holds locally in some neighborhood of $x_0$.\medskip

In this case, the profile admits an infinite-dimensional family of virtual numeraires. To see this, given any choice of profile of representations $(u^1,\ldots, u^S)$, by linear independence of $\nabla u^1, \ldots , \nabla u^S$, their affine hull cannot contain the origin, thus by \autoref{cardexistance} there exists a common utility unit for $(u^1,\ldots, u^S)$. Since the independence of gradients at $x_0$ is preserved under monotone transformation, this will be true of any representing profile, yielding, by \autoref{virtnumthm}, a generally different unit for each other choice of utilities.

\subsubsection{Overdetermined Profiles: $S > $ Rank $+1$}

We first note that in such cases, it is without loss of generality to assume the profile rank $r$ is equal to $d$. If not, by the Rank Theorem (\citealt{Lee2002} Theorem 4.12), there exist local coordinates about $x_0$ of the form $(u^1,\ldots, u^r, z^{r+1}, \ldots, z^d)$. Thus any local numeraire defined on the $r$-dimensional slice, spanned by the utility coordinates, through $x_0$, can be turned into a full local solution simply by defining it to be locally constant in the $z$ variables.\medskip

Thus, without loss of generality, we suppose the profile rank equals $d$. While both \autoref{pdelemma} and the \cite{veblen1926projective} integrability theorem can be extended to the case of arbitrarily more than $d+1$ preferences (see discussion in \citealt{eisenhart1927non}), doing so directly yields additional local rank conditions which must be satisfied, and additional, increasingly complex integrability conditions. Instead, a far simpler practical approach is to simply apply \autoref{locseptheorem} or \autoref{commonvnthm} to any subprofile of $d+1$ preferences, hence obtaining a complete description of the subprofiles common virtual numeraires, depending on at most $d+1$ free parameters. For each subsequent preference $d+2$ onward, \autoref{liederivformulation} gives a simple necessary and sufficient condition to check whether a given $\varphi$ indeed is a virtual numeraire.\medskip

Concretely, \autoref{liederivformulation} states that if $F$ is a solution to \eqref{overdetpde}, then, for any choice of utility $\tilde{u}$ for any additional preference, that $F$ is the generator of a local virtual numeraire for the underlying preferences if and only if (i) $\langle F, \nabla \tilde{u} \rangle$ can be written as $h \circ \tilde{u}$, and (ii) $\langle F, \nabla \tilde{u} \rangle > 0$. Condition (i) may be seen as a generalization of Euler's homogeneous function theorem.\footnote{Euler's homogeneous function theorem (e.g.\ \citealt{mas1995microeconomic}) corresponds to the special case where $F$ is the vector field $[x^1, \ldots, x^d]^\intercal$, which is the generator of the virtual commodity $\varphi_t(x) = e^t x$, which is itself a numeraire for any monotone, homothetic preference.} Thus, in practice, by simply computing the directional derivative $\langle F, \nabla \tilde{u}\rangle$ as a function of the parameters given by \autoref{commonvnthm} or \autoref{locseptheorem} and checking conditions (i) and (ii), one obtains a simple method of extending these results to larger profiles.

\subsection{Proofs}

\subsubsection{Preliminary Lemmas}

In all that follows, $f^* \omega$ denotes the pullback of the 1-form $\omega$ under the smooth function $f$. We say two 1-forms are proportional if they differ by a positive, smooth, multiple.

\begin{lemma}\label{noninfinitesimalsymmchar}
    Let $\succsim$ be a preference on $\mathcal{X}$ with utility $u$, and  $U,V \subseteq \mathcal{X}$ connected open sets, where the restriction of every indifference surface of $\succsim$ to $U$ is connected. Then for any diffeomorphism $\phi: U \to V$, and any $1$-form $\omega$ proportional to $du$, the following are equivalent:
    \begin{itemize}
        \item[(i)] For all $x,x' \in U$, $x \succsim x'$ if and only if $\phi(x) \succsim \phi(x')$; and
        \item[(ii)] For some smooth $g: U \to \mathbb{R}_{++}$, we have $\phi^* \omega = g \, \omega$.
    \end{itemize}
\end{lemma}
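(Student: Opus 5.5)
Since $\omega = k\,du$ with $k>0$ smooth, the pullback rule $\phi^*(k\,du) = (k\circ\phi)\,d(u\circ\phi)$ shows that (ii) is equivalent to the existence of a positive smooth $\tilde g$ with $d(u\circ\phi) = \tilde g\,du$ on $U$. Indeed, $\tilde g = g\,k/(k\circ\phi)$, and this is positive exactly when $g$ is. So it suffices to prove the lemma for $\omega = du$, where (ii) reads $d(u\circ\phi) = \tilde g\,du$ with $\tilde g>0$. We then prove the two implications separately.

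\emph{(i) $\Rightarrow$ (ii).} By (i), $u\circ\phi$ and $u$ induce the same ordering on $U$. Hence $u\circ\phi$ is constant on each (connected, by hypothesis) indifference set of $u\vert_U$, and $u\circ\phi = H\circ u$ for some strictly increasing $H:u(U)\to\mathbb{R}$; here $u(U)$ is an interval because $U$ is connected. We need $H$ smooth with $H'>0$. Near any $x\in U$, use a local section transverse to the level sets: since $du\neq 0$, pick a curve $\gamma(s)$ with $u(\gamma(s)) = u(x)+s$. Then $H(u(x)+s) = u(\phi(\gamma(s)))$, so $H$ is smooth. Differentiating $u\circ\phi = H\circ u$ gives $d(u\circ\phi) = H'(u)\,du$. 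Because $\phi$ is a diffeomorphism and $u$ is critical-point-free, $d(u\circ\phi) = \phi^*du \neq 0$, so $H'(u)\neq 0$. Combined with monotonicity of $H$, this forces $H'>0$, and we take $\tilde g = H'\circ u$.

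\emph{(ii) $\Rightarrow$ (i).} Assume $d(u\circ\phi) = \tilde g\,du$ with $\tilde g>0$. First, $u\circ\phi$ is constant on each connected indifference set of $u$ in $U$. This follows because its differential annihilates $\ker du$, the tangent space of the level set, so its derivative along any curve inside the level set vanishes. Therefore $u\circ\phi = H\circ u$ for some function $H$ on the interval $u(U)$. The same transverse-curve argument shows $H$ is smooth, and then $H'(u) = \tilde g>0$, so $H$ is strictly increasing. It follows that $u(x)\ge u(x')$ if and only if $u(\phi(x))\ge u(\phi(x'))$ for all $x,x'\in U$, which is (i).

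The main obstacle is the step that uses the connectedness hypothesis to pass from the local statement (the differentials are proportional) to the global one (a single function $H$ on $u(U)$). Without connected indifference sets, $u\circ\phi$ could take different values on different components of the same level set of $u\vert_U$, and then no $H$ exists. I would therefore state this step carefully, and separately verify that $H$ is well defined and smooth using local transverse curves.
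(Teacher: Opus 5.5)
Your proof is correct and follows essentially the same route as the paper's: both write $u\circ\phi = H\circ u$ with $H$ smooth (you via a transverse curve, the paper via a Rank Theorem chart), obtain $H'>0$ from the nonvanishing of $d(u\circ\phi)$, and for the converse use connected level sets to integrate $d(u\circ\phi)\propto du$ into such an $H$. Your up-front reduction from $\omega = k\,du$ to $\omega = du$ is just a tidier packaging of the paper's inline factors $a$ and $b$.
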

\begin{proof}
    Since $\phi: U \to V$ is a diffeomorphism, (i) is equivalent to $u$ and $u \circ \phi$ representing the same preference on $U$. Thus (i) is equivalent to the existence of some strictly increasing map $f: u(U) \to \mathbb{R}$ such that $u \circ \phi = f \circ u$ on $U$.\medskip
    
    We now claim $f$ is smooth. Pick any point $x \in U$. As $u$ is critical-point-free, by the Rank Theorem (\citealt{Lee2002} Theorem 4.12), there exists some neighborhood $x \in W \subseteq U$, open interval $I \subseteq \mathbb{R}$, open ball $B \subseteq \mathbb{R}^{d-1}$, and diffeomorphism $h: I \times B \to W$ such that $u\big(h(t,b)\big) = t$. Thus, $(u \circ \phi \circ h)(t,b)$ is smooth, and since $u$ is constant in $b$, and $u \circ \varphi$ represents the same preference, $(u \circ \phi \circ h)(t,b)$ is constant in $b$. Thus $f$ is simply the restriction of $u \circ \phi \circ h$ to any slice of the form $I \times \{b\}$, and hence smooth. Since smoothness is local, this implies $f$ is smooth on $u(U)$.\medskip

    Now, differentiating $u \circ \phi = f \circ u$ at any $x \in W$, we obtain $du_x\, d\phi_x = f'(u(x)) \, du_x$. Since $\phi$ is a diffeomorphism, $d\phi$ is invertible; since $u$ is critical-point-free, $du \neq 0$, hence $f'\big(u(x)\big) \neq 0$; since $f$ is increasing, $f'\big(u(x)\big) > 0$. Thus (i) is equivalent to $u \circ \phi = f\circ u$ for some strictly increasing, critical-point-free smooth map $f$.\medskip

    Now, suppose $u \circ \phi = f\circ u$ for such an $f$, and let $\omega\vert_U = a \, du$ for some smooth $a: U \to \mathbb{R}_{++}$, and likewise $\omega \vert_V = b\, du$ for smooth $b: V \to \mathbb{R}_{++}$. Then by direct calculation:
    \[
        \phi^* \omega = \phi^*(b\, du) = (b\circ \phi) \, d(u \circ \phi) = (b\circ \phi) \, d(f\circ u) = \frac{(b \circ \phi)\, f'(u)}{a} \, \omega,
    \]
    and hence (i) implies (ii). Conversely, suppose $\phi^* \omega = (b \circ \phi) d(u \circ \phi) = ga \, du$, where everywhere $g > 0$. Let $x \sim y$. Since the indifference sets of $\succsim$ are connected, there exists some smooth path $\gamma: [0,1] \to U$ such that $\gamma(0) = x$, $\gamma(1) = y$, and $\gamma$ is contained in the indifference surface through $x$ and $y$. Since for all $t$, $\gamma'(t) \in \textrm{ker}\, du = \textrm{ker}\,  d(u \circ \phi)$, by the Gradient Theorem, $u \circ \varphi$ is constant on the level sets of $u$, thus globally we obtain some function $f: u(U) \to \mathbb{R}$ such that $f \circ u = u \circ \phi$. By an identical argument to before, $f$ is smooth; since $u \circ \phi = f \circ u$, $f'(u) \, du = d(f \circ u) = d(u \circ \phi) = \frac{ga}{b \circ\phi} \, du$, and hence $a,b,g > 0$ implies $f'(u) > 0$. Therefore $f$ is strictly increasing and critical-point-free and thus (ii) implies (i).
\end{proof}

Now, suppose that (\hyperlink{cr1}{CR.1}) holds for the profile $(\succsim^1,\ldots, \succsim^{d+1})$. Let $u^1$,$\ldots$, $u^d,$ and $ \bar{u}$ denote arbitrary choices of representation for each preference. Then the 1-form:
\[
    d\bar{u} = \sum_{i=1}^d \bar{u}_i du^i,
\]
for positive-valued utility derivatives $\bar{u}_1, \ldots, \bar{u}_d$. (Throughout, we do not employ Einstein summation convention.) For each $i = 1,\ldots, d$, define:
\[
    \omega^i = \bar{u}_i du^i,
\]
\and $\bar{u}_i \equiv \partial/\partial u^i \, \bar{u}$. By (\hyperlink{cr1}{CR.1}) these one-forms define a local coframe near $x_0$. 

\begin{lemma}\label{noninfinitesimalsymmchar2}
     Let $U,V \subseteq \mathcal{X}$ be connected open sets such that the restriction of each indifference surface of $\succsim^1,\ldots, \succsim^{d+1}$ to $U$ is connected, and (\hyperlink{cr1}{CR.1}) holds on $U$ and $V$. Then for any diffeomorphism $\phi: U \to V$, the following are equivalent:
    \begin{itemize}
        \item[(i)] For all $x,x' \in U$, we have $x \succsim^i x'$ if and only if $\phi(x) \succsim^i \phi(x')$, for all $i = 1,\ldots, d+1$; and
        \item[(ii)] For some common, smooth $g: U \to \mathbb{R}_{++}$, we have $\phi^* \omega^i = g \, \omega^i$ for all $i = 1,\ldots, d$.
    \end{itemize}
\end{lemma}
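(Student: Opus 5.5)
The plan is to reduce both directions to \autoref{noninfinitesimalsymmchar}, applied once to each of the $d+1$ preferences, and then use the linear independence of the coframe $\omega^1,\ldots,\omega^d$ to tie the resulting multipliers together. The key identity is $d\bar{u} = \sum_i \omega^i$. Hence the one-form $\bar{\omega} \equiv \sum_{i=1}^d \omega^i$ is proportional to $d\bar u$, and each $\omega^i$ is proportional to $du^i$ because $\bar u_i>0$ by (\hyperlink{cr1}{CR.1}). So $\omega^i$ and $\bar\omega$ are admissible choices of $\omega$ in \autoref{noninfinitesimalsymmchar} for $\succsim^i$ and $\succsim^{d+1}$ respectively.

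\textbf{(ii) $\Rightarrow$ (i).} Suppose $\phi^*\omega^i = g\,\omega^i$ for all $i$, with a common $g>0$.
\begin{itemize}
\item[1.] For each $i\le d$, \autoref{noninfinitesimalsymmchar} applied with $\omega=\omega^i$ gives that $\phi$ preserves $\succsim^i$ on $U$.
\item[2.] By linearity of the pullback, $\phi^*\bar\omega = \sum_i \phi^*\omega^i = g\,\bar\omega$.
\item[3.] \autoref{noninfinitesimalsymmchar} applied with $\omega = \bar\omega$ then gives that $\phi$ preserves $\succsim^{d+1}$.
\end{itemize}
The connectedness hypothesis on indifference surfaces in $U$ is exactly what that lemma requires.

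\textbf{(i) $\Rightarrow$ (ii).} Suppose $\phi$ preserves all $d+1$ preferences.
\begin{itemize}
\item[1.] For each $i\le d$, \autoref{noninfinitesimalsymmchar} yields a smooth $g^i:U\to\mathbb{R}_{++}$ with $\phi^*\omega^i = g^i\omega^i$.
\item[2.] For $\succsim^{d+1}$, the same lemma yields $\bar g>0$ with $\phi^*\bar\omega = \bar g\,\bar\omega$.
\item[3.] Expanding the left side by linearity, $\sum_i g^i\omega^i = \sum_i \bar g\,\omega^i$.
\item[4.] Since $\omega^1,\ldots,\omega^d$ form a coframe on $U$ (by (\hyperlink{cr1}{CR.1}), $du^1,\ldots,du^d$ are pointwise independent and $\bar u_i\neq 0$), the coefficients must match pointwise. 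Hence $g^i=\bar g$ for every $i$, and $g \equiv \bar g$ is the common multiplier.
\end{itemize}

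The main subtlety is bookkeeping about where each form is evaluated, not anything conceptual. The pullback $\phi^*\omega^i$ at $x$ involves $\omega^i$ at $\phi(x)\in V$, so the coframe property and the positivity of $\bar u_i$ are needed on $V$ as well as on $U$; this is why (\hyperlink{cr1}{CR.1}) is assumed on both sets. I would also check that proportionality in \autoref{noninfinitesimalsymmchar} is understood globally, on $U\cup V$, with positive smooth multipliers. With $\omega^i=\bar u_i\,du^i$ defined on the whole region where (\hyperlink{cr1}{CR.1}) holds, this is immediate. Beyond that, the argument is a direct application of the earlier lemma together with uniqueness of coefficients in a coframe.
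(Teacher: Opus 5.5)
Your proposal is correct and matches the paper's proof. Both apply \autoref{noninfinitesimalsymmchar} to each $\omega^i$ and to $d\bar u = \sum_i \omega^i$, then use that $\omega^1,\ldots,\omega^d$ form a coframe on $U$ to force $g^i = \bar g$. The only difference is that you write out the (ii)$\Rightarrow$(i) direction via $\phi^* d\bar u = g\, d\bar u$, which the paper leaves implicit in its closing ``(i) and (ii) coincide.''
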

\begin{proof}
    By \autoref{noninfinitesimalsymmchar}, there exist smooth functions $g^i : U \to \mathbb{R}_{++}$ such that, on $U$, $\phi^* \omega^i = g^i \omega^i$ and (\hyperlink{n1}{N.1}) for $\succsim^i$ coincide. Similarly, there exists $\bar{g}: U \to \mathbb{R}_{++}$ such that $\phi^* \, d\bar{u} = \bar{g} \, d\bar{u}$ is equivalent to (\hyperlink{n1}{N.1}) for the restriction of $\succsim^{d+1}$ to $U$. Thus, on $U$, (i) implies:
	\[
		\begin{aligned}
			\phi^*(d\bar{u}) & = \phi^* \bigg[ \sum_{i=1}^d \bar{u}_i \, du^i\bigg]\\
			& = \sum_{i=1}^d g^i \omega^i,
		\end{aligned}
	\]
	where, by definition, $d\bar{u} = \sum_i \omega^i$. But since we also have $\phi^* d\bar{u} = \bar{g} \, d\bar{u}$, and $\omega^1, \ldots, \omega^d$ form a coframe on $U$, we must have $g^i = \bar{g}$ for all $i=1 ,\ldots, d$. Hence, on $U$, (i) and (ii) coincide.
\end{proof}

\begin{lemma}\label{liederivformulation}
    Suppose (\hyperlink{cr1}{CR.1}), and let $(\mathcal{D}, \varphi)$ be a virtual commodity defined on some connected open set $U \subseteq \mathcal{X}$ such that the restriction of each indifference surface of $\succsim^1,\ldots, \succsim^{d+1}$ to $U$ are connected. Define $X = \frac{\partial \varphi}{\partial t}\big\vert_{t=0}$. Then $(\mathcal{D}, \varphi)$ satisfies (\hyperlink{n1}{N.1}) and (\hyperlink{n2}{N.2}) for every $\succsim^1,\ldots, \succsim^{d+1}$ if, and only if, for all $i = 1,\ldots, d$, (i) $\mathcal{L}_X \omega^i = \lambda \, \omega^i$, for some smooth real-valued $\lambda$, and (ii) $X^i > 0$.
\end{lemma}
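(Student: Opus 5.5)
The plan is to pass from the finite statement in \autoref{noninfinitesimalsymmchar2} to its infinitesimal version, with the Lie derivative as the bridge. Here $X^i$ denotes the $u^i$-component of $X$ in utility coordinates, i.e.\ $X^i = du^i(X)$. First reduce (N.1) for all $d+1$ preferences to statement (i) of \autoref{noninfinitesimalsymmchar2} applied to the maps $\phi = \varphi_t$. By \autoref{flowlemma}, each $\varphi_t$ is a diffeomorphism from $\mathcal{X}_t\cap U$ onto its image. So (N.1) holds if and only if, for every $t$ and every connected open piece of the domain on which indifference surfaces are connected, there is a positive $g_t$ with $\varphi_t^*\omega^i = g_t\,\omega^i$ for all $i$.

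For necessity, differentiate $\varphi_t^*\omega^i = g_t\omega^i$ in $t$ at $t=0$. Since $g_0\equiv 1$ and $g_t$ is smooth in $t$, we get $\mathcal{L}_X\omega^i = \lambda\,\omega^i$ with $\lambda = \partial_t g_t|_{t=0}$, a common function for all $i$. This gives (i). For (ii), note $\omega^i(X) = \bar u_i\, du^i(X) = \bar u_i X^i$. Monotonicity (N.2) for $\succsim^i$ gives $du^i(X)\ge 0$. Strict positivity follows by the argument in the footnote of the proof of \autoref{virtnumthm}: if $du^i(X)=0$ at a point, then invariance propagates this along the level set, so the trajectory would stay on an indifference surface, contradicting (N.2). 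Hence $X^i>0$.

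For sufficiency, assume $\mathcal{L}_X\omega^i=\lambda\omega^i$ and $X^i>0$. Set $\eta^i_t = \varphi_t^*\omega^i$. The standard identity $\frac{d}{dt}\varphi_t^*\omega = \varphi_t^*\mathcal{L}_X\omega$ gives $\frac{d}{dt}\eta^i_t = (\lambda\circ\varphi_t)\,\eta^i_t$. This linear ODE, pointwise in $x$, has the solution $\eta^i_t = g_t\,\omega^i$ with $g_t(x)=\exp\int_0^t\lambda(\varphi_s(x))\,ds>0$. The function $g_t$ is common to all $i$, so \autoref{noninfinitesimalsymmchar2} yields (N.1) for all $d+1$ preferences. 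Monotonicity then follows because, for each $i$, $\frac{d}{dt}u^i(\varphi_t(x)) = X^i(\varphi_t(x))>0$, so $u^i\circ\varphi_t$ is strictly increasing in $t$. For $\succsim^{d+1}$, $d\bar u(X)=\sum_i \bar u_i X^i>0$, since (CR.1) makes the $\bar u_i$ positive.

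The main obstacle is the domain bookkeeping. \autoref{noninfinitesimalsymmchar2} requires connected indifference-surface restrictions on the source set, but $\mathcal{X}_t\cap U$ need not inherit this, and (N.1) must hold for all $x,y$ with $t\in\mathcal{D}^{(x)}\cap\mathcal{D}^{(y)}$, not just locally. I would handle this with the semigroup property: write $\varphi_t$ as a composition of small-time flows, each defined on neighborhoods where the lemma applies. I would then argue that the preference comparison $x\succsim^i y$ is preserved along $s\mapsto(\varphi_s(x),\varphi_s(y))$ by continuity, because $u^i(\varphi_s(x))-u^i(\varphi_s(y))$ can never cross zero. The reason is that equality at some $s$ persists in both time directions by local invariance, so equality at one time means equality at all times. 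If this continuity argument proves delicate, I would fall back on the weaker statement where $U$ is shrunk, which suffices for the local uses in \autoref{pdelemma}.
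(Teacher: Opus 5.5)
Your necessity direction matches the paper's. You restrict to a small neighborhood where \autoref{noninfinitesimalsymmchar2} applies and differentiate $\varphi_t^*\omega^i = g_t\,\omega^i$ at $t=0$. Your footnote-style argument for strict positivity of $X^i$ is more careful than the paper's ``trivially.''

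The gap is in sufficiency, exactly where you flag it. Integrating $\frac{d}{dt}\varphi_t^*\omega^i = (\lambda\circ\varphi_t)\,\varphi_t^*\omega^i$ is fine. But it only tells you that $d(u^i\circ\varphi_t)$ is a positive multiple of $du^i$ pointwise, and \autoref{noninfinitesimalsymmchar2} turns this into invariance only on small connected pieces, i.e.\ it compares \emph{nearby} points. Your propagation claim is that $\varphi_s(x)\sim^i\varphi_s(y)$ forces $\varphi_{s+h}(x)\sim^i\varphi_{s+h}(y)$ for small $h$. This does not follow from ``local invariance'' when $\varphi_s(x)$ and $\varphi_s(y)$ lie far apart on a common indifference surface, because no single neighborhood contains both. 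Nothing in your sketch uses the hypothesis that indifference surfaces restricted to $U$ are connected, and that hypothesis is what makes the global statement true. Shrinking $U$ does not by itself remove the issue, since the maximal flow on the smaller set again has time-$t$ domains that need not inherit connectedness. It would also prove less than the lemma states.

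The missing step is to use connectedness to globalize \emph{before} integrating the flow. Since $\mathcal{L}_X\omega^i = \lambda\,\omega^i$, wedging with $\omega^i = \bar u_i\,du^i$ gives $0 = \bar u_i^2\, d(X\cdot\nabla u^i)\wedge du^i$. So $X\cdot\nabla u^i$ is locally a function of $u^i$. Because the level sets of $u^i$ in $U$ are connected, $X\cdot\nabla u^i = f^i\circ u^i$ on all of $U$, with $f^i>0$ by (ii).

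The paper then rescales: $v^i = F^i\circ u^i$ with $(F^i)' = 1/f^i$, so $X\cdot\nabla v^i \equiv 1$. Hence $v^i(\varphi_t(x)) = v^i(x)+t$ for every $(x,t)\in\mathcal{D}$, which gives (N.1) and (N.2) globally with no domain bookkeeping. The same argument applied to $\mathcal{L}_X d\bar u = d(X\cdot\nabla \bar u) = \lambda\,d\bar u$ handles $\succsim^{d+1}$.

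This fact would also rescue your continuity argument directly. With it, $q(s) = u^i(\varphi_s(z))$ solves the autonomous ODE $\dot q = f^i(q)$, so $u^i(\varphi_s(z))$ depends only on $u^i(z)$ and $s$. Alternatively, you could differentiate $u^i\circ\varphi_h$ along a path joining $\varphi_s(x)$ to $\varphi_s(y)$ inside their common level set in $U$, with a uniform time bound from compactness. Either way, some use of connectedness is needed.
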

\begin{proof}
    Suppose first $(\mathcal{D}, \varphi)$ is a local virtual numeraire defined on $U$ and let $x \in U$. As $\mathcal{D}^{(x)}$ is an interval, if $t' > t > 0$, then $\mathcal{D}_t \supseteq \mathcal{D}_{t'}$, with the opposite inclusion for negative $t,t'$. Now, by definition, $\mathcal{D}^{(x)}$ contains some non-degenerate closed interval $[-\varepsilon, \varepsilon]$. Thus $\bigcap_{t \in [-\varepsilon,\varepsilon]} \mathcal{D}_t = \mathcal{D}_{-\varepsilon} \cap \mathcal{D}_{\varepsilon}$ is open and contains $x$. By possibly restricting this set further, we may, without loss of generality, suppose $W =  \mathcal{D}_{-\varepsilon} \cap \mathcal{D}_{\varepsilon}$ is a connected neighborhood of $x$ such that the restrictions of each indifference surface of every preference $\succsim^1,\ldots, \succsim^{d+1}$ to $W$ are also connected.\medskip
    
    Trivially by (\hyperlink{n2}{N.2}) we must have $X^i > 0$. Now, by \autoref{noninfinitesimalsymmchar2}, for every $t \in [-\varepsilon ,\varepsilon]$, since (\hyperlink{n1}{N.1}) holds, there exist some smooth $g^t: W \to \mathbb{R}_{++}$ such that, on $W$, $\varphi_t^* \, \omega^i = g^t \, \omega^i$ for all $i = 1,\ldots, d$. Since $\varphi$ is smooth in $(t,x)$ so too is $\varphi^*_t\, \omega^i$ and hence so is $g^t \, \omega^i$. Thus, let:
    \[
        \lambda(x) = \frac{\partial g^t(x)}{\partial t}  \bigg\vert_{t=0}.
    \]
    By definition of the Lie derivative of a 1-form (\citealt{Lee2002}, Equation 12.8):
    \[
        \mathcal{L}_X(\omega^i) = \frac{\partial}{\partial t} \varphi^*_t \, \omega^i = \frac{\partial}{\partial t} g^t \omega^i = \lambda \, \omega^i
    \]
    for all $i = 1,\ldots, d$. Since (i) and (ii) are local properties and $x \in U$ was arbitrary, we have shown (\hyperlink{n1}{N.1}) and (\hyperlink{n2}{N.2}) imply (i) and (ii).\medskip

    Conversely, suppose $X$ satisfies (i) and (ii) on $U$. Then writing $a = \bar{u}_i \vert_U$, we have:
\[
    \begin{aligned}
        0 & = \mathcal{L}_X (a \, du^i) \wedge (a \, du^i) \\
        & = a X \cdot \nabla a \, du^i \wedge du^i + a \, \mathcal{L}_X du^i \wedge (a \, du^i)\\
        & = (0) + a^2 \, d (X \cdot \nabla u^i) \wedge du^i.
    \end{aligned}
\]
Now, since $a > 0$ by (\hyperlink{cr1}{CR.1}), dividing off yields:
\[
    0 = d (X \cdot \nabla u^i) \wedge du^i.
\]
Thus $d (X \cdot \nabla u^i)$ is collinear with $du^i$. Since it is also exact, and the level sets of $u^i$ connected, by an analogous argument to the proof of \autoref{noninfinitesimalsymmchar}, there exists some smooth, real-valued function $f^i: u^i(U) \!\to \mathbb{R}$ such that $X \cdot \nabla u^i = f^i \circ u^i$.  By hypothesis, $X \cdot \nabla u^i = du^i(X) = X^i > 0$, hence $f^i > 0$. Thus the function $F^i: u^i(U) \to \mathbb{R}$ given by:
\[
    F^i(q) = \int_{\bar{q}}^q \frac{1}{f^i(\xi)} \, d\xi,
\]
for some $\bar{q} \in u^i(U)$, is well-defined and strictly increasing.  Then:
\[
    \begin{aligned}
        X \cdot \nabla F^i(u^i) & = \sum_j X^j {F^i}'(u^i) \frac{\partial}{\partial x^j} u^i\\
        & = \frac{1}{f^i(u^i)} \sum_j X^j  \frac{\partial}{\partial x^j} u^i\\
        & = \frac{1}{f^i(u^i)} X \cdot \nabla u^i\\
        & = 1.
    \end{aligned}
\]
Thus $F^i \circ u^i$ is both a strictly increasing transformation of $u^i$ (and hence represents $\succsim^i$ on $U$), and has constant $X$-directional derivative equal to unity, as desired. Let $(\mathcal{D},\varphi)$ then denote the maximal flow on $U$ associated with $X$; it follows that $F^i \circ u^i \vert_U$ has $(\mathcal{D},\varphi)$ as a utility unit, and hence $(\mathcal{D},\varphi)$ is a virtual numeraire for $\succsim^i$, for all $i= 1,\ldots, d$. Finally, since:
\[
    \mathcal{L}_X(d\bar{u}) = \sum_i \lambda\,  \omega^i = \lambda \, d\bar{u},
\]
an analogous argument shows $(\mathcal{D},\varphi)$ is also a virtual numeraire for $\succsim^{d+1}$, completing the proof. 
\end{proof}

Thus locally, the satisfaction of (\hyperlink{n1}{N.1}) is equivalent to a diffeomorphism acting on the $d$ 1-forms $\omega^1, \ldots, \omega^d$ by a common, multiplicative scaling. We now expand our space to incorporate this added degree of freedom explicitly. Given $x_0 \in \mathcal{X}$ and $U,V \subseteq \mathcal{X}$ connected open sets with $x_0 \in U$, we define the {\bf lifted} 1-forms $\theta^i$ on $\mathbb{R}_{++} \times U$ and $\mathbb{R}_{++} \times V$ via:
\[
    \theta^i \vert_{(h,u)} = h \, \omega^i \vert_u.
\]
The following lemma shows that a diffeomorphism $\phi$ fixes every $\omega^i$, modulo a common scaling, if and only if there exists a `lifted' diffeomorphism $\Phi$ which fixes each $\theta^i$.

\begin{lemma}\label{lifteddiff}
     Let $U,V \subseteq \mathcal{X}$ be connected open sets. Then there exists a diffeomorphism $\phi: U \to V$ such that, for all $i=1,\ldots, d$, $\phi^* \omega^i = g\, \omega^i$ for some smooth $g: U \to \mathbb{R}_{++}$  if, and only if, there exists a diffeomorphism $\Phi:\mathbb{R}_{++} \times U \to \mathbb{R}_{++} \times V$ such that, for all $i=1,\ldots, d$, we have $\Phi^* \theta^i = \theta^i$.
\end{lemma}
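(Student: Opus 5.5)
The plan is to build each direction explicitly: lift $\phi$ by a fibre rescaling, and conversely project a $\theta$-preserving $\Phi$ down to $U$. For the forward direction, given $\phi$ with $\phi^*\omega^i = g\,\omega^i$ for all $i$, define
\[
    \Phi(h,x) = \Big(\frac{h}{g(x)},\, \phi(x)\Big).
\]
This is smooth, and it is a diffeomorphism $\mathbb{R}_{++}\times U \to \mathbb{R}_{++}\times V$ with inverse $(k,y) \mapsto \big(k\, g(\phi^{-1}(y)),\, \phi^{-1}(y)\big)$, because $g>0$. To check $\Phi^*\theta^i = \theta^i$, write $\theta^i = h\,\pi^*\omega^i$, where $\pi$ is the projection onto the $U$ (resp.\ $V$) factor. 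Then
\[
    \Phi^*\theta^i = \frac{h}{g}\,\Phi^*\pi^*\omega^i = \frac{h}{g}\,\pi^*\phi^*\omega^i = \frac{h}{g}\,\pi^*(g\,\omega^i) = h\,\pi^*\omega^i = \theta^i,
\]
using $\pi\circ\Phi = \phi\circ\pi$.

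For the converse, suppose $\Phi = (\Phi^0, \Phi^U)$ is a diffeomorphism with $\Phi^*\theta^i = \theta^i$ for all $i$. First I show that $\Phi^U$ does not depend on $h$. Fix $i$ and compute
\[
    \Phi^*\theta^i = \Phi^0 \,(\Phi^U)^*\omega^i .
\]
By (\hyperlink{cr1}{CR.1}), the $\omega^j$ form a coframe on $U$ and on $V$. Expanding $(\Phi^U)^*\omega^i$ in the coframe $\{dh, \pi^*\omega^1, \ldots, \pi^*\omega^d\}$ of the product, the identity $\Phi^0\,(\Phi^U)^*\omega^i = h\,\pi^*\omega^i$ forces the $dh$-coefficient to vanish, since $\Phi^0>0$. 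Hence every $\omega^i$ annihilates $d\Phi^U(\partial_h)$. Because the $\omega^i$ form a coframe on $V$, this gives $d\Phi^U(\partial_h) = 0$, i.e.\ $\partial_h\Phi^U = 0$.

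Since $\mathbb{R}_{++}$ is connected, $\Phi^U(h,x) = \phi(x)$ for a smooth map $\phi: U \to V$. The equality above then reads $\phi^*\omega^i = (h/\Phi^0)\,\omega^i$ for every $i$. The left-hand side is independent of $h$ and $\omega^i \neq 0$, so $g := h/\Phi^0(h,x)$ is a smooth positive function of $x$ alone.

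The remaining point is that $\phi$ is a diffeomorphism. The pullback relation shows $d\phi$ is invertible everywhere, since it pulls a coframe back to a multiple of a coframe, so $\phi$ is a local diffeomorphism. Injectivity and surjectivity come from $\Phi$ being a bijection. If $\phi(x)=\phi(x')$, then $\Phi(g(x),x) = (1,\phi(x)) = \Phi(g(x'),x')$, so $x = x'$. Surjectivity follows from that of $\Phi$. Therefore $\phi: U\to V$ is a bijective local diffeomorphism, hence a diffeomorphism, with $\phi^*\omega^i = g\,\omega^i$.

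The main obstacle is this converse step, namely establishing that $\Phi$ is fibre-preserving. It relies on the $\omega^i$ forming a coframe on both $U$ and $V$. I would make explicit that the lemma is read under (\hyperlink{cr1}{CR.1}) on $U$ and $V$, as in \autoref{noninfinitesimalsymmchar2}.
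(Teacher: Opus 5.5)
Your proposal is correct and follows the paper's proof. One direction uses the same fibre-rescaling lift $\Phi(h,x) = (h/g(x),\phi(x))$. For the converse, you argue as the paper does that the $dh$-component of $\Phi^*\theta^i$ must vanish, so the base map is independent of $h$ and $h/\Phi^0$ depends on $x$ alone; you also verify explicitly that $\Phi$ and the projected $\phi$ are diffeomorphisms, which the paper only asserts.
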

\begin{proof}
Suppose first there exists a diffeomorphism $\Phi : \mathbb{R}_{++} \times U \to  \mathbb{R}_{++} \times V$  satisfying $\Phi^* \theta^i = \theta^i$ for  $i = 1, \ldots, d$. Write $\Phi  = \big(\tilde{h}(h,u), \tilde{\phi}(h, u) \big)$. By direct computation:
\[
	\begin{aligned}
		\Phi^* \theta^i & = \tilde{h}(h,u) \tilde{\phi}^* \omega^i,
	\end{aligned}
\]
hence:
\begin{equation}\label{liftedpullbackderiv}
	\tilde{\phi}^* \omega^i =  \frac{h}{\tilde{h} }  \, \omega^i.
\end{equation}
Now on the left-hand side of \eqref{liftedpullbackderiv}, 
\[
	\tilde{\phi}^* \omega^i =  \bar{u}_i(\tilde{\phi}(h,u)) \bigg[\sum_{j=1}^d \frac{\partial \tilde{\phi}^i}{\partial u^j} \, du^j  + \frac{\partial \tilde{\phi}^{i}}{\partial h} \, dh \bigg].
\]
But on the right-hand side of \eqref{liftedpullbackderiv}, since each $\omega^i$ has a 0 coefficient on $dh$, the $h$-partial derivative of $\tilde{\phi}$ must uniformly vanish:
\[
	\frac{\partial \tilde{\phi}}{\partial h} = 0.
\]
Since $U$ is connected, this implies $\tilde{\phi}$ is a function only of the utility variables $u = (u^1,\ldots, u^d)$.  Thus the entire left-hand side of \eqref{liftedpullbackderiv} is a function solely of the $u$ variables, and thus so too is the right.  This in turn implies $h/\tilde{h}(h,u)$ is a function $g(u)$, i.e.\ that $h / g(u) = \tilde{h}(h,u)$, and hence the map $\Phi(h,u) = (h/g(u), \tilde{\phi}(u)\big)$. In particular, since $h$ and $\tilde{h}(h,u)$ are both strictly positive, so too is $g$, and $\tilde{\phi}^* \omega^i = g \,\omega^i$. Finally, in light of the above, it follows $\tilde{\phi}$ must map $U \to V$ diffeomorphically, as $\Phi$ is a diffeomorphism itself.\medskip

Conversely, suppose $\phi: U \to V$ is a diffeomorphism such that $\phi^* \omega^i = g \, \omega^i$ for all $i = 1, \ldots d$.  Define:
\[
	\Phi(h,u) = \bigg(\frac{h}{g(u)}, \phi(u)\bigg).
\]
Then:
\[
	\begin{aligned}
		\Phi^* \theta^i & = \frac{h}{g(u)} \phi^\ast \omega^i \\
		& = h \frac{g}{g} \omega^i = h \omega^i = \theta^i
	\end{aligned}
\]
as desired.
\end{proof}

\begin{lemma}\label{invariantcoframelemma}
    Suppose (\hyperlink{cr1}{CR.1}) and (\hyperlink{cr2}{CR.2}) hold, and $d \ge 3$. Then for any connected, open $U,V \subseteq \mathcal{X}$, and any $h_0 \in \mathbb{H}$ such that $h_0 \vert_U > 0$, a diffeomorphism $\phi: U \to V$ satisfies $ \phi^* \omega^i = g \, \omega^i$ for all $i =1,\ldots, d$ and some smooth $g: U \to \mathbb{R}_{++}$ if, and only if, $\phi^* h_0 \, \omega^i = h_0 \, \omega^i$ for all $i = 1,\ldots, d$.
\end{lemma}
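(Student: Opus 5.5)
The plan is to show that $h_0$ transforms as a \emph{relative invariant} of weight $-1$ under the pseudogroup of diffeomorphisms that rescale the coframe $\{\omega^i\}$ by a common factor. Precisely, we will prove that if $\phi^*\omega^i = g\,\omega^i$ for all $i$, then $\phi^* h_0 = h_0 \circ \phi = h_0/g$. Both directions then follow quickly. For ($\Rightarrow$) we get $\phi^*(h_0\omega^i) = (h_0\circ\phi)\,\phi^*\omega^i = (h_0/g)\,g\,\omega^i = h_0\omega^i$. For ($\Leftarrow$), if $\phi^*(h_0\omega^i) = h_0\omega^i$, then $\phi^*\omega^i = \big(h_0/(h_0\circ\phi)\big)\,\omega^i$. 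Setting $g = h_0/(h_0\circ\phi)$, this is smooth and strictly positive because $h_0 > 0$ on $U$; we also need $h_0 > 0$ on $V$, which we will either assume or get from (CR.2) by restricting to the connected set where the sign is fixed. So ($\Leftarrow$) is immediate, and the substance lies in the weight computation.

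To compute the weight, we express $h_0 = \frac{\bar u_{ij}}{\bar u_i\bar u_j} - \frac{\bar u_{ik}}{\bar u_i \bar u_k}$ intrinsically through the structure equations of the coframe. Write $\omega^i = \bar u_i\,du^i$. Then
\[
d\omega^i = d\bar u_i\wedge du^i = \sum_{j\neq i}\bar u_{ij}\,du^j\wedge du^i = \sum_{j\neq i}\frac{\bar u_{ij}}{\bar u_i\bar u_j}\,\omega^j\wedge\omega^i .
\]
Hence the structure functions $T^i_{j} := \bar u_{ij}/(\bar u_i\bar u_j)$, for $j \neq i$, are determined invariantly by the coframe via $d\omega^i = \sum_{j\ne i} T^i_j\,\omega^j\wedge\omega^i$, and $h_0 = T^i_j - T^i_k$.

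Now suppose $\phi^*\omega^i = g\,\omega^i$. Applying $\phi^*$ to the structure equation, the left side becomes
\[
d(g\omega^i) = dg\wedge\omega^i + g\sum_{j\neq i} T^i_j\,\omega^j\wedge\omega^i ,
\]
while the right side becomes
\[
\sum_{j\neq i}(T^i_j\circ\phi)\,g^2\,\omega^j\wedge\omega^i .
\]
Write $dg = \sum_j g_{(j)}\,\omega^j$ in the coframe, and compare coefficients of $\omega^j\wedge\omega^i$ for $j\neq i$. This gives
\[
g_{(j)} + g\,T^i_j = g^2\,(T^i_j\circ\phi).
\]
The term $g_{(j)}$ does not depend on $i$. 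Taking the difference for $j$ and $k$ (both $\neq i$) therefore leaves the $g_{(j)}-g_{(k)}$ contribution, so we instead subtract two equations that share the same $j$ but have different first index: for distinct $i,i',j$, the $g_{(j)}$ term cancels and yields $g\,(T^i_j - T^{i'}_j) = g^2\,(T^i_j - T^{i'}_j)\circ\phi$. Since $T$ is symmetric in its indices (as $\bar u_{ij} = \bar u_{ji}$), we have $T^i_j - T^{i'}_j = T^j_i - T^j_{i'}$, which is exactly a member of $\mathbb{H}$. Thus $h\circ\phi = h/g$ for every $h\in\mathbb{H}$, and in particular for $h_0$.

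The main obstacle is this bookkeeping step: making sure the inhomogeneous $dg$ terms cancel, which requires $d\ge 3$ so that three distinct indices are available. That is exactly where the hypothesis $d \ge 3$ enters. The remaining points are routine: positivity of $g$, and that a pullback of a product is the product of pullbacks.
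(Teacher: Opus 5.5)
Your proof is correct and is essentially the paper's argument. Both pull back the structure equations $d\omega^i=\sum_{j\neq i}\frac{\bar u_{ij}}{\bar u_i\bar u_j}\,\omega^j\wedge\omega^i$ and cancel the inhomogeneous $dg$ contribution by subtracting two coefficient equations with a common $j$, which yields $(h_0\circ\phi)\,g=h_0$. The paper packages this through the lifted coframe $\theta^i=h\,\omega^i$ and the lifted map $\Phi$, whereas you expand $dg$ in the coframe directly; your version is shorter and shows $h\circ\phi=h/g$ for every element of $\mathbb{H}$.

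Your remark that the converse also needs $h_0>0$ on $V$ is a genuine point the paper passes over as ``trivial.'' Without it, $g=h_0/(h_0\circ\phi)$ could be negative. This is harmless for the paper, since only the forward direction is used later.
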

\begin{proof}
We begin by considering the lifted coframe elements $\theta^i = h \, \omega^i$. By direct computation:
\begin{equation}\label{se}
	\begin{aligned}
		d\theta^i & = d h \omega^i\\
		& = dh \wedge \omega^i + h d\omega^i\\
		& = \frac{dh}{h} \wedge \theta^i + h \sum_{j = 1}^d\frac{ \bar{u}_{ij} }{\bar{u}_i \bar{u}_j} \omega^j \wedge \omega^i\\
		& = \frac{dh}{h} \wedge \theta^i + \frac{1}{h} \sum_{j = 1}^d\frac{ \bar{u}_{ij} }{\bar{u}_i \bar{u}_j} \theta^j \wedge \theta^i.
	\end{aligned}
\end{equation}
Consider the restriction of these forms to the graph of some function $\hat{h}: U \to \mathbb{R}_{++}$. Once $\hat{h}$ is specified, since $\theta^1 ,\ldots, \theta^d$ form a coframing of $U$, we have:
\[
	\frac{d\hat{h}}{\hat{h}} = z_1 \theta^1 + \cdots + z_d \theta^d,
\]
for some unknown smooth, real-valued functions $z_1, \ldots, z_d$ defined on $U$, thus \eqref{se} becomes:
\[
	d\theta^i = \sum_{j\neq i} \bigg[ z_j + \frac{1}{\hat{h}} \frac{ \bar{u}_{ij} }{\bar{u}_i \bar{u}_j}\bigg] \theta^j \wedge \theta^i.
\]
Now, suppose $\phi: U \to V$ is a diffeomorphism satisfying $\phi^* \omega^i = g \, \omega^i$ for all $i =1,\ldots, d$, for some positive-valued smooth function $g$. By \autoref{lifteddiff}, the lifted diffeomorphism $\Phi: \mathbb{R}_{++} \times U \to \mathbb{R}_{++} \times V$, defined by $\Phi(h,u) = \big(h/g(u), \phi(u)\big)$, satisfies $\Phi^* \theta^i = \theta^i$ for all $i=1,\ldots, d$.  Thus on the graph of $\hat{h}$:
\[
	\begin{aligned}
		\Phi^*\big(d\theta^i\big) & = d\, \Phi^* \theta^i \\
		& = d \theta^i\\
		& =  \sum_{j\neq i} \bigg[ z_j + \frac{1}{\hat{h}} \frac{ \bar{u}_{ij} }{\bar{u}_i \bar{u}_j}\bigg] \theta^j \wedge \theta^i.
	\end{aligned}
\]
But once again, since (i) $\Phi^* \theta^i = \theta^i$ for all $i$, and (ii) $\Phi(\hat{h}(u),u) = \big(\hat{h}(u)/g(u), \phi(u)\big)$, we obtain:
\begin{equation}\label{esstor}
	\tilde{z}_j\big \vert_{\phi(u)} + \frac{g(u)}{\hat{h}(u)} \cdot \frac{\bar{u}_{ij}}{\bar{u}_i\bar{u}_j} \bigg\vert_{\phi(u)} = z_j\big \vert_{u} + \frac{1}{\hat{h}(u)} \frac{\bar{u}_{ij}}{\bar{u}_i\bar{u}_j} \bigg\vert_{u}
\end{equation}
for all $i \neq j$, where $\tilde{z}$ denotes the analogous $z$ functions obtained for the lifted forms on $\mathbb{R}_{++} \times V$. As such, let $i,j,k$ be distinct. Then for every symmetry $\phi$, we obtain:
\begin{equation}\label{esstor2}
	\bigg[ \frac{1}{h} \frac{\bar{u}_{ij}}{\bar{u}_i\bar{u}_j}  - \frac{1}{h} \frac{\bar{u}_{kj}}{\bar{u}_k\bar{u}_j}\bigg]_{\Phi(\hat{h}(u),u)}   = \bigg[ \frac{1}{h} \frac{\bar{u}_{ij}}{\bar{u}_i\bar{u}_j}  - \frac{1}{h} \frac{\bar{u}_{kj}}{\bar{u}_k\bar{u}_j}\bigg]_{(\hat{h}(u),u)}
\end{equation}
by subtracting \eqref{esstor} for indices $i$ and $j$, from the analogous equation for indices $k$ and $j$. By hypothesis, for some choice of distinct $i,j,k$, both sides of \eqref{esstor2} are strictly positive for all $u \in U$.\footnote{This is possible as $d \ge 3$.} Define, now, for these indices, the functions $T: \mathbb{R}_{++} \times U \to \mathbb{R}_{++}$ and $h_0 : U \to \mathbb{R}_{++}$ via:
\[
    T(h, u) = \frac{1}{h} \bigg[ \frac{\bar{u}_{ij}}{\bar{u}_i\bar{u}_j} -  \frac{\bar{u}_{kj}}{\bar{u}_k\bar{u}_j}\bigg],
\]
and:
\[
    h_0(u) = \bigg[\frac{\bar{u}_{ij}}{\bar{u}_i\bar{u}_j}  -  \frac{\bar{u}_{kj}}{\bar{u}_k\bar{u}_j}\bigg].
\]
Clearly $h_0 \in \mathbb{H}$ and, by construction:
\begin{equation}\label{hnaughtident}
    1 = T\big(h_0(u), u\big).
\end{equation}
This implies that, for any $\phi$ and smooth, positive-valued function $g$ such that $\phi^* \omega^i = g\, \omega^i$ for all $i=1,\ldots, d$, we have:
\begin{equation}\label{esstorinv}
    \begin{aligned}
        1 & = \frac{(T \circ \Phi)(h_0(u),u)}{T(h_0(u), u)} \\
        & = \frac{(T \circ \Phi)(h_0(u),u)}{1} \\
        & =  \frac{(h_0 \circ \phi)(u)g(u)}{h_0(u)},
    \end{aligned}
\end{equation}
where the first equality follows from dividing the left by the right-hand side of \eqref{esstor2}, the second from \eqref{hnaughtident}, and the final by definition of $\Phi$. Consider now the modified coframe $h_0 \, \omega^i$, for $i = 1,\ldots, d$. Under $\phi$ we have:
\[
    \phi^* h_0 \,\omega^i = (h_0\circ \phi) \, g\,  \omega^i = \bigg[\frac{(h_0\circ \phi) \, g}{h_0}\bigg] \, h_0 \, \omega^i.
\]
By \eqref{esstorinv}, the square bracketed term is the constant function $1$, hence $\phi^* h_0 \, \omega^i = h_0 \, \omega^i$ for all $i = 1, \ldots, d$. Since the converse is trivial, the claim follows.
\end{proof}

\noindent When $d=2$, we define $\mathbb{H}$ to be the singleton set consisting of the function:
\[
        h_0(u) = \sqrt{\bigg\vert\frac{1}{\bar{u}_1 \bar{u}_2} \partial_{12} \ln{\bigg( \frac{\bar{u}_1}{\bar{u}_2}\bigg)} \bigg\vert}.
\]
In such cases, (\hyperlink{cr2}{CR.2}) requires the sign of this function to be locally constant.

\begin{lemma}\label{invariantcoframelemmad=2}
    Suppose (\hyperlink{cr1}{CR.1}) and let $d=2$. Then for any connected, open $U,V \subseteq \mathcal{X}$ such that for the unique $h_0 \in \mathbb{H}$, we have $h_0 \vert_U > 0$, a diffeomorphism $\phi: U \to V$ satisfies $\phi^* \omega^i = g \, \omega^i$ for all $i = 1,2$ and some smooth $g: U \to \mathbb{R}_{++}$ if, and only if, $\phi^* h_0 \,\omega^i = h_0 \,\omega^i$ for all $i= 1,2$.
\end{lemma}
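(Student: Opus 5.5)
The plan is to mimic the proof of \autoref{invariantcoframelemma}. In dimension two, however, the first-order torsion of the lifted coframe $\theta^i = h\,\omega^i$ can be entirely absorbed, so there is no first-order invariant. I would therefore go one order higher and build an invariant 2-form from a canonically normalized connection-type 1-form. The converse direction is trivial: if $\phi^*(h_0\omega^i) = h_0\omega^i$, then $\phi^*\omega^i = g\,\omega^i$ with $g = h_0/(h_0\circ\phi) > 0$. So all the work is in the forward direction.

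\textbf{Step 1: a preferred 1-form.} Write $\beta_i = d\ln \bar u_i$. Since $\omega^i = \bar u_i\,du^i$, we have $d\omega^i = \beta_i \wedge \omega^i$. Now suppose $\phi^*\omega^i = g\,\omega^i$. Applying $d$ to both sides and comparing gives
\[
\phi^*\beta_i \equiv \beta_i - d\ln g \pmod{\omega^i}, \qquad i=1,2.
\]
Next, expand the exact form $\gamma = \beta_1 - \beta_2 = d\ln(\bar u_1/\bar u_2)$ in the coframe as $\gamma = p\,\omega^1 + q\,\omega^2$. Set
\[
\mu = \beta_1 - p\,\omega^1 = \beta_2 + q\,\omega^2 .
\]
From the first expression, $\phi^*\mu - \mu + d\ln g$ lies in $\mathrm{span}\{\omega^1\}$. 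From the second, it lies in $\mathrm{span}\{\omega^2\}$. Because $\omega^1,\omega^2$ form a coframe by (\hyperlink{cr1}{CR.1}), this difference must vanish, so $\phi^*\mu = \mu - d\ln g$ exactly. This elimination of the absorption ambiguity is the $d=2$ substitute for the torsion-difference trick \eqref{esstor2}, which needed three distinct indices.

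\textbf{Step 2: the curvature invariant.} Taking $d$ of $\phi^*\mu = \mu - d\ln g$ gives $\phi^*(d\mu) = d\mu$. Write $d\mu = -d(p\,\omega^1) = \kappa\,\omega^1\wedge\omega^2$. Then $\phi^*(\omega^1\wedge\omega^2) = g^2\,\omega^1\wedge\omega^2$ yields $(\kappa\circ\phi)\,g^2 = \kappa$.

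The main computational obstacle is to verify that $\kappa = \pm \frac{1}{\bar u_1\bar u_2}\,\partial_{12}\ln(\bar u_1/\bar u_2)$. Since $p = \partial_1\ln(\bar u_1/\bar u_2)/\bar u_1$ and $\omega^1 = \bar u_1\,du^1$, one gets $p\,\omega^1 = \partial_1\ln(\bar u_1/\bar u_2)\,du^1$. Hence $d(p\,\omega^1) = \partial_{12}\ln(\bar u_1/\bar u_2)\,du^2\wedge du^1$. Dividing by $\bar u_1\bar u_2$ to pass to $\omega^1\wedge\omega^2$ gives the claim. In particular $|\kappa| = h_0^2$.

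\textbf{Step 3: conclude.} Since $h_0|_U > 0$, and since $(\kappa\circ\phi)g^2 = \kappa$ also forces $h_0>0$ on $\phi(U) = V$, I can take positive square roots to get $(h_0\circ\phi)\,g = h_0$ on $U$. Therefore
\[
\phi^*(h_0\,\omega^i) = (h_0\circ\phi)\,g\,\omega^i = h_0\,\omega^i, \qquad i=1,2,
\]
exactly as in \eqref{esstorinv}. This proves the forward direction.
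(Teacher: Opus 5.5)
Your proof is correct and is essentially the paper's argument, carried out directly on $U$ rather than on the lifted space $\mathbb{R}_{++}\times U$ with $\theta^i=h\,\omega^i$. Your $\mu$ equals the paper's $T(\omega^1+\omega^2)$, where $T=\bar u_{12}/(\bar u_1\bar u_2)$; this is its connection form minus $dh/h$. Likewise, your relation $(\kappa\circ\phi)\,g^2=\kappa$ is the paper's invariance $c\circ\Phi=c$, with $c(h,u)=\kappa(u)/h^2$.

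There is one harmless sign slip. Differentiating $\phi^*\omega^i=g\,\omega^i$ gives $\phi^*\beta_i\equiv\beta_i+d\ln g \pmod{\omega^i}$, and hence $\phi^*\mu=\mu+d\ln g$. Since you only use $d$ of this identity, nothing downstream changes.
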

\begin{proof}
    We once again begin by considering the lifted 1-forms $\theta^1$ and $\theta^2$. By direct computation:
    \[
        \begin{aligned}
            d \begin{bmatrix}
                \theta^1 \\ \theta^2
            \end{bmatrix} & = d \begin{bmatrix}
                h \, \omega^1 \\ h \, \omega^2
            \end{bmatrix}\\
                & = \begin{bmatrix}
                    \frac{dh}{h} & 0 \\ 0 & \frac{dh}{h}
                \end{bmatrix} \wedge \begin{bmatrix}
                    \theta^1 \\ \theta^2 
                \end{bmatrix} + \begin{bmatrix}
                   - \frac{1}{h} \frac{\bar{u}_{12}}{\bar{u}_1 \bar{u}_2}\\
                   \frac{1}{h} \frac{\bar{u}_{12}}{\bar{u}_1 \bar{u}_2}
                \end{bmatrix}  \theta^{1} \wedge \theta^{2}.
        \end{aligned}
    \]
    Now, let $\kappa = \frac{dh}{h} + \frac{1}{h} \frac{\bar{u}_{12}}{\bar{u}_1 \bar{u}_2} \theta^1  + \frac{1}{h} \frac{\bar{u}_{12}}{\bar{u}_1 \bar{u}_2} \theta^2  $. Then $d \theta^i = \kappa  \wedge \theta^i$. Differentiating this, we obtain that for $i= 1,2$:
    \[
        \begin{aligned}
            0 & = d \kappa  \wedge \theta^i - \kappa  \wedge d\theta^i\\
            & = d \kappa  \wedge \theta^i - \kappa  \wedge \kappa  \wedge \theta^i\\
            & = d\kappa  \wedge \theta^i.
        \end{aligned}
    \]
    As $U \subseteq \mathcal{X}$ which is two-dimensional, this implies $d\kappa  = c \, \theta^1 \wedge \theta^2$ for some smooth function $c$.\medskip
    
    Now, suppose $\phi$ satisfies $\phi^* \omega^i = g \, \omega^i$. By \autoref{lifteddiff}, the map $\Phi(h,u) = \big(h/g(u), \phi(u)\big)$ satisfies $\Phi^* \theta^i = \theta^i$. Then:
    \[
        \begin{aligned}
            \Phi^* d\, \theta^i & = d \Phi^* \theta^i \\
            & = d \theta^i \\
            & = \kappa \wedge \theta^i.
        \end{aligned}
    \]
    As pullbacks commute with wedge products, we obtain that in addition, $\Phi^* \kappa = \kappa$. From this fact, and our observation that $d \kappa = c \, \theta^1 \wedge \theta^2$, we conclude that  $c \circ \Phi(h,u) = c(h,u)$.\medskip
    
    Now, by direct computation, writing $T$ for $\frac{\bar{u}_{12}}{\bar{u}_1 \bar{u}_2}$, we have:
    \begin{equation}\label{2dcurvature}
        \begin{aligned}
            d \kappa & = d\bigg[ \frac{dh}{h} + T \omega^1 + T  \omega^2 \bigg] \\
            & = d T \wedge \omega^1 + T \,d\omega^1 + dT \wedge \omega^2 + T\, d\omega^2\\
            & = \frac{T_2}{\bar{u}_2} \omega^2 \wedge \omega^1 + \frac{T_1}{\bar{u}_1} \omega^1 \wedge \omega^2 \\
            & = \bigg(\frac{T_1}{\bar{u}_1}  - \frac{T_2}{\bar{u}_2}\bigg) \, \omega^1 \wedge \omega^2.
        \end{aligned}
    \end{equation}
   Now, note that $T = \frac{1}{\bar{u}_1} \partial_1 \ln{\bar{u}_2} = \frac{1}{\bar{u}_2} \partial_2 \ln{\bar{u}_1}$, hence:
   \begin{equation} \label{2dcurv1}
        \frac{T_1}{\bar{u}_1} = -T^2 + \frac{1}{\bar{u}_1 \bar{u}_2} \partial_{12} \ln{\bar{u}_1}
   \end{equation}
   and
   \begin{equation} \label{2dcurve2}
        \frac{T_2}{\bar{u}_2} = -T^2 + \frac{1}{\bar{u}_1\bar{u}_2} \partial_{12} \ln{\bar{u}_2}.
   \end{equation}
   In particular, from \eqref{2dcurvature}, \eqref{2dcurv1}, and \eqref{2dcurve2} we have:
    \[
        \begin{aligned}
            c(h,u) & = \frac{1}{h^2}\bigg[\frac{T_1}{\bar{u}_1} - \frac{T_2}{\bar{u}_2}\bigg]\\
            & = \frac{1}{h^2} \frac{1}{\bar{u}_1 \bar{u}_2} \partial_{12} \ln{\bigg( \frac{\bar{u}_1}{\bar{u}_2}\bigg)}.
        \end{aligned}
    \]
    Since $h_0 \vert_U > 0$, we may suppose $c > 0$ for all $h$ and all $u \in U$ and we obtain $c(h_0(u), u) = 1$ for all $u\in U$.\footnote{That is, we may consider both the apparent positive and negative branches together, by simply interchanging $\theta^1$ and $\theta^2$, i.e. $d\kappa = -c \theta^2 \wedge \theta^1$. See \cite{gardner1989method} p.\ 65 for more discussion.}  Now, as $c \circ \Phi(h,u) = c(h,u)$:
    \[
        \begin{aligned}
            1 & = c \circ \Phi\big(h_0(u), u\big)\\
            & = \bigg[ \frac{(h_0 \circ \phi)(u) g(u)}{h_0(u)}\bigg]^2
        \end{aligned}  
    \]
    and hence since $h_0$ and $g$ are positive,
    \[
        \frac{(h_0 \circ \phi)(u) g(u)}{h_0(u)} = 1.
    \]
    But then:
    \[
        \phi^* h_0 \, \omega^i = (h_0\circ \phi)\, g \, \omega^i = \frac{(h_0 \circ \phi) \, g}{h_0} \, h_0 \, \omega^i = h_0 \, \omega^i,
    \]
    as desired.
\end{proof}

\subsection{Proof of \autoref{pdelemma}}

\begin{proof}
    Suppose first $(\mathcal{D}, \varphi)$ is a local virtual numeraire defined on some convex open neighborhood $U$ of $x_0$. As every $\mathcal{D}^{(x)}$ is an interval, if $t' > t > 0$, then $\mathcal{D}_t \supseteq \mathcal{D}_{t'}$, with the opposite inclusion for negative $t,t'$. Now, by definition, $\mathcal{D}^{(x_0)}$ contains some non-degenerate closed interval $[-\varepsilon, \varepsilon]$. Thus $\bigcap_{t \in [-\varepsilon,\varepsilon]} \mathcal{D}_t = \mathcal{D}_{-\varepsilon} \cap \mathcal{D}_{\varepsilon}$ is open and contains $x_0$. By possibly restricting this set further, we may, without loss of generality, suppose $W =  \mathcal{D}_{-\varepsilon} \cap \mathcal{D}_{\varepsilon}$ is connected, and that the restriction of each indifference surface of every preference $\succsim^1,\ldots, \succsim^{d+1}$ to it are connected. Then by \autoref{noninfinitesimalsymmchar2}, for every $t \in [-\varepsilon ,\varepsilon]$, since (\hyperlink{n1}{N.1}) holds, there exist some smooth $g^t: W \to \mathbb{R}_{++}$ such that, on $W$, $\varphi_t^* \omega^i = g^t \omega^i$ for all $i = 1,\ldots, d$, where $\varphi_t^* \omega^i$ denotes the pull-back of the 1-form $\omega^i$ under $\varphi_t$.\medskip

    Now, by hypothesis, $\bar{h} \in \mathbb{H}$ satisfies $\bar{h} \vert_W > 0$. Thus by \autoref{invariantcoframelemma} and \autoref{invariantcoframelemmad=2}, regardless of whether $d>2$ or $d=2$, we have $\varphi_t^*(\bar{h}\, \omega^i) = \bar{h} \, \omega^i$ for every $t \in [-\varepsilon, \varepsilon]$. Then, letting $X$ denote the generator of $\varphi$, $\dot{\varphi}$, by definition of the Lie derivative of a 1-form (\citealt{Lee2002}, Equation 12.8):
    \[
        \mathcal{L}_X(\bar{h} \, \omega^i) = \lim_{t \to 0} \frac{ \varphi_t^*(\bar{h} \, \omega^i) - \bar{h} \, \omega^i}{t}
    \]
    is well-defined using only $t \in [-\varepsilon, \varepsilon]$, and equal to zero. But in utility coordinates:
    \[
    \begin{aligned}
        \mathcal{L}_X(\bar{h} \,\omega^i) & = X \cdot \nabla (\bar{h} \bar{u}_i) \, du^i + \bar{h}\, \bar{u}_i \mathcal{L}_X(du^i) \\
        & = X \cdot \nabla (\bar{h} \bar{u}_i) \, du^i + \bar{h} \, \bar{u}_i \sum_{j=1}^d X^i_j \, du^j.
    \end{aligned}
    \]
    Since $du^1,\ldots, du^d$ form a coframe, $\mathcal{L}_X(\bar{h} \, \omega^i) = 0$ if and only if (i) $X^i_j = 0$ for all $i \neq j$, and (ii) $X^i_i = -X \cdot \nabla \ln (\bar{h} \bar{u}_i)$. Writing these equations in matrix form is precisely \eqref{overdetpde}. Moreover, $X^i = X \cdot \nabla u^i = du^i(X)$, for all $i=1, \ldots, d$, hence satisfaction of (\hyperlink{n2}{N.2}) implies $X^i > 0$.\medskip
    
    Conversely, suppose \eqref{overdetpde} admits a solution $X$ on some neighborhood $U$ of $x_0$, and that $X^i > 0$ for all $ i= 1,\ldots, d$, and hence (\hyperlink{n2}{N.2}) obtains. By restricting, we may assume the restriction each indifference surface of every preference $\succsim^1,\ldots, \succsim^{d+1}$  to $U$ are connected. We have already shown that satisfaction of \eqref{overdetpde} by $X$ is equivalent to $\mathcal{L}_X(\bar{h} \, \omega^i) = 0$. But:
    \[
        0 = \mathcal{L}_X(\bar{h}\, \omega^i) = X \cdot \nabla \bar{h} \,\, \omega^i + \bar{h} \, \mathcal{L}_X(\omega^i)
    \]
    hence, as $\bar{h} > 0$:
    \[
        \mathcal{L}_X(\omega^i) = -X \cdot \nabla \ln(\bar{h}) \; \omega^i,
    \]
    and thus $\mathcal{L}_X(\omega^i)$ can be written as $\lambda \, \omega^i$, for all $i = 1,\ldots, d$. As a consequence, by \autoref{liederivformulation}, the maximal flow associated with $X$ on $U$ is a local virtual numeraire for $\succsim^1, \ldots, \succsim^{d+1}$.
\end{proof}

\subsection{Proof of \autoref{commonvnthm}}

\subsubsection{Preliminaries}

Let us write the first-order system \eqref{overdetpde} as:
\begin{equation}\label{overdetpde2}
    X^i_j = \psi^i_j(X, u) = \begin{cases}
        X \cdot \nabla K^i(u) & \textrm{ if }  i=j\\
        0 & \textrm{ otherwise,}
    \end{cases}   
\end{equation}
where, as before, $X^i_j$ denotes $\partial X^i /\partial u^j$. Define the differential operators:
\[
    \mathcal{D}_i = \frac{\partial}{\partial u^i} + \psi^i_i \frac{\partial }{\partial X^i},
\]
for all $i = 1,\ldots, d$ on $C^\infty(\mathbb{R}^d \times U)$. If $X$ is a solution to \eqref{overdetpde2}, then for any smooth function $f(X(u), u)$, applying $\mathcal{D}_i$ to $f$ corresponds to totally differentiating $f$ with respect to $u^i$, then substituting $\psi^i_j$ for $X^i_j$ via  \eqref{overdetpde2}.\medskip

The key integrability result for establishing the non-emptiness and dimensionality of the set of local solutions to \eqref{overdetpde2} is due to \cite{veblen1926projective}. We give a formal statement of this result, tailored to our system, below.\footnote{The original proof in \cite{veblen1926projective} or \cite{eisenhart1927non} is very terse, and only informally states several needed regularity conditions. We provide a more formal statement here.} Toward this end, for any $i \neq j$ in $1, \ldots, d$, let:
\[
    I^i_j(X,u) = X \cdot \big[\nabla K^i_j(u) + K^i_j(u) \nabla K^j(u) ].
\]
For any family $\mathcal{F} \subset C^\infty(\mathbb{R}^d \times U)$, let $\langle \mathcal{F} \rangle_U$ denote the set of all finite, linear combinations of functions in $\mathcal{F}$, whose coefficients are smooth functions $U \to \mathbb{R}$.  Then, define:
\[
    \mathcal{I}^0 = \big \langle I^i_j(X,u) : i \neq j\big \rangle_U,
\]
and inductively, let: $\mathcal{I}^{L+1} = \big\langle\mathcal{I}^L \cup \{\mathcal{D}_j I : I \in \mathcal{I}^L, j = 1,\ldots, d\} \rangle_U$ for all $L \ge 0$, and let $\mathcal{Z}^L = \big\{(X,u) \in \mathbb{R}^d \times U : I(X,u) = 0 \textrm{ for all } I \in \mathcal{I}^L\big\}$. We note that the sets $\mathcal{Z}^L$ could be equivalently obtained without taking the $C^\infty(U)$-spans, $\langle \, \cdot \, \rangle_U$, as formation of combinations of this type leaves the common zero set unchanged.

\begin{theorem*}[\citealt{veblen1926projective}]
    Let $(X_0, u_0) \in \mathbb{R}^d \times U$ and suppose that, in some neighborhood, of $(X_0, u_0)$, there exist $L^* \ge 0$ and some tuple of functions $\hat{I} = (\hat{I}^1, \ldots, \hat{I}^p) \subseteq \mathcal{I}^{L^*}$ such that:
    \begin{itemize}
        \item[(i)] The matrix:
        \[
            \begin{bmatrix}
                \frac{\partial \hat{I}^i}{\partial X^j}
            \end{bmatrix}_{1\le i \le p,\, 1 \le j \le d}
        \]
        has constant rank $p$ in a neighborhood of $(X_0, u_0)$.
        \item[(ii)] $\mathcal{Z}^{L^*} = \mathcal{Z}^{L^*+1}$, and locally about $(X_0, u_0)$, and $\mathcal{Z}^{L^*}$ is precisely the common zero set of $\hat{I}^1,\ldots, \hat{I}^p$.
        \item[(iii)] $(X_0, u_0) \in \mathcal{Z}^{L^*}$.
    \end{itemize}
    Then there exists a neighborhood $\mathcal{N}_0$ of $u_0$ and a unique solution $X$ to \eqref{overdetpde2} on $\mathcal{N}_0$, with $X(u_0) = X_0$. Moreover, the family of local solutions is indexed by the $\tilde{X}_0$ such that $(\tilde{X}_0, u_0) \in \mathcal{Z}^{L^*}$ and hence depends smoothly on $d-p$ constants.
\end{theorem*}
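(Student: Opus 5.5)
This is a Frobenius-type theorem for an overdetermined first-order system in which every first derivative of the unknown is prescribed, so the plan is the classical prolongation argument. The system \eqref{overdetpde2} specifies all partial derivatives $X^i_j = \psi^i_j(X,u)$. It therefore defines a rank-$d$ distribution $\Delta$ on $\mathbb{R}^d \times U$, spanned by the vector fields $\mathcal{D}_1,\ldots,\mathcal{D}_d$. Solutions of \eqref{overdetpde2} correspond exactly to integral manifolds of $\Delta$ that are graphs over $U$. The first step is to compute the brackets $[\mathcal{D}_i,\mathcal{D}_j]$. Their obstruction to involutivity is, up to sign, exactly the vector with components $\mathcal{D}_j\psi^i_i$ (for $i\neq j$) and $-\mathcal{D}_i\psi^j_j$ in the $\partial/\partial X$ directions. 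A short computation using $\psi^i_i = X\cdot\nabla K^i$ shows these reduce to the functions $I^i_j$. Hence $\Delta$ is involutive on the zero set $\mathcal{Z}^0$, and any solution must have its graph inside $\mathcal{Z}^0$.

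Next comes the prolongation. If $X(u)$ solves the system, then every $I\in\mathcal{I}^L$ vanishes along its graph. Differentiating $I(X(u),u)=0$ in $u^j$ gives $\mathcal{D}_j I = 0$ along the graph, so the graph lies in every $\mathcal{Z}^L$. In particular it lies in $\mathcal{Z}^{L^*}$, which gives necessity and shows that $X_0$ must satisfy $(X_0,u_0)\in\mathcal{Z}^{L^*}$. For sufficiency, use (i) and the implicit function theorem: locally $\mathcal{Z}^{L^*}$ is a smooth submanifold $\Sigma$ of codimension $p$, and it is a graph over $u$ of a $(d-p)$-dimensional family of fibers in $X$.

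The key claim, and the main obstacle, is that each $\mathcal{D}_j$ is tangent to $\Sigma$ and that $\Delta|_\Sigma$ is involutive. Tangency follows from (ii). Since $\mathcal{Z}^{L^*+1}=\mathcal{Z}^{L^*}$, each $\mathcal{D}_j\hat I^k$ vanishes on $\Sigma$. Because $\Sigma$ is cut out by the $\hat I^k$ with differentials of full rank $p$, the vanishing of $\mathcal{D}_j\hat I^k$ on $\Sigma$ is precisely tangency. Some care is needed to argue that vanishing on the set suffices, and the rank condition (i) supplies exactly that. Involutivity follows because the bracket obstruction has coefficients in $\mathcal{I}^0\subseteq\mathcal{I}^{L^*}$, which vanish on $\Sigma$.

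Frobenius then gives, through $(X_0,u_0)$, a unique $d$-dimensional integral manifold of $\Delta|_\Sigma$. Since the $\mathcal{D}_i$ project onto $\partial/\partial u^i$, this manifold is locally a graph $u\mapsto X(u)$ over a neighborhood $\mathcal{N}_0$ of $u_0$, and that graph is the solution. Uniqueness holds because any solution's graph is an integral manifold through the same point. Finally, solutions correspond bijectively to initial points in the fiber $\{\tilde X_0 : (\tilde X_0,u_0)\in\Sigma\}$, a $(d-p)$-dimensional manifold, and smooth dependence of Frobenius leaves on their initial point gives the parameter count $d-p$.
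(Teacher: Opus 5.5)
Your proof is correct. Note that the paper never proves this statement: it quotes it from Veblen and Thomas (and Eisenhart) and only makes the regularity hypotheses explicit.

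The one piece the paper carries out itself is the necessity step, inside the proof of \autoref{commonvnthm}. There, by induction, every element of $\mathcal{I}^L$ vanishes on the graph of a solution, because $\mathcal{D}_j$ acts as total differentiation along it. That is exactly your second paragraph.

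For sufficiency, the classical argument behind the cited result works in coordinates. One uses the $p$ independent constraints to solve for $p$ of the unknowns in terms of the remaining $d-p$ and $u$. One then substitutes into \eqref{overdetpde2}, checks that the reduced system in $d-p$ unknowns is completely integrable, and invokes the classical existence theorem for such systems.

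Your route is the coordinate-free version of the same argument. You restrict the distribution spanned by $\mathcal{D}_1,\ldots,\mathcal{D}_d$ to $\Sigma=\{\hat I=0\}$. Tangency comes from (ii) via $T\Sigma=\ker d\hat I$. Involutivity comes from $[\mathcal{D}_i,\mathcal{D}_j]=I^j_i\,\partial_{X^j}-I^i_j\,\partial_{X^i}$ together with $\mathcal{I}^0\subseteq\mathcal{I}^{L^*}$. Frobenius then finishes, and implicit-function coordinates on $\Sigma$ turn your leaves into the classical reduced system. What your formulation buys is that each hypothesis has a visible job: (i) makes $\Sigma$ a regular level set transverse to the fibers $\mathbb{R}^d\times\{u\}$, (ii) gives tangency, and (iii) puts the base point on $\Sigma$. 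This is precisely the formalization the paper's footnote says the original sources leave informal.

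One small caveat: (i)--(iii) are only assumed near $(X_0,u_0)$. So your argument gives the $(d-p)$-parameter family only for $\tilde X_0$ near $X_0$, with $\mathcal{N}_0$ possibly depending on $\tilde X_0$. In the paper's application this is harmless. There the $\hat I^a(X,u)=X\cdot r^a(u)$ are linear in $X$ with $X$-independent Jacobian, so the hypotheses hold at every point of the fiber $\ker\mathcal{M}_{L^*}(u_0)$.
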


Finally, we supply a book-keeping lemma which relates the matrices $\mathcal{M}$ to the functions in each $\mathcal{I}^L$.

\begin{lemma}\label{integrabilitycondlemma}
    For any tuple $\alpha = (i, j_0, \ldots, j_L)$, $i \neq j_0$, let:
    \[
        R_\alpha(X,u) = X \cdot r_\alpha(u) \quad \textrm{ and } \quad r_\alpha(u) = \nabla Q^i_{j_0\ldots j_L} + Q^i_{j_0\ldots j_L} \sum_{l=0}^L \nabla K^{j_l}.
    \]
    Then for any $0 \le \bar{L}$,
    \begin{equation}\label{spanequiv}
        \mathcal{I}^{\bar{L}} = \big\langle R_\alpha : \alpha = (i, j_0, \ldots, j_L), i \neq j_0, L \le \bar{L} \big\rangle_U,
    \end{equation}
    and hence:
    \[
        \mathcal{Z}^{\bar{L}} = \big\{ (X,u) \in \mathbb{R}^d \times U : \mathcal{M}_{\bar{L}}(u) X = 0 \big\}.
    \]
\end{lemma}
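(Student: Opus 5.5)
The plan is to argue by induction on $\bar{L}$. The workhorse is one computation: how the operators $\mathcal{D}_j$ act on functions that are linear in $X$. Throughout, gradients and subscripts are taken in the utility coordinates $u=(u^1,\ldots,u^d)$, as in \eqref{overdetpde2}.

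First I record the action of $\mathcal{D}_j$ on a linear function. Let $f(X,u)=X\cdot r(u)$ with $r$ smooth on $U$. Then $\partial f/\partial X^j = r^j$, and $\psi^j_j = X\cdot\nabla K^j$. Hence
\[
\mathcal{D}_j f \;=\; X\cdot\Big(\partial_j r + r^j\,\nabla K^j\Big).
\]
Consequently every element of every $\mathcal{I}^L$ is again linear in $X$ with coefficients in $C^\infty(U)$. The $C^\infty(U)$-span is also compatible with $\mathcal{D}_j$: for a smooth function $c(u)$,
\[
\mathcal{D}_j(cI)=(\partial_j c)\,I+c\,\mathcal{D}_jI.
\]
So it suffices to track the generators.

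Base case, $\bar{L}=0$. For $\alpha=(i,j_0)$ we have $Q^i_{j_0}=K^i_{j_0}$. Then
\[
r_\alpha=\nabla K^i_{j_0}+K^i_{j_0}\nabla K^{j_0},
\]
so $R_\alpha=I^i_{j_0}$ exactly. Hence \eqref{spanequiv} holds for $\bar{L}=0$.

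Inductive step. The key identity I will verify is the following. For $\alpha=(i,j_0,\ldots,j_L)$ and any index $j$, write $\alpha'=(i,j_0,\ldots,j_L,j)$. Then
\[
\mathcal{D}_j R_\alpha \;=\; R_{\alpha'} - \Big(\sum_{l=0}^L K^{j_l}_j\Big) R_\alpha .
\]
To check it, abbreviate $Q=Q^i_{j_0\ldots j_L}$ and $\Sigma=\sum_{l\le L}K^{j_l}$.
\begin{itemize}
\item The $j$-th component of $r_\alpha$ is $Q_j+Q\,\Sigma_j$. By the recursion, this is exactly $Q'=Q^i_{j_0\ldots j_L j}$. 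So the term $r_\alpha^j\nabla K^j$ in $\mathcal{D}_jR_\alpha$ equals $Q'\nabla K^j$, which is the new summand appearing in $r_{\alpha'}$.
\item Expanding $\partial_j r_\alpha=\nabla Q_j+Q_j\nabla\Sigma+Q\nabla\Sigma_j$, and comparing with $\nabla Q'+Q'\nabla\Sigma$, the two differ exactly by $\Sigma_j(\nabla Q+Q\nabla\Sigma)=\Sigma_j\, r_\alpha$.
\end{itemize}
This gives the identity.

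With the identity in hand the induction closes in both directions.
\begin{itemize}
\item Every generator $\mathcal{D}_jR_\alpha$ of $\mathcal{I}^{\bar{L}+1}$ lies in the span of the $R_\beta$ with length at most $\bar{L}+1$.
\item Conversely, $R_{\alpha'}=\mathcal{D}_jR_\alpha+\Sigma_jR_\alpha$ lies in $\mathcal{I}^{\bar{L}+1}$.
\end{itemize}
This proves \eqref{spanequiv}.

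Finally, since the rows of $\mathcal{M}_{\bar{L}}(u)$ are precisely the vectors $r_\alpha(u)^\intercal$ with length at most $\bar{L}$, and forming $C^\infty(U)$-combinations does not change the common zero set, we get $\mathcal{Z}^{\bar{L}}=\{(X,u): \mathcal{M}_{\bar{L}}(u)X=0\}$.

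The only real obstacle is the index bookkeeping in the inductive identity. In particular, one must confirm that the extra term is exactly a $C^\infty(U)$ multiple of $R_\alpha$ rather than something new. The expansion above shows this.
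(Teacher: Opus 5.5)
Your proposal is correct and follows essentially the same route as the paper: the base case $R_{(i,j_0)}=I^i_{j_0}$, the key identity $R_{\alpha j}=\mathcal{D}_j R_\alpha+\bigl(\sum_{l=0}^L K^{j_l}_j\bigr)R_\alpha$, and the two-directional induction using the Leibniz rule for $C^\infty(U)$ coefficients. The only difference is presentation: you first compute $\mathcal{D}_j(X\cdot r)=X\cdot(\partial_j r+r^j\nabla K^j)$ and then note $r_\alpha^j=Q^i_{j_0\ldots j_L j}$, which is a tidier way of doing the paper's direct coordinate expansion.
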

\begin{proof}
    Fix $\alpha = (i,j_0, \ldots, j_L)$, $i \neq j_0$. Then, for any $k = 1,\ldots, d$, by direct calculation:
    \[
        \begin{aligned}
            & \mathcal{D}_k \bigg( X \cdot \bigg[\nabla Q^i_{j_0\ldots j_L} + Q^i_{j_0\ldots j_L} \sum_{l=0}^L \nabla K^{j_l} \bigg]\bigg) =   \sum_{a=1}^d X^a \bigg[K^k_a \frac{\partial}{\partial u^k}Q^i_{j_0\ldots j_L} + K^k_a  Q^i_{j_0\ldots j_L}\sum_{l=0}^L  K^{j_l}_k  + \cdots \\
            & \quad \quad \cdots + \frac{\partial^2}{\partial u^a \partial u^k} Q^i_{j_0\ldots j_L} + \bigg(\frac{\partial}{\partial u^k}Q^i_{j_0\ldots j_L}\bigg)\sum_{l=0}^L K^{j_l}_a  +  Q^i_{j_0\ldots j_L} \sum_{l=0}^L K^{j_l}_{ak} \bigg]\\
            & =  \sum_{a=1}^d X^a \bigg[K^k_a Q^i_{j_0\ldots j_L k} + \frac{\partial^2}{\partial u^a \partial u^k} Q^i_{j_0\ldots j_L} +  \cdots \\
            & \quad \quad \cdots + \bigg(Q^i_{j_0 \ldots j_L k} - Q^i_{j_0 \ldots j_L} \sum_{l=0}^L K^{j_l}_k\bigg) \sum_{l=0}^L K^{j_l}_a  +  Q^i_{j_0\ldots j_L} \sum_{l=0}^L K^{j_l}_{ka} \bigg]\\
        \end{aligned}
    \]
    Expanding the cross-derivative of $ Q^i_{j_0\ldots j_L}$ using the definition of $Q^i_{j_0\ldots j_L k}$, and canceling terms, yields:
    \[
        \begin{aligned}
            \cdots &  =  \sum_{a=1}^d X^a \bigg[Q^i_{j_0\ldots j_L k}\bigg(K^{j_0}_a + \cdots + K^{j_L}_a + K^k_a \bigg) + \frac{\partial}{\partial u^a}Q^i_{j_0 \ldots j_L k} +  \cdots \\
            & \quad \quad \cdots - \bigg(\frac{\partial}{\partial u^a} Q^i_{j_0 \ldots j_L}+Q^i_{j_0 \ldots j_L} \bigg(\sum_{l=0}^L K^{j_l}_a\bigg)\bigg)\bigg( \sum_{l=0}^L K^{j_l}_k\bigg)\bigg]\\
            & = X \cdot \bigg[ \nabla Q^i_{j_0\ldots j_L k} +  Q^i_{j_0\ldots j_L k} \big(\nabla K^{j_0} + \cdots + \nabla K^{j_L} + \nabla K^k \big) -  \cdots \\
            & \quad \quad \cdots - \bigg( \sum_{l=0}^L K^{j_l}_k\bigg) \bigg(\nabla Q^i_{j_0 \ldots j_L} + Q^i_{j_0 \ldots j_L} \sum_{l=0}^L \nabla K^{j_l}\bigg) \bigg].
        \end{aligned}  
    \]
    The dot product of $X$ with first summand in the square brackets is simply $R_{\alpha k}$, where $\alpha k = (i, j_0, \ldots, j_L, k)$. Similarly, the dot product of $X$ with the second summand is $\big( \sum_{l=0}^L K^{j_l}_k\big)$ times $R_\alpha$. Thus:
    \[
        R_{\alpha k}(X,u) = \mathcal{D}_k R_{\alpha} + \bigg(\sum_{l=0}^L K^{j_l}_k\bigg) R_{\alpha}.
    \]
    For $\bar{L} = 0$, the $R_{i j_0}$ are precisely the $I^i_{j_0}$ hence \eqref{spanequiv} holds trivially. Suppose then that \eqref{spanequiv} holds for some $\bar{L} \ge 0$. Then the above calculation (and Leibniz's rule) shows that $\mathcal{D}_k$ of any element of the right-hand side of \eqref{spanequiv} belongs to the span of the functions $R_{\alpha'}$ where $\alpha'$ is of length $\le \bar{L}+1$. Conversely, the above calculation shows that each new generator $R_{\alpha k}$ belongs to the span of $\mathcal{I}^L$ and its derivatives under the operators $\mathcal{D}_k$. Thus \eqref{spanequiv} holds for indices of length $\bar{L} + 1$, and the claim follows by induction. Finally, note the rows of $\mathcal{M}_{\bar{L}}(u)$ are precisely the transposes of the coefficients $r_\alpha$, hence the final claim is immediate. 
\end{proof}

\subsubsection{Proof of \autoref{commonvnthm}}

\begin{proof}
    $(\Longleftarrow)$: Suppose that $\textrm{ker} \mathcal{M}_{L^*}(u_0) \cap \mathbb{R}^d_{++} \neq \varnothing$, and suppose $X_0$ is any arbitrary element of this intersection. Let $p = \textrm{rank} \, \mathcal{M}_{L^*}(u_0)$. Since $X_0 \neq 0$ belongs to this kernel, $p < d$. Now, by definition of $L^*$,
    \[
        \textrm{rank}\, \mathcal{M}_{L^*}(u_0) = \textrm{rank}\, \mathcal{M}_{L^* + 1}(u_0),
    \]
    and by (\hyperlink{cr3}{CR.3}), after restricting $U$ to some sufficiently small neighborhood of $u_0$, both matrices have constant rank $p$. We will henceforth suppose $U$ has been replaced with any such restriction. As the rows of $\mathcal{M}_{L^*}$ are included in those of $\mathcal{M}_{L^*+1}$, we have for every $u \in U$:
    \[
        \textrm{ker}\,\mathcal{M}_{L^*}(u) = \textrm{ker}\, \mathcal{M}_{L^*+1}(u).
    \]
    Choose $p$ rows, $r^1(u)^\intercal, \ldots, r^p(u)^\intercal$ of $\mathcal{M}_{L^*}(u_0)$ that are linearly independent at $u_0$ (if $p = 0$ we instead choose the empty tuple). By possibly restricting $U$ further, we may suppose that $r^1(u)^\intercal, \ldots, r^p(u)^\intercal$ are linearly independent at every $u \in U$; we suppose this is true. Define:
    \[
        \hat{I}^a(X,u) = X \cdot r^a(u), \quad a = 1,\ldots, p.
    \]
    By \autoref{integrabilitycondlemma}, $\hat{I}^a \in \mathcal{I}^{L^*}$ for every $a = 1,\ldots, p$. Moreover, the Jacobian with respect to the $X$ variables of the vector valued function $(\hat{I}^1, \ldots, \hat{I}^p)$ is simply the matrix whose rows are $r^1(u)^\intercal, \ldots, r^p(u)^\intercal$, and hence has constant rank $p$.  Moreover, because $\mathcal{M}_{L^*}$ also has rank $p$ everywhere on $U$, these rows span the row-space of $\mathcal{M}_{L^*}(u)$ for all $u \in U$. Thus, again by \autoref{integrabilitycondlemma},
    \[
        \mathcal{Z}^{L^*} = \big\{(X,u) \in \mathbb{R}^d \times U : \mathcal{M}_{L^*}(u) X = 0\big\},
    \]
    and $\mathcal{Z}^{L^*} = \mathcal{Z}^{L^*+1}$. Thus by the theorem of \cite{veblen1926projective}, every initial condition $\tilde{X}_0 \in \textrm{ker} \mathcal{M}_{L^*} \cap \mathbb{R}^d_{++}$ determines a unique local solution about $u_0$. By \autoref{pdelemma}, this solution then corresponds to a local, common virtual numeraire for the subprofile $(\succsim^1,\ldots, \succsim^{d+1})$.\medskip

    $(\Longrightarrow):$ Suppose instead that some common, local virtual numeraire $\varphi$ exists.  Then by \autoref{pdelemma}, $X = \dot{\varphi}$ is a positive, local solution to \eqref{overdetpde2} on some neighborhood of $u_0$.\medskip

    For any $i \neq j$, we have:
    \[
        \begin{aligned}
            I^i_j\big(X(u), u\big) & = \big(\mathcal{D}_j \psi^i_i\big)\big(X(u), u\big)\\
            & = \frac{\partial}{\partial u^j} \bigg(\frac{\partial X^i}{\partial u^i} \bigg)\\
            & = \frac{\partial}{\partial u^i} \bigg(\frac{\partial X^i}{\partial u^j} \bigg) = 0.
        \end{aligned}
    \]
    Hence every element of $\mathcal{I}^0$ vanishes along the graph of $X$. Now, suppose that for every $0 \le L \le \bar{L}$, every element of $\mathcal{I}^L$ vanishes on the graph of $X$. For any $I \in \mathcal{I}^L$ and $j = 1,\ldots, d$, 
    \[
        \big(\mathcal{D}_j I\big)\big(X(u), u\big) = \frac{\partial}{\partial u^j} I\big(X(u), u\big),
    \]
    as $\mathcal{D}_j$ corresponds to simply total differentiation of $I$ along the graph of $X$, and $I$ is constant on this graph. Because $\mathcal{I}^{\bar{L}+1}$ is the $U$-span of $\mathcal{I}^L$ and these derivatives, every element of $\mathcal{I}^{\bar{L}+1}$ must also vanish on the graph of $X$. By induction, we obtain that $\big(X(u), u\big) \in \mathcal{Z}^{L}$ for all $L \ge 0$. Now, by \autoref{integrabilitycondlemma}, for all $L \ge 0$ we also have $\mathcal{M}_L(u) X(u) = 0$. In particular, $X(u_0) \in \textrm{ker} \mathcal{M}_L(u_0) \cap \mathbb{R}^d_{++}$ as desired.
\end{proof}

\end{appendix}
\putbib
\end{bibunit}

\end{document}